\newtheorem{thm}{Theorem}
\newtheorem{cor}[thm]{Corollary}
\newtheorem{lem}[thm]{Lemma}
\newtheorem{defn}[thm]{Definition}
\newtheorem{ex}[thm]{Example}
\newcommand{\bX}{\mathbf{X}}
\newcommand{\bY}{\mathbf{Y}}
\newcommand{\bZ}{\mathbf{Z}}
\newcommand{\bW}{\mathbf{W}}
\newcommand{\bx}{\mathbf{x}}
\newcommand{\by}{\mathbf{y}}
\newcommand{\bz}{\mathbf{z}}
\newcommand{\cA}{\mathcal{A}}
\newcommand{\cC}{\mathcal{C}}
\newcommand{\cX}{\mathcal{X}}
\newcommand{\cY}{\mathcal{Y}}
\newcommand{\cZ}{\mathcal{Z}}
\newcommand{\cF}{\mathcal{F}}
\newcommand{\cP}{\mathcal{P}}
\newcommand{\cW}{\mathcal{W}}
\newcommand{\cFen}{\mathcal{F}_{\epsilon}^{n}}
\newcommand{\ph}{\hat{p}}
\newcommand{\Gh}{\hat{G}}
\newcommand{\Eh}{\hat{E}}
\newcommand{\Vh}{\hat{V}}
\newcommand{\fh}{\hat{f}}
\newcommand{\limtoi}{\lim_{n\to\infty}}
\newcommand{\Teen}{T_{\frac{\epsilon}{2}}^n}
\newcommand{\Ten}{T_{\epsilon}^n}
\newcommand{\cS}{\mathcal{S}}
\newcommand{\cxo}{c_{G_{X_1}}}
\newcommand{\cxomin}{c_{G_{X_1}}^{min}}
\newcommand{\cxt}{c_{G_{X_2}}}
\newcommand{\cxk}{c_{G_{X_k}}}
\newcommand{\Cxo}{C_{G_{X_1}}}
\newcommand{\cxon}{c_{G_{\bX_1}^{n}}}
\newcommand{\cxtn}{c_{G_{\bX_2}^{n}}}
\newcommand{\cxkn}{c_{G_{\bX_k}^{n}}}
\newcommand{\cxsn}{c_{G_{\bX_S}^{n}}}
\newcommand{\cxscn}{c_{G_{\bX_{S^c}}^{n}}}
\newcommand{\Cxon}{C_{G_{\bX_1}^{n}}}
\newcommand{\Cxtn}{C_{G_{\bX_2}^{n}}}
\newcommand{\cjmin}{c_{G_{\bX_1}^{n},G_{\bX_2}^{n}}^{min}}
\newcommand{\cjsmin}{c_{G_{\bX_1}^{n},G_{\bX_2}^{n}}'}
\newcommand{\gxo}{G_{X_1}}
\newcommand{\gxt}{G_{X_2}}
\newcommand{\gxk}{G_{X_k}}
\newcommand{\sgjt}{G_{X_1},G_{X_2}}
\newcommand{\sgjk}{G_{X_1},...,G_{X_k}}
\newcommand{\sgji}{G_{X_1},...,G_{X_i}}
\newcommand{\sgs}{\bigcup_{i\in S} G_{X_i}}
\newcommand{\exo}{E_{X_1}}
\newcommand{\exon}{E_{X_1}^{n}}
\newcommand{\vxon}{V_{X_1}^{n}}
\newcommand{\gxon}{G_{\bX_1}^{n}}
\newcommand{\gxtn}{G_{\bX_2}^{n}}
\newcommand{\gxkn}{G_{\bX_k}^{n}}
\newcommand{\pij}{\Xi_{ij}}
\newcommand{\pis}{\Xi_{is_{i}}}
\newcommand{\pisc}{\Xi_{is_{i}^{c}}}
\newcommand{\gm}{G_{X,f_{1},f_{2}}}
\newcommand{\gf}{G_{X,f_{1}}}
\newcommand{\gff}{G_{X,f_{2}}}
\newcommand{\gmm}{G_{X,f_{1},...,f_{m}}}
\newcommand{\cf}{c_{G_{X,f_{1}}^{n}}}
\newcommand{\cm}{c_{G_{X,f_{1},f_{2}}^{n}}}
\def\all{all}
\all \typeout{Including all files.} \else \typeout{Including only \files.} \includeonly{\files} \fi
\begin{document}

\include{cover}
\pagestyle{plain}
\include{contents}

\section{Introduction}\label{chap:intro}

In this paper, we consider different aspects of the functional compression problem over networks. In the functional compression problem, we would like to compress source random variables for the purpose of computing a deterministic function (or some deterministic functions) at the receiver(s) when these sources and receivers are nodes in a network. Traditional data compression schemes are special cases of functional compression, where their desired function is the identity function. However, if the receiver is interested in computing a function (or some functions) of sources, further compressing is possible. In the rest of this section, we review some prior relevant research and illustrate some research challenges of this problem through some motivating examples which will be discussed in the following sections.

\subsection{Prior Work in Functional Compression}

We categorize prior work into the study of lossless functional compression and that of functional compression with distortion.  

\subsubsection{Lossless Functional Compression}

By lossless computation, we mean asymptotically lossless computation of a function: the error probability goes to zero as the block length goes to the infinity. 

  \begin{figure}[t]
	\centering
    \includegraphics[width=10.5cm,height=8cm]{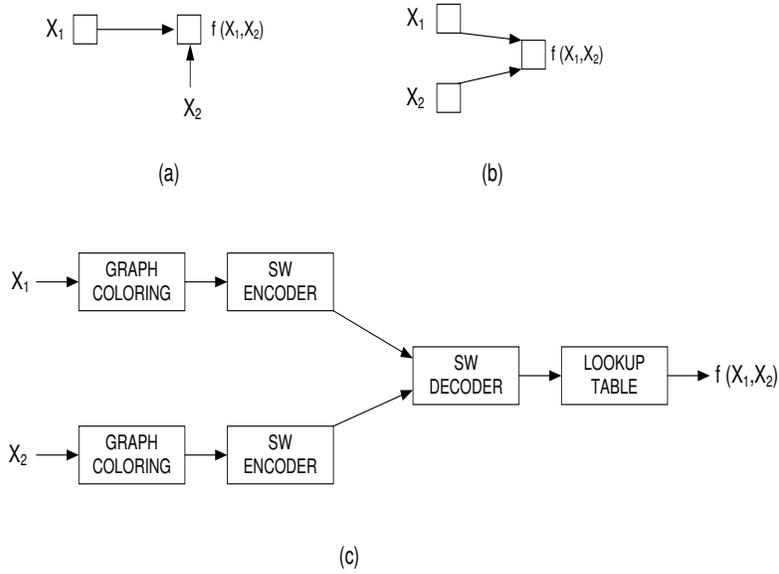}
    \caption{a) Functional compression with side information b) A distributed functional compression problem with two transmitters and a receiver c) An achievable encoding/decoding scheme for the functional compression.}
    \label{fig:3part}
  \end{figure}

First, consider the network topology depicted in Figure \ref{fig:3part}-a which has two sources and a receiver. One of the sources is available at the receiver as the side information. Shannon was the first one who considered this problem in \cite{s56} for a special case when $f(X_1,X_2)=(X_1,X_2)$ (the identity function). For a general function, Orlitsky and Roche provided a single-letter characterization in \cite{or01}. In \cite{doshi-it}, Doshi et al. proposed an optimal coding scheme for this problem.

Now, consider the network topology depicted in Figure \ref{fig:3part}-b which has two sources and a receiver. This problem is a distributed compression problem. For the case that the desired function at the receiver is the identity function (i.e., $f(X_1,X_2)=(X_1,X_2)$), Slepian and Wolf provided a characterization of the rate region and an optimal achievable coding scheme in \cite{sw73}. Some other practical but suboptimal coding schemes have been proposed by Pradhan and Ramchandran in \cite{pradhan}. Also, a rate-splitting technique for this problem is developed by Coleman \textit{et al.} in \cite{coleman}. Special cases when $f(X_1,X_2)=X_1$ and $f(X_1,X_2)=(X_1+X_2)\mod 2$ have been investigated by Ahlswede and K\"orner in \cite{korner-ahl}, and K\"orner and Marton in \cite{korner-marton}, respectively. Under some special conditions on source distributions, Doshi \textit{et al.} in \cite{doshi-it} investigated this problem for a general function and proposed some achievable coding schemes.   

  \begin{figure}[t]
	\centering
    \includegraphics[width=5.5cm,height=5.5cm]{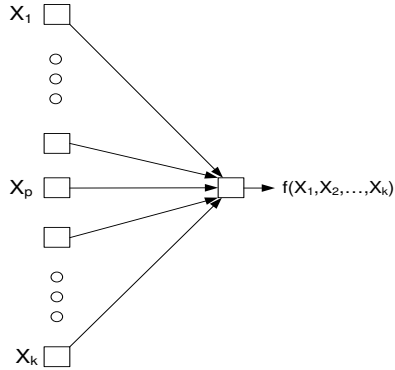}
    \caption{ A general one-stage tree network with a desired function at the receiver.}
    \label{fig:one-stage}
  \end{figure}

Sections \ref{chap:tree}, \ref{chap:multi} and \ref{chap:feedback} of this paper consider different aspects of this problem (asymptotically lossless functional compression). In particular, we are going to answer to the following questions in these sections:

\begin{itemize}
\item For a depth one tree network with one desired function at the receiver (as shown in Figure \ref{fig:one-stage}), what is a necessary and sufficient condition for any coding scheme to guarantee that the network is solvable (i.e., the receiver is able to compute its desired function)?  
\item What is a rate region of the functional compression problem for a depth one tree network (a rate region is a set of rates for different links of the network under which the network is solvable)? How can a modularized coloring-based coding scheme perform arbitrarily closely to rate bounds?
\item For a general tree network with one desired function at the receiver (as shown in Figure \ref{fig:general-tree}), when do intermediate nodes need to perform computation and what is an optimal computation to be performed? What is a rate-region for this network?  
\item How do results extend to the case of having several desired functions with the side information at the receiver? 
\item What happens if we have feedback in our system?
\end{itemize}

  \begin{figure}[t]
	\centering
    \includegraphics[width=10.5cm,height=7cm]{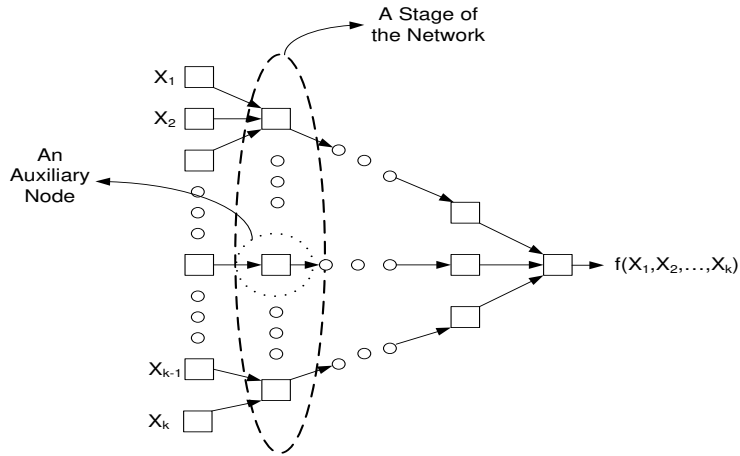}
    \caption{ An arbitrary tree network topology.}
    \label{fig:general-tree}
  \end{figure}

\subsubsection{Functional Compression with Distortion}

In this section, we review prior results in functional compression for the case of being allowed to compute the desired function at the receiver within a distortion level. 

First, consider the network topology depicted in Figure \ref{fig:3part}-a, with the side information at the receiver. Wyner and Ziv \cite{wz76} considered this problem for computing the identity function at the receiver with distortion $D$. Yamamoto solved this problem for a general function $f(X_1,X_2)$ in \cite{yam}. Doshi et al. gave another characterization of the rate distortion function given by Yamamoto in \cite{doshi-it}. Feng et al. \cite{fes04} considered the side information problem for a general function at the receiver in the case the encoder and decoder have some noisy information.  

For the network topology depicted in \ref{fig:3part}-b and for a general function, the rate-distortion region has been unknown, but some bounds have been given by Berger and Yeung \cite{by89}, Barros and Servetto \cite{bs03}, and Wagner et al. \cite{wtv06}, where considered a specific quadratic distortion function. 

In Section \ref{chap:distortion} of this paper, we answer to the following questions:

\begin{itemize}
\item What is a multi-letter characterization of a rate-distortion function for a distributed network depicted in Figure \ref{fig:3part}-b? 
\item For this problem, is there a simple suboptimal coding scheme based on graph colorings with a non-trivial performance guarantee?
\end{itemize}

\textit{Remarks:} 
\begin{itemize}
\item Our coding schemes in Sections [\ref{chap:intro}-\ref{chap:distortion}] are based on finding a minimum entropy coloring of a characteristic graph of the function we seek to compute. In general, reference \cite{np1} showed that, finding this coloring is an NP-hard problem. However, in Section \ref{chap:min-coloring}, we consider whether there are some functions and/or source structures which lead to easy and practical coding schemes.

\item Note that, our work is different in techniques and the problem setup from multi-round function computation (e.g., \cite{ish1} and \cite{ish2}). Also, some references consider the functional computation problem for specific functions. For example, \cite{kumar} investigated computation of symmetric Boolean functions in tree networks and \cite {sagar} and \cite{ramm} studied the sum-network with three sources and three terminals. Note that, in our problem setup, the desired function at the receiver is an arbitrary function. Also, we are interested in asymptotically lossless or lossy computation of this function.  
\end{itemize}

In the rest of this section, we explain some research challenges of the functional compression problem by some examples. In the next sections, we explain these issues with more detail.

\subsection{Problem Outline}\label{sec:challenge}

In this section, we address some problem outlines of functional compression. We use different simple examples to illustrate these issues which will be explained later in this paper.

Let us proceed by an example:

\begin{ex}\label{ex:mod2}
Consider the network shown in Figure \ref{fig:3part}-b, which has two source nodes and a receiver. Suppose source nodes have two independent source random variables (RVs) $X_1$ and $X_2$ such that $X_1$ takes values from the set $\cX_1=\{x_1^1,x_1^2,x_1^3,x_1^4\}=\{0,1,2,3\}$, and $X_2$ takes values from the set $\cX_2=\{x_2^1,x_2^2\}=\{0,1\}$, both with equal probability. Values of $x_i^j$ for different $i$ and $j$ are shown in Figure \ref{fig:char-graph-sq}. Suppose the receiver desires to compute a function $f(X_1,X_2)=(X_1+X_2)\mod 2$. 

If $X_1=0$ or $X_1=2$, for all possible values of $X_2$, we have $f(X_1,X_2)=X_2$. Hence, we do not need to \textit{distinguish} between $X_1=0$ and $X_1=2$. A similar argument holds for $X_1=1$ and $X_1=3$. However, cases $X_1=0$ and $X_1=1$ should be distinguished, because for $X_2=0$, the function value is different when $X_1=0$ than the one when $X_1=1$ (i.e., $f(0,0)=0\neq f(1,0)=1$). 
\end{ex}

  \begin{figure}[t]
	\centering
    \includegraphics[width=8.5cm,height=5cm]{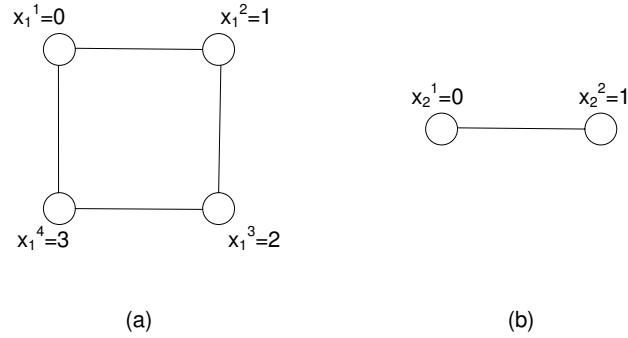}
    \caption{ Characteristic graphs described in Example \ref{ex:mod2}: a) $G_{X_1}$, b) $G_{X_2}$.}
    \label{fig:char-graph-sq}
  \end{figure}

We notice that for each source random variable, depending on the function at the receiver and values of the other source random variable, we should \textit{distinguish} some possible pair values. In other words, values of source random variables which \textit{potentially} can cause confusion at the receiver should be assigned to different \textit{codes}. To determine which pair values of a random variable should be assigned to different codes, we make a graph for each random variable, called the \textit{characteristic graph} or the \textit{confusion graph} of that random variable (\cite{s56}, \cite{k73}). Vertices of this graph are different possible values of that random variable. We connect two vertices if they should be distinguished. For the problem described in Example \ref{ex:mod2}, the characteristic graph of $X_1$ (called $G_{X_1}$) is depicted in Figure \ref{fig:char-graph-sq}-a. Note that we have not connected vertices which lead to the same function value for all values of $X_2$. The characteristic graph of $X_2$ ($G_{X_2}$) is shown in Figure \ref{fig:char-graph-sq}-b.

Now, we seek to assign different codes to connected vertices, which corresponds a \textit{graph coloring} where we assign different colors (codes) to connected vertices. Vertices that are not connected to each other can be assigned to the same or different colors (codes). Figure \ref{fig:look-up-ex}-(a,b) shows valid colorings for $G_{X_1}$ and $\gxt$.

Now, we propose a possible coding scheme for this example. First, we choose valid colorings for $\gxo$ and $\gxt$. Instead of sending source random variables, we send these coloring random variables. At the receiver side, we use a look-up table to compute the desired function value by using the received colorings. Figure \ref{fig:look-up-ex} demonstrates this coding scheme.

  \begin{figure}[t]
	\centering
    \includegraphics[width=11cm,height=11cm]{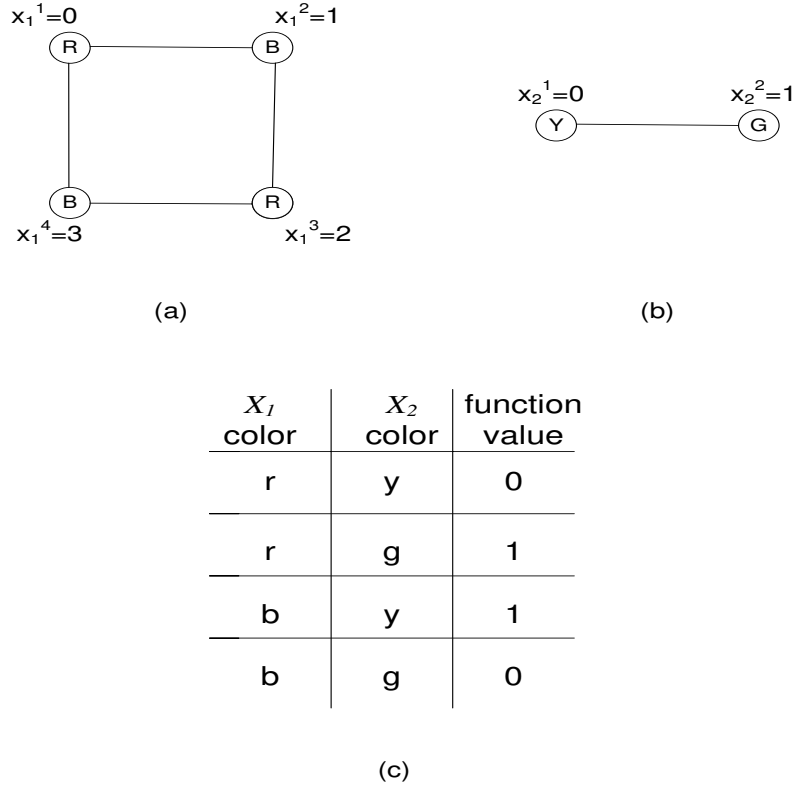}
    \caption{ a) $\gxo$ b) $\gxt$, and c) a decoding look-up table for Example \ref{ex:mod2}. (Different letters written over graph vertices indicate different colors.)}
    \label{fig:look-up-ex}
  \end{figure}

However, this coloring-based coding scheme is not necessarily an achievable scheme. In other words, if we send coloring random variables instead of source random variables, the receiver may not be able to compute its desired function. Hence, we need some conditions to guarantee the achievability of coloring-based coding schemes. We explain this required condition by an example.

\begin{ex}\label{ex:ccc1}

Consider the same network topology as explained in Example \ref{ex:mod2} shown in Figure \ref{fig:3part}-b. Suppose $\cX_1=\{0,1\}$ and $\cX_2=\{0,1\}$. The function values are depicted in Figure \ref{fig:ccc-ex}-a. In particular, $f(0,0)=0$ and $f(1,1)=1$. Dark squares in this figure represent points with zero probability. Figure \ref{fig:ccc-ex}-b demonstrates characteristic graphs of these source random variables. Each has two vertices, not connected to each other. Hence, we can assign them to a same color. Figure \ref{fig:ccc-ex}-b shows these valid colorings for $\gxo$ and $\gxt$. However, one may notice that if we send these coloring random variables instead of source random variables, the receiver would not be able to compute its desired function.
\end{ex}

  \begin{figure}[t]
	\centering
    \includegraphics[width=10.5cm,height=7cm]{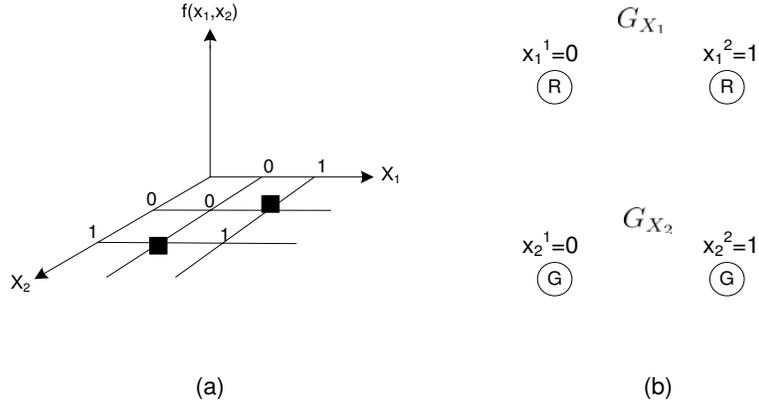}
    \caption{An example for colorings not satisfying C.C.C. (Different letters written over graph vertices indicate different colors.)  }
    \label{fig:ccc-ex}
  \end{figure}

Example \ref{ex:ccc1} demonstrates a case where a coloring-based coding scheme fails to be an achievable scheme. Thus, we need a condition to avoid these situations. We investigate this necessary and sufficient condition in Section \ref{chap:tree}. We call this condition the \textit{coloring connectivity condition} or \textit{C.C.C}. The situation of Example \ref{ex:ccc1} happens when we have a disconnected coloring class (a coloring class is a set of source pairs with the same color for each coordinates). C.C.C. is a condition to avoid this situation. 

Hence, an achievable coding scheme can be expressed as follows. Sources send, instead of source random variables, colorings of their random variables which satisfy C.C.C. Then, they perform source coding on these coloring random variables. Each receiver, by using these colors and a look-up table, can compute its desired function.

However, we may need to consider coloring schemes of conflict graphs of vector extensions of the desired function. In the following, we explain this approach by an example:

\begin{ex}\label{ex:5verex-graph}
Consider the network shown in Figure \ref{fig:3part}-b. Suppose $X_1$ is uniformly distributed over $\cX_1=\{0,1,2,3,4\}$. Consider $X_2$ and $f(X_1,X_2)$ such that we have a graph depicted in Figure \ref{fig:graph-five-vertices} for $\gxo$. Figure \ref{fig:graph-five-vertices} also demonstrates a valid coloring for this graph. Let us call this coloring random variable $\cxo$. Hence, we have $H(\cxo)\approx 1.52$. Now, instead of $X_1$, suppose we encode $X_1\times X_1$ ($X_1^2$), a random variable with $25$ possibilities ($\{00,01,...,44\}$). To make its characteristic graph, we connect two vertices when at least one of their coordinates are connected in $\gxo$. Figure \ref{fig:power-char-graph-ex} illustrates the characteristic graph of $X_1^2$ (referred by $G_{X_1}^2$ and called the second power of the graph $\gxo$). A valid coloring of this graph, called $c_{G_{X_1}^2}$ is shown in this figure. One may notice that we use eight colors to color this graph. We have,
\begin{equation}
\frac{1}{2} H(c_{G_{{X}_1}^2})\approx 1.48 < H(c_{G_{{X}_1}})\approx 1.52.
\end{equation}
\end{ex}

  \begin{figure}[t]
	\centering
    \includegraphics[width=9.5cm,height=7cm]{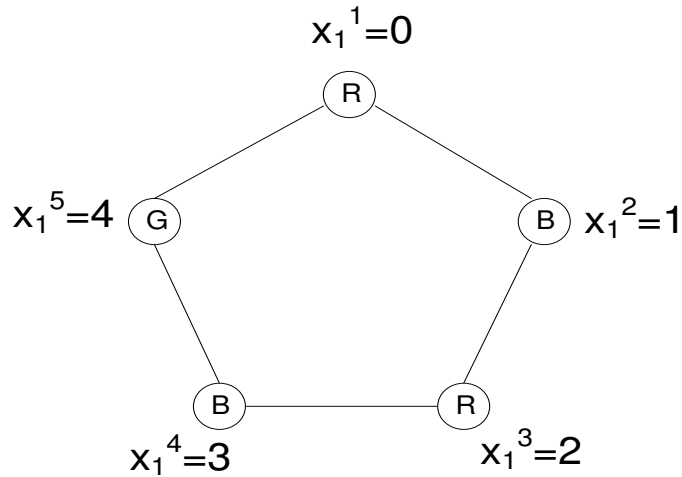}
    \caption{ $\gxo$ described in Example \ref{ex:5verex-graph}. (Different letters written over graph vertices indicate different colors.) }
    \label{fig:graph-five-vertices}
  \end{figure}

Example \ref{ex:5verex-graph} demonstrates this fact that if we assign colors to a sufficiently large power graph of $\gxo$, we can compress source random variables more. In Section \ref{chap:tree}, we show that sending colorings of sufficiently large power graphs of characteristic graphs which satisfy C.C.C. followed by a source coding (such as Slepian-Wolf compression) leads to an achievable coding scheme. On the other hand, any achievable coding scheme for this problem can be viewed as a coloring-based coding scheme satisfying C.C.C. In Section \ref{chap:tree}, we shall explain these concepts in more detail.

  \begin{figure}[t]
	\centering
    \includegraphics[width=8.5cm,height=8.5cm]{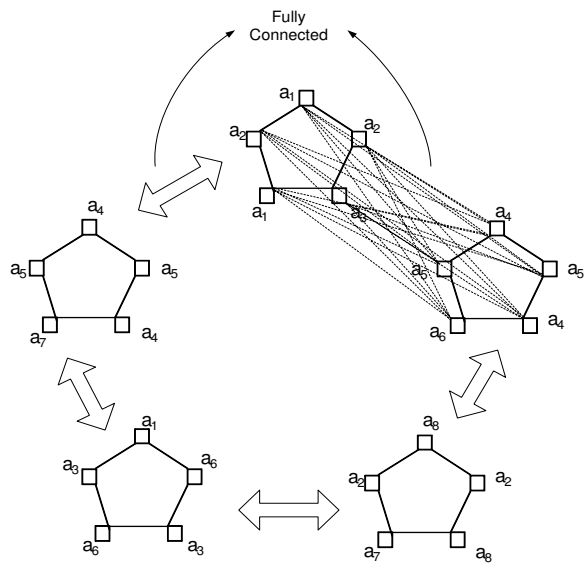}
    \caption{ $G_{{X}_1}^2$, the second power graph of $\gxo$, described in Example \ref{ex:5verex-graph}. Letters $a_1$,...,$a_8$ written over graph vertices indicate different colors. Two subsets of vertices are fully connected if each vertex of one set is connected to every vertex in the other set. }
    \label{fig:power-char-graph-ex}
  \end{figure}

Now, by another example, we explain some issues of the functional compression problem over tree networks.

\begin{ex}\label{ex:tree}
Consider the network topology depicted in Figure \ref{fig:appl-comp}. This is a tree network with four sources, two intermediate nodes and a receiver. Suppose source random variables are independent, with equal probability to be zero or one. In other words, $\mathcal{X}_i=\{0,1\}$ for $i=1,2,3,4$. Suppose the receiver wants to compute a parity check function $f(X_1,X_2,X_3,X_4)=(X_1+X_2+X_3+X_4)\mod 2$. Intermediate nodes are allowed to perform computation.
\end{ex}

In Example \ref{ex:tree}, first notice that characteristic graphs of source random variables are complete graphs. Hence, coloring random variables of sources are equal to source random variables. If intermediate nodes act like relays (i.e., no computations are performed at intermediate nodes), the following set of rates is an achievable scheme:
\begin{eqnarray}
R_{2j}\geq 1 \mbox{ for }1\leq j\leq 4\nonumber\\
R_{1j}\geq 2 \mbox{ for }1\leq j\leq 2
\end{eqnarray}
where $R_{ij}$ are rates of different links depicted in Figure \ref{fig:appl-comp}.

  \begin{figure}[t]
	\centering
    \includegraphics[width=8.5cm,height=6cm]{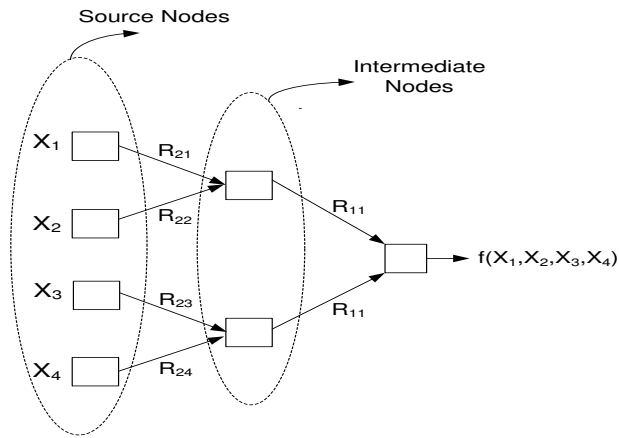}
    \caption{An example of a two stage tree network.}
    \label{fig:appl-comp}
  \end{figure}

However, suppose intermediate nodes perform some computations. Assume source nodes send their coloring random variables satisfying C.C.C. (in this case, they are equal to source random variables because characteristic graphs are complete). Then, each intermediate node makes its own characteristic graph and by using the received colors, picks a corresponding color for its own characteristic graph and sends that color. The receiver, by using the received colors of intermediate nodes' characteristic graphs and a look-up table, can compute its desired function. Figure \ref{fig:look-up-tree} demonstrates this encoding/decoding scheme. For this example, intermediate nodes need to transmit one bit. Therefore, the following set of rates is achievable: 

\begin{equation}
R_{ij}\geq 1.
\end{equation}

for different possible $i$ and $j$. Note that, in Example \ref{ex:tree}, by allowing intermediate nodes to compute, we can reduce transmission rates of some links. This problem is investigated in Section \ref{chap:tree} for a tree network where optimal computation to be performed at intermediate nodes is derived. We also show that for a family of functions and source random variables, intermediate nodes \textit{do not need to perform computation} and acting like relays is an optimal operation for them.  

  \begin{figure}[t]
	\centering
    \includegraphics[width=12cm,height=12cm]{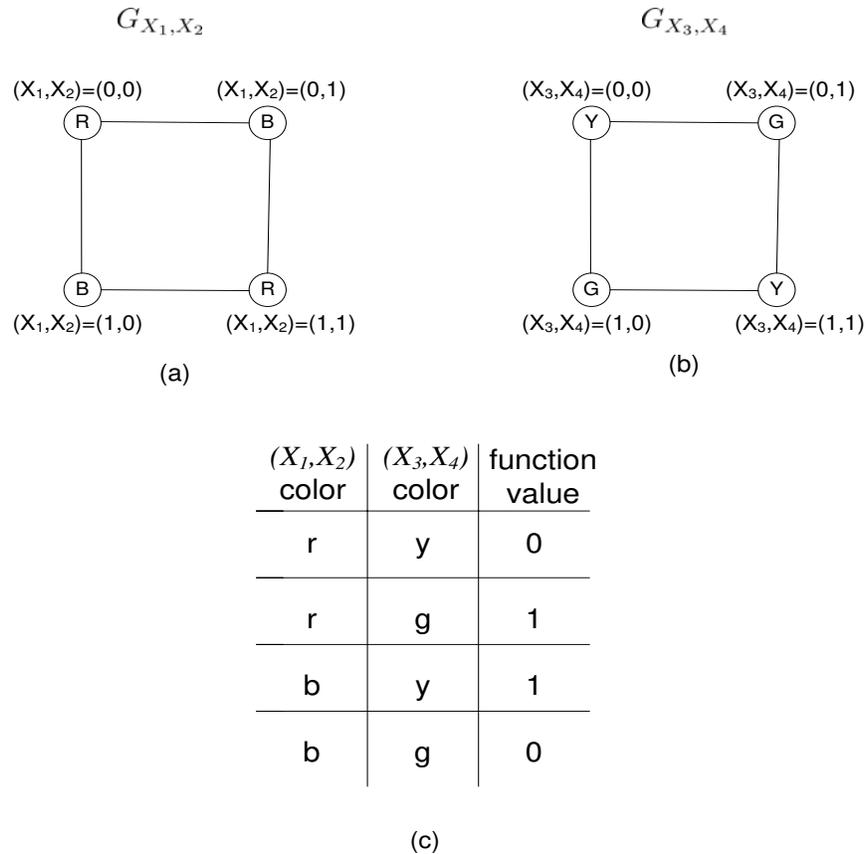}
    \caption{Characteristic graphs and a decoding look-up table for Example \ref{ex:tree}. }
    \label{fig:look-up-tree}
  \end{figure}

The problem of having different desired functions at the receiver with the side information is considered in Section \ref{chap:multi}. For this problem, instead of a characteristic graph, we compute a new graph, called a\textit{ multi-functional characteristic graph}. This graph is basically an OR function of individual characteristic graphs with respect to different functions. In this section, we find a rate region and propose an achievable coding scheme for this problem.

The effect of feedback on the rate-region of functional compression problem is investigated in Section \ref{chap:feedback}. 
If the function at the receiver is the identity function, this problem is the Slepian-Wolf compression with feedback. For this case, having feedback does not give us any gain in terms of the rate. For example, reference \cite{mayak} considers both zero-error and asymptotically zero-error Slepian-Wolf compression with feedback. However, it is not the case when we have a general function $f$ at the receiver. By having feedback, one may outperform rate bounds of the case without feedback.

We consider the problem of distributed functional compression with distortion in Section \ref{chap:distortion}. The objective is to compress correlated discrete sources so that an arbitrary deterministic function of those sources can be computed at the receiver within a distortion level. For this
case, we compute a rate-distortion region and propose an
achievable coding scheme. 

In our proposed coding schemes for different functional compression problems, one needs to compute the minimum entropy coloring (a coloring random variable which minimizes entropy) of a characteristic graph. In general, finding this coloring is an NP-hard problem (\cite{np1}). However, in Section \ref{chap:min-coloring}, we show that, depending on the characteristic graph's structure, there are some interesting cases where finding a minimum entropy coloring is not NP-hard, but tractable and practical. Conclusions and future work are presented in Section \ref{chap:conc}.

\section{Functional Compression Over Tree Networks}\label{chap:tree}

In this section, we consider the problem of functional compression for an arbitrary tree network. Suppose we have $k$ possibly correlated source processes in a tree network, and a receiver at its root wishes to compute a deterministic function of these processes. Other nodes of this tree (called intermediate nodes) are allowed to perform computation to satisfy the node's demand. For this problem, we find a rate region (i.e., feasible rates for different links) of this network when sources are independent and a rate lower bound when sources are correlated. 

The rate region of functional compression problem has been an open problem. However, it has been solved for some simple networks under some special conditions. For instance, \cite{doshi-it} considered a rate region of a network with two transmitters and a receiver under a condition on source random variables. Here, we derive a rate lower bound for an arbitrary tree network based on the graph entropy. We introduce a new condition on colorings of source random variables' characteristic graphs called the coloring connectivity condition (C.C.C.). We show that unlike the condition used in \cite{doshi-it}, this condition is necessary and sufficient for any achievable coding scheme. We also show that, unlike entropy, graph entropy does not satisfy the chain rule. For one stage trees with correlated sources, and general trees with independent sources, we propose a modularized coding scheme based on graph colorings to perform arbitrarily closely to this rate lower bound. We show that in a general tree network case with independent sources, to achieve the rate lower bound, intermediate nodes should perform computation. However, for a family of functions and random variables, which we call chain-rule proper sets, it is sufficient to have intermediate nodes act like relays to perform arbitrarily closely to the rate lower bound.       

In this section, after giving the problem statement and reviewing previous results, we explain our main contributions in this problem.   

\subsection{Problem Setup} \label{subsec:problemsetup}
Consider $k$ discrete memoryless random processes, $\{X_{1}^{i}\}_{i=1}^{\infty}$, ..., $\{X_{k}^{i}\}_{i=1}^{\infty}$, as source processes. Memorylessness is not necessary, and one can approximate a source by a memoryless one with an arbitrary precision \cite{kornerbook}. Suppose these sources are drawn from finite sets $\cX_{1}=\{x_1^{1},x_1^{2},...,x_1^{|\cX_1|}\}$, ..., $\cX_{k}=\{x_k^{1},x_k^{2},...,x_k^{|\cX_k|}\}$. These sources have a joint probability distribution $p(x_{1},...,x_{k})$. We express $n$-sequences of these random variables as $\mathbf{{X_{1}}} = \{X_{1}^{i}\}_{i=l}^{i=l+n-1}$,..., $\mathbf{{X_{k}}} = \{X_{k}^{i}\}_{i=l}^{i=l+n-1}$ with the joint probability distribution $p(\bx_{1},...,\bx_{k})$. Without loss of generality, we assume $l=1$, and to simplify notation, $n$ will be implied by the context if no confusion arises. We refer to the $i^{th}$ element of $\bx_j$ as $x_{ji}$. We use $\bx_j^{1}$, $\bx_j^{2}$,... as different $n$-sequences of $\bX_j$. We shall omit the superscript when no confusion arises. Since the sequence $(\bx_{1},...,\bx_{k})$ is drawn i.i.d. according to $p(x_{1},...,x_{k})$, one can write $p(\bx_{1},...,\bx_{k})=\prod_{i=1}^{n} p(x_{1i},...,x_{ki})$.

Consider a tree network shown in Figure \ref{fig:general-tree}. Suppose we have $k$ source nodes in this network and a receiver in its root. We refer to other nodes of this tree as intermediate nodes. Source node $j$ has an input random process $\{X_{j}^{i}\}_{i=1}^{\infty}$. The receiver wishes to compute a deterministic function $f:\cX_{1}\times...\times\cX_{k}\to\cZ$, or $f:\cX_{1}^{n}\times...\times\cX_{k}^{n}\to\cZ^{n}$, its vector extension.

Note that sources can be at any nodes of the network. However, without loss of generality, we can modify the network by adding some fake leaves to source nodes which are not located in leaves of the network. So, in the achieved network, sources are located in leaves (as an example, look at Figure \ref{fig:leaves}). 

Also, by adding some auxiliary nodes, one can make sources to be in the same distance from the receiver. Hence, we consider source nodes to be in distance $d_{max}$ from the receiver. Consider nodes of a tree with distance $i\geq 1$ from the receiver. We refer to them as the stage $i$ of this tree. Let $w_i$ be the number of such nodes. We refer to the $j^{th}$ node of the $i^{th}$ stage as $n_{ij}$. Its outgoing link is denoted by $e_{ij}$. Suppose this node sends $M_{ij}$ over this edge with a rate $R_{ij}$ (it maps length $n$ blocks of $M_{ij}$, referred to as $\bM_{ij}$, to $\{1,2,...,2^{nR_{ij}}\}$.). 

If this node is a source node (i.e., $n_{d_{max}j}$ for some $j$), then $\bM_{ij}=en_{X_j}(\bX_j)$, where $en_{X_j}$ is the encoding function of the source $j$.

Now, suppose this node is an intermediate node (i.e., $n_{ij}$, $i\notin \{1,d_{max}\}$) with incoming edges $e_{(i+1)1}$, ...,and $e_{(i+1)q}$. We allow this node to compute a function (say $g_{ij}(.)$). Hence, $\bM_{ij}=g_{ij}(\bM_{(i+1)1},...,\bM_{(i+1)q})$.    

The receiver has a decoder $r$ which maps $r:\prod_{1\leq j\leq w_1} \{1,...,2^{n R_{1j}}\}\to\cZ^{n}$. Thus, the receiver computes $r(\bM_{11},...,\bM_{1w_1})=r'(en_{X_1}(\bX_1),...,en_{X_k}(\bX_k))$. We refer to this encoding/decoding scheme as an $n$-distributed functional code. Intermediate nodes are allowed to compute functions, but have no demand of their own. The desired function $f(\bX_1,...,\bX_k)$ at the receiver is the only demand in the network. For any encoding/decoding scheme, the probability of error is defined as 
\begin{equation}
P_{e}^{n}= Pr[(\bx_1,...,\bx_k):{f(\bx_1,...,\bx_k)\neq r'(en_{X_1}(\bx_1),...,en_{X_k}(\bx_k))}].
\end{equation}
A rate tuple of the network is the set of rates of its edges (i.e., $\{R_{ij}\}$ for valid $i$ and $j$). We say a rate tuple is achievable iff there exist a coding scheme operating at these rates so that $P_e^n\to 0$ as $n\to\infty$. The achievable rate region is the set closure of the set of all achievable rates.

\begin{figure}[t]
	\centering
    \includegraphics[width=10.5cm,height=6cm]{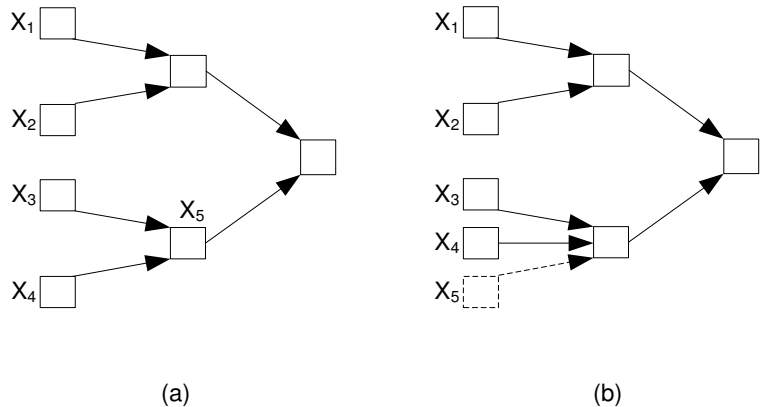}
    \caption{ a) Sources are not necessarily located in leaves b) By adding some fake nodes, one can assume sources are in leaves of the modified tree.}
    \label{fig:leaves}
  \end{figure}

\subsection{Definitions and Prior Results} \label{subsec:doshi}
In this part, first we present some definitions used in formulating our results. We also review some prior results. 

\begin{defn}
The \textit{characteristic graph} $G_{X_1} = (V_{X_1},E_{X_1})$ of $X_1$ with respect to $X_2$, $p(x_1,x_2)$, and function $f(X_1,X_2)$ is defined as follows: $V_{X_1} = \cX_{1}$
and an edge $(x_{1}^{1},x_{1}^{2})\in\mathcal{X}_1^{2}$ is in $E_{X_1}$ iff there exists a $x_{2}^{1}\in\cX_{2}$ such that $p(x_{1}^{1},x_{2}^{1})p(x_{1}^{2},x_{2}^{1})>0$ and $f(x_{1}^{1},x_{2}^{1})\neq f(x_{1}^{2},x_{2}^{1})$. 
\end{defn}

In other words, in order to avoid confusion about the function $f(X_1,X_2)$ at the receiver, if $(x_{1}^{1},x_{1}^{2})\in E_{X_1}$, descriptions of $x_{1}^{1}$ and $x_{1}^{2}$ must be different. 
Shannon first defined this when studying the zero error capacity of noisy channels \cite{s56}. Witsenhausen \cite{w76} used this concept to study a simplified version of our problem where one encodes $X_1$ to compute $f(X_1)$ with zero distortion. The characteristic graph of $X_2$ with respect to $X_1$, $p(x_1,x_2)$, and $f(X_1,X_2)$ is defined analogously and denoted by $G_{X_2}$. One can extend the definition of the characteristic graph to the case of having more than two random variables. Suppose $X_1$, ..., $X_k$ are $k$ random variables defined in Section \ref{subsec:problemsetup}.

\begin{defn}
The \textit{characteristic graph} $G_{X_1} = (V_{X_1},E_{X_1})$ of $X_1$ with respect to random variables $X_2$,...,$X_k$, $p(x_1,...,x_k)$, and $f(X_1,...,X_k)$ is defined as follows: $V_{X_1} = \mathcal{X}_1$
and an edge $(x_{1}^{1},x_{1}^{2})\in\mathcal{X}_{1}^{2}$ is in $E_{x_1}$ if there exist $x_{j}^{1}\in\mathcal{X}_j$ for $2 \leq j\leq k$ such that $p(x_{1}^{1},x_{2}^{1},...,x_{k}^{1})p(x_{1}^{2},x_{2}^{1},...,x_{k}^{1})>0$ and $f(x_{1}^{1},x_{2}^{1},...,x_{k}^{1})\neq f(x_{1}^{2},x_{2}^{1},...,x_{k}^{1})$. 
\end{defn}


\begin{ex}\label{ex:2-mod2}
To illustrate the idea of confusability and the characteristic graph, consider two random variables $X_1$ and $X_2$ such that $\cX_1=\{0,1,2,3\}$ and $\cX_2=\{0,1\}$ where they are uniformly and independently distributed on their own supports. Suppose $f(X_1,X_2)=(X_1+X_2)$ mod $2$ is to perfectly reconstructed at the receiver. Then, the characteristic graph of $X_1$ with respect to $X_2$, $p(x_1,x_2)=\frac{1}{8}$, and $f$ is shown in Figure \ref{fig:char-graph-sq}-a.    
\end{ex}

The following definition can be found in \cite{k73}.

\begin{defn}\label{def:korner}
Given a graph $G_{X_1}=(V_{X_1},E_{X_1})$ and a distribution on its vertices $V_{X_1}$, graph entropy is
\begin{align}
H_{\gxo}(X_1) &= \min_{X_1\in W_1\in \Gamma(G_{X_1})}I(X_1;W_1),
\end{align}
where $\Gamma(G_{X_1})$ is the set of all maximal independent 
sets of $\gxo$.  
\end{defn}

The notation $X_1\in W_1\in\Gamma(\gxo)$ means that we are minimizing over all distributions $p(w_1,x_1)$ such that $p(w_1,x_1)>0$ implies $x_1\in w_1$, where $w_1$ is a maximal independent set of the graph $G_{x_1}$. 



\begin{ex}
Consider the scenario described in Example \ref{ex:2-mod2}. For the characteristic graph of $X_1$ shown in Figure \ref{fig:char-graph-sq}-a, the set of maximal independent sets is $W_1=\{\{0,2\},\{1,3\}\}$. To minimize $I(X_1;W_1)=H(X_1)-H(X_1|W_1)=\log(4)-H(X_1|W_1)$, one should maximize $H(X_1|W_1)$. Because of the symmetry of the problem, to maximize $H(X_1|W_1)$, $p(w_1)$ must be uniform over two possible maximal independent sets of $\gxo$. Since each maximal independent set $w_1\in W_1$ has two $X_1$ values, thus, $H(X_1|w_1)=\log(2)$ bit, and since $p(w_1)$ is uniform, $H(X_1|W_1)=\log(2)$ bit. Therefore, $H_{\gxo}(X_1)=\log(4)-\log(2)=1$ bit. One can see if we want to encode $X_1$ ignoring the effect of the function $f$, we need $H(X_1)=\log(4)=2$ bits. We will show that, for this example, functional compression saves us $1$ bit in every $2$ bits compared to the traditional data compression.   
\end{ex}

Witsenhausen \cite{w76} showed that the graph entropy is the minimum rate at which a single source can be encoded so that a function of that source can be computed with zero distortion. Orlitsky and Roche \cite{or01} defined an extension of K\"orner's graph entropy, the \textit{conditional graph entropy}.

\begin{defn}\label{def:or}
The conditional graph entropy is 
\begin{align}
H_{G_{X_1}}(X_1|X_2)  &= \min_{\substack{
  X_1\in W_1\in \Gamma(\gxo)\\
  W_1-X_1-X_2}}
I(W_1;X_1|X_2).
\end{align}
\end{defn}

Notation $W_1-X_1-X_2$ indicates a Markov chain. If $X_1$ and $X_2$ are independent, $H_{G_{X_1}}(X_1|X_2)=H_{G_{X_1}}(X_1)$. 
To illustrate this concept, let us consider an example borrowed from \cite{or01}.


\begin{ex}
When $f(X_1,X_2)=X_1$, $H_{\gxo}(X_1|X_2)=H(X_1|X_2)$. 

To show this, consider the characteristic graph of $X_1$, denoted as $\gxo$. Since $f(X_1,X_2)=X_1$, then for every $x_2^{1}\in\cX_2$, the set $\{x_1^{i}:p(x_1^{i},x_2^{1})>0\}$ of possible $x_1^{i}$ are connected to each other (i.e., this set is a clique of $\gxo$). Since the intersection of a clique and a maximal independent set is a singleton, $X_2$ and the maximal independent set $W_1$ containing $X_1$ determine $X_1$. So,

\begin{eqnarray}
H_{G_{X_1}}(X_1|X_2)  &=& \min_{\substack{
  X_1\in W_1\in \Gamma(\gxo)\\
  W_1-X_1-X_2}}
I(W_1;X_1|X_2)\nonumber\\
&=&H(X_1|X_2)-\max_{\substack{
  X_1\in W_1\in \Gamma(\gxo)}}
H(X_1|W_1,X_2)\\
&=&H(X_1|X_2).\nonumber
\end{eqnarray}    
\end{ex}


\begin{defn}
A vertex coloring of a graph is a function $\cxo(X_1):V_{x_1} \to \mathbbm{N} $ of a graph $\gxo= (V_{X_1},E_{X_1})$ such that $(x_{1}^{1},x_{1}^{2})\in E_{X_1}$ implies $\cxo(x_{1}^{1})\neq \cxo(x_{1}^{2})$. The entropy of a coloring is the entropy of the induced distribution on colors. Here, $p(\cxo(x_{1}^{i}))=p(\cxo^{-1}(\cxo(x_{1}^{i})))$, where $\cxo^{-1}(\cxo(x_{1}^{i})) = \{x_{1}^{j} : \cxo(x_{1}^{j})=\cxo(x_{1}^{i})\}$ for all valid $j$. This subset of vertices with the same color is called a \textit{color class}. We refer to a coloring which minimizes the entropy as a minimum entropy coloring. We use $\Cxo$ as the set of all valid colorings of a graph $\gxo$. 
\end{defn}


\begin{ex}
Consider again the random variable $X_1$ described in Example \ref{ex:2-mod2}, whose characteristic graph $\gxo$ is shown in Figure \ref{fig:char-graph-sq}-a. A valid coloring for $\gxo$ is shown in Figure $\ref{fig:look-up-ex}$-a. One can see that, in this coloring, two connected vertices are assigned to different colors. Specifically, $\cxo(X_1)=\{r,b\}$. So, $p(\cxo(x_1^{i})=r)=p(x_1^{i}=0)+p(x_1^{i}=2)$, and $p(\cxo(x_1^{i})=b)=p(x_1^{i}=1)+p(x_1^{i}=3)$.       
\end{ex}

We define a power graph of a characteristic graph as follows:

\begin{defn}
The $n$-th power of a graph $\gxo$ is a graph $\gxon=(\vxon,\exon)$ such that $\vxon=\cX_1^{n}$ and $(\bx_{1}^{1},\bx_{1}^{2})\in\exon$ when there exists at least one $i$ such that $(x_{1i}^{1},x_{1i}^{2})\in\exo$. We denote a valid coloring of $\gxon$ by $\cxon(\bX_1)$.    
\end{defn}

One may ignore atypical sequences in a sufficiently large power graph of a conflict graph and then, color that graph. This coloring is called an $\eps$-coloring of a graph and is defined as follows:

\begin{defn}
Given a non-empty set $\cA \subset\cX_1\times\cX_2$, define $\ph(x_1,x_2) = p(x_1,x_2)/p(\cA)$ when $(x_1,x_2)\in\cA$, and $\ph(x,y)=0$ otherwise. $\ph$ is the distribution over $(x_1,x_2)$ conditioned on $(x_1,x_2)\in\cA$. Denote the characteristic graph of $X_1$ with respect to $X_2$, $\ph(x_1,x_2)$, and $f(X_1,X_2)$ as $\Gh_{X_1}=(\Vh_{X_1},\Eh_{X_1})$ and the characteristic graph of $X_2$ with respect to $X_1$, $\ph(x_1,x_2)$, and $f(X_1,X_2)$ as $\Gh_{X_2}=(\Vh_{X_2},\Eh_{X_2})$. Note that $\Eh_{X_1}\subseteq E_{X_1}$ and $\Eh_{X_2}\subseteq E_{X_2}$. Suppose $p(\cA)\geq 1-\epsilon$. We say that $\cxo(X_1)$ and $\cxt(X_2)$ are $\epsilon$-colorings of $G_{X_1}$ and $G_{x_2}$
if they are valid colorings of $\Gh_{X_1}$ and $\Gh_{X_2}$. 
\end{defn}

In \cite{ao96}, the \textit{Chromatic entropy} of a graph $\gxo$ is defined as
\begin{defn}
\begin{align*} 
H_{\gxo}^{\chi}(X_1) &= \min_{\cxo\text{ is an }\epsilon\text{-coloring of }\gxo} H(\cxo(X_1)).
\end{align*}
\end{defn}

The chromatic entropy is a representation of the chromatic number of high probability subgraphs of the characteristic graph. 
In \cite{doshi-it}, the conditional chromatic entropy is defined as

\begin{defn}
\begin{align*}
H_{\gxo}^{\chi}(X_1|X_2) &= \min_{\cxo\text{ is an }\epsilon\text{-coloring of }\gxo} H(\cxo(X_1)|X_2).
\end{align*}
\end{defn}

Regardless of $\epsilon$, the above optimizations are minima, rather than infima, because there are finitely many subgraphs of any fixed graph $\gxo$, and therefore there are only finitely many $\epsilon$-colorings, regardless of $\epsilon$. 

In general, these optimizations are NP-hard (\cite{np1}). But, depending on the desired function $f$, there are some interesting cases that they are not NP-hard. We discuss these cases in Section \ref{chap:min-coloring}. 

K\"orner showed in \cite{k73} that, in the limit of large $n$, there is a relation between the chromatic entropy and the graph entropy.

\begin{thm} \label{def:eq-uncond}
\begin{align}
\limtoi \frac1n H_{\gxon}^{\chi}(\bX_1) = H_{\gxo}(X_1).
\end{align}
\end{thm}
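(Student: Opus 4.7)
The plan is to prove Theorem \ref{def:eq-uncond} by establishing two matching inequalities: a converse (lower bound) and an achievability (upper bound).

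For the \emph{converse} direction, $\liminf_{n\to\infty} \frac{1}{n} H_{\gxon}^{\chi}(\bX_1) \geq H_{\gxo}(X_1)$, I would fix any $\epsilon$-coloring $\cxon$ of $\gxon$. By definition, each color class $\cxon^{-1}(k)$ is an independent set in the $\epsilon$-subgraph $\Gh_{X_1}^n$. Extend each color class arbitrarily to a maximal independent set $\mathbf{w}_k \in \Gamma(\gxon)$, and define the random variable $\bW_1 = \mathbf{w}_{\cxon(\bX_1)}$. Then $\bX_1 \in \bW_1$ with probability at least $1-\epsilon$, and $\bW_1$ is a deterministic function of $\cxon(\bX_1)$, so
\begin{equation}
H(\cxon(\bX_1)) \;\geq\; H(\bW_1) \;\geq\; I(\bX_1;\bW_1).
\end{equation}
After absorbing the atypical contribution via Fano's inequality, the right-hand side is (up to $o(n)$) a feasible value of the optimization defining $H_{\gxon}(\bX_1)$. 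I would then invoke K\"orner's additivity $\frac{1}{n}H_{\gxon}(\bX_1) = H_{\gxo}(X_1)$ of graph entropy under the OR-product for i.i.d.\ sources to finish the converse.

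For the \emph{achievability} direction, $\limsup_{n\to\infty} \frac{1}{n} H_{\gxon}^{\chi}(\bX_1) \leq H_{\gxo}(X_1)$, let $W_1^*$ achieve the minimum in Definition \ref{def:korner}, so $H_{\gxo}(X_1) = I(X_1;W_1^*)$. Draw $N = \lceil 2^{n(I(X_1;W_1^*)+\delta)} \rceil$ i.i.d.\ random sequences $\mathbf{w}^{(1)},\ldots,\mathbf{w}^{(N)}$, each coordinate sampled from the marginal of $W_1^*$; each $\mathbf{w}^{(i)}$, being a Cartesian product of maximal independent sets of $\gxo$, is itself a maximal independent set of the OR-product $\gxon$. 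A standard covering-lemma argument then guarantees that, with probability tending to one over the random codebook, every jointly typical $\bx_1$ is contained in at least one $\mathbf{w}^{(i)}$. Assigning each such $\bx_1$ the index of a covering $\mathbf{w}^{(i)}$ (breaking ties lexicographically) and coloring atypical sequences arbitrarily yields an $\epsilon$-coloring of $\gxon$ with at most $N$ colors, hence with entropy at most $n(H_{\gxo}(X_1)+\delta)+o(n)$; letting $\delta\to 0$ and $n\to\infty$ concludes.

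The main obstacle I anticipate is the covering step in the achievability proof, where one must both verify that the random covering succeeds on the typical set with high probability and control the \emph{induced} distribution on colors so that its entropy does not exceed $nI(X_1;W_1^*)$ by more than a vanishing amount; this requires charging each typical $\bx_1$ to a covering $\mathbf{w}^{(i)}$ whose conditional size is about $2^{nH(W_1^*|X_1)}$, via joint typicality. On the converse side, the subtle point is passing cleanly from a coloring to a maximal-independent-set-valued $\bW_1$ while absorbing the $\epsilon$-fraction of atypical sequences on which $\bX_1 \notin \bW_1$; since the statement is unconditional this step is comparatively clean, but it is precisely what forces the $\epsilon$-coloring relaxation in K\"orner's formulation.
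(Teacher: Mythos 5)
The paper does not actually prove this theorem: it is quoted as a known result of K\"orner \cite{k73} (in the chromatic-entropy form due to Alon and Orlitsky \cite{ao96}), so there is no internal proof to compare against and your attempt must be judged on its own. Your achievability half is sound and standard: a product of maximal independent sets of $\gxo$ is a maximal independent set of the OR-power $\gxon$, and the covering argument with $2^{n(I(X_1;W_1^*)+\delta)}$ codewords drawn i.i.d.\ from the $W_1^*$-marginal yields, with the usual typicality bookkeeping, an $\epsilon$-coloring of entropy at most $n(H_{\gxo}(X_1)+\delta)+o(n)$.

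The converse, however, has a genuine gap. An $\epsilon$-coloring is by definition a proper coloring of $\Gh_{X_1}^{n}$, whose edge set is only a \emph{subset} of that of $\gxon$; a color class is an independent set of $\Gh_{X_1}^{n}$ but may well contain edges of $\gxon$, so it cannot in general be extended to an element of $\Gamma(\gxon)$, and your $\bW_1$ is not well defined. If you instead extend within $\Gamma(\Gh_{X_1}^{n})$, the chain of inequalities only gives $H(\cxon(\bX_1)) \geq H_{\Gh_{X_1}^{n}}(\bX_1)$, and the remaining step --- that $\frac{1}{n}H_{\Gh_{X_1}^{n}}(\bX_1) \geq H_{\gxo}(X_1) - \delta(\epsilon)$ --- is precisely the hard content of K\"orner's theorem; it is \emph{not} supplied by the additivity $H_{\gxon}(\bX_1)=nH_{\gxo}(X_1)$, which holds for the full power graph, not for an arbitrary high-probability spanning subgraph. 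The single-letter analogue is in fact false: take $\cX_1=\{1,2\}$, $\cX_2=\{a,b\}$ with $f(1,a)=f(2,a)$, $f(1,b)\neq f(2,b)$, $p(x_1,b)=\epsilon/2$ for each $x_1$; then $H_{\gxo}(X_1)=1$ bit, yet deleting the two pairs with second coordinate $b$ (total probability $\epsilon$) removes the only edge, so the constant coloring is a valid $\epsilon$-coloring with entropy $0$. Fano's inequality does not close this gap either, since modifying $\bW_1$ on an event of probability $\epsilon$ can change $I(\bX_1;\bW_1)$ by order $\epsilon n$, not $o(n)$. The converse therefore requires the blocking/typicality argument of \cite{k73} and \cite{ao96}, which your sketch does not contain.
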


This theorem implies that the receiver can asymptotically compute a deterministic function of a discrete memoryless source, by first coloring a sufficiently large power of the characteristic graph of the source random variable with respect to the function, and then, encoding achieved colors using any encoding scheme which achieves the entropy bound of the coloring RV. In the previous approach, to achieve the encoding rate close to graph entropy of $X_1$, one should find the optimal distribution over the set of maximal independent sets of $\gxo$. But, this theorem allows us to find the optimal coloring of $\gxon$, instead of the optimal distribution on maximal independent sets. One can see that this approach modularizes the encoding scheme into two parts, a graph coloring module, followed by a Slepian-Wolf compression module.      

The conditional version of the above theorem is proven in \cite{doshi-it}.

\begin{thm} \label{def:eq-cond}
\begin{align}
\limtoi \frac1n H_{\gxon}^{\chi}(\bX_1 |\bX_2) = H_{\gxo}(X_1 | X_2).
\end{align}
\end{thm}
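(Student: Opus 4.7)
The plan is to establish both directions of the equality separately, following the template of K\"orner's proof of the unconditional version (Theorem \ref{def:eq-uncond}) while carrying the conditioning on $\bX_2$ through every step.

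For the lower bound $\liminf_{n\to\infty} \frac{1}{n}H_{\gxon}^{\chi}(\bX_1|\bX_2) \geq H_{\gxo}(X_1|X_2)$, I would start with any $\epsilon$-coloring $\cxon$ of $\gxon$ attaining $H_{\gxon}^{\chi}(\bX_1|\bX_2)$. Because the coloring is a deterministic function of $\bX_1$, the identity $H(\cxon(\bX_1)|\bX_2) = I(\cxon(\bX_1);\bX_1|\bX_2)$ holds and the Markov chain $\cxon-\bX_1-\bX_2$ is automatic. Every color class is an independent set of the high-probability subgraph $\hat{G}_{\bX_1}^n$, so extending each class to a maximal independent set produces a feasible candidate in the minimization defining $H_{\hat{G}_{\bX_1}^n}(\bX_1|\bX_2)$, giving
\begin{equation*}
\frac{1}{n}H_{\gxon}^{\chi}(\bX_1|\bX_2) \;\geq\; \frac{1}{n}H_{\hat{G}_{\bX_1}^n}(\bX_1|\bX_2).
\end{equation*}
Letting $\epsilon\to 0$ by a continuity argument replaces $\hat{G}$ with $G$, and additivity of the conditional graph entropy under the OR-product for i.i.d.\ sources yields $H_{\gxon}(\bX_1|\bX_2) = n\,H_{\gxo}(X_1|X_2)$.

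For the upper bound $\limsup_{n\to\infty} \frac{1}{n}H_{\gxon}^{\chi}(\bX_1|\bX_2) \leq H_{\gxo}(X_1|X_2)$, I would fix $W_1^{*}$ attaining the minimum in Definition \ref{def:or} and let $\bW^{*}$ be its i.i.d.\ length-$n$ extension; then $\bW^{*}$ is a maximal independent set of $\gxon$ containing $\bX_1$ with $\bW^{*}-\bX_1-\bX_2$ and $I(\bW^{*};\bX_1|\bX_2) = n\,H_{\gxo}(X_1|X_2)$. A typical-set covering then converts $\bW^{*}$ into a deterministic coloring: generate a codebook of roughly $2^{n(H_{\gxo}(X_1|X_2)+\delta)}$ independent-set sequences drawn i.i.d.\ from the marginal of $\bW^{*}$, and for each strongly typical $(\bx_1,\bx_2)$ assign the first codebook entry $\bw$ that is jointly typical with $(\bx_1,\bx_2)$ and contains $\bx_1$. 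The covering lemma guarantees that with high probability every typical source pair is covered; restricting to the typical set, whose probability exceeds $1-\epsilon$, yields a valid $\epsilon$-coloring of $\gxon$ whose conditional entropy given $\bX_2$ is at most $n(H_{\gxo}(X_1|X_2)+\delta)+o(n)$.

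I expect the main obstacle to be the additivity $H_{\gxon}(\bX_1|\bX_2) = n\,H_{\gxo}(X_1|X_2)$ invoked in the lower bound. The inequality $\leq$ follows by using a product $\bW$ built from independent copies of the single-letter optimizer, but the matching $\geq$ requires the structural fact that every maximal independent set of the OR-product $\gxon$ is a Cartesian product of maximal independent sets of $\gxo$, combined with a single-letterization of $I(\bW;\bX_1|\bX_2)$ under the i.i.d.\ source distribution and the Markov constraint $\bW-\bX_1-\bX_2$. A secondary delicate point is extracting a deterministic coloring from the random codebook in the upper bound while ensuring that the color classes remain independent sets of $\gxon$; this can be handled by conditioning on a typical codebook realization and absorbing the overhead into the $o(n)$ term, paralleling the analogous step in Doshi \textit{et al.}~\cite{doshi-it}.
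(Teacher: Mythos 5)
Note first that the paper does not actually prove this statement: Theorem \ref{def:eq-cond} is quoted with the remark that the conditional version is proven in \cite{doshi-it}, so the only benchmark is the argument there. Your overall architecture --- a converse that turns color classes into independent sets and invokes additivity of conditional graph entropy, and an achievability via covering by independent-set sequences --- does mirror that proof, and you correctly flag the additivity $H_{\gxon}(\bX_1|\bX_2)=n\,H_{\gxo}(X_1|X_2)$ as the key single-letterization step. But the upper bound as you have written it does not work.

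The covering is quantitatively wrong. A codeword $\mathbf{w}$ drawn i.i.d.\ from the marginal of $W_1^{*}$ is jointly typical with a fixed typical pair $(\bx_1,\bx_2)$ with probability roughly $2^{-nI(W_1^{*};X_1,X_2)}=2^{-nI(W_1^{*};X_1)}$ (using the Markov chain $W_1^{*}-X_1-X_2$), so covering requires about $2^{nI(W_1^{*};X_1)}$ codewords; your codebook of size $2^{n(I(W_1^{*};X_1|X_2)+\delta)}$ is smaller by the exponential factor $2^{nI(W_1^{*};X_2)}$, and most typical sources go uncovered whenever $W_1^{*}$ and $X_2$ are dependent. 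Worse, your assignment rule colors the \emph{pair} $(\bx_1,\bx_2)$, so the same $\bx_1$ can receive different colors for different $\bx_2$; that is not a vertex coloring of $\gxon$ at all, and cannot be handed to the Slepian--Wolf stage. The correct construction covers $\bx_1$ alone with $2^{n(I(W_1^{*};X_1)+\delta)}$ codewords, defines $\cxon(\bx_1)$ as the index of the first codeword containing and jointly typical with $\bx_1$, and then shows that $H(\cxon(\bX_1)|\bX_2)$ collapses to about $nI(W_1^{*};X_1|X_2)$ because for any fixed typical $\bx_2$ only about $2^{nI(W_1^{*};X_1|X_2)}$ codewords are jointly typical with $\bx_2$, and the chosen one is among them with high probability (Markov lemma). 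A secondary gap sits in your lower bound: since the high-probability subgraph is a subgraph of $\gxon$, its conditional graph entropy is automatically \emph{at most} $H_{\gxon}(\bX_1|\bX_2)$, which is the direction you do not need; the step you wave at as ``continuity'' is the reverse inequality up to an $n\cdot o_{\epsilon}(1)$ correction, i.e.\ a genuine lemma that deleting a probability-$\epsilon$ set of pairs perturbs the per-letter conditional graph entropy by an amount vanishing with $\epsilon$.
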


This theorem implies a practical encoding scheme for the problem of functional compression with side information where the receiver wishes to compute $f(X_1,X_2)$, when $X_2$ is available at the receiver as the side information. Orlitsky and Roche showed in \cite{or01} that $H_{\gxo}(X_1|X_2)$ is the minimum achievable rate for this problem. Their proof uses random coding arguments and shows the existence of an optimal coding scheme. This theorem presents a modularized encoding scheme where one first finds the minimum entropy coloring of $\gxon$ for large enough $n$, and then uses a compression scheme on the coloring random variable (such as Slepian-Wolf \cite{sw73}) to achieve a rate arbitrarily close to $H(\cxon(\bX_1)|\bX_2)$. This encoding scheme guarantees computation of the function at the receiver with a vanishing probability of error.

All these results considered only functional compression with side information at the receiver (Figure \ref{fig:3part}-a). Consider the network shown in Figure \ref{fig:3part}-b. It shows a network with two source nodes and a receiver which wishes to compute a function of the sources' values. In general, the rate-region of this network has not been determined. However, \cite{doshi-it} determined a rate-region of this network when source random variables satisfy a condition called the zigzag condition, defined below.

We refer to the $\eps$-joint-typical set of sequences of random variables $\bX_1$, ..., $\bX_k$ as $\Ten$. $k$ is implied in this notation for simplicity. $\Ten$ can be considered as a strong or weak typical set (\cite{kornerbook}).     

\begin{defn}\label{def:zigzag}
A discrete memoryless source $\{(X_{1}^{i},X_{2}^{i})\}_{i\in \mathbbm{N}}$ with a distribution $p(x_1,x_2)$ satisfies the zigzag condition if for any $\epsilon$ and some $n$, $(\bx_{1}^{1},\bx_{2}^{1})$, $(\bx_{1}^{2},\bx_{2}^{2})\in\Ten$, there exists some $(\bx_{1}^{3},\bx_{2}^{3})\in\Ten$ such that $(\bx_{1}^{3},\bx_{2}^{i}),(\bx_{1}^{i},\bx_{2}^{3})\in \Teen$ for each $i\in\{1,2\}$, and $(x_{1j}^{3},x_{2j}^{3})=(x_{1j}^{i},x_{2j}^{3-i})$ for some $i \in \{1,2\}$ for each $j$.
\end{defn}

 In fact, the zigzag condition forces many source sequences to be typical. We first explain the  results of \cite{doshi-it}. Then, in Section \ref{sec:one-stage}, we compute a rate-region without the need for the zigzag condition. Then, we extend our results to the case of having $k$ source nodes. 

Reference \cite{doshi-it} shows that, if the source random variables satisfy the zigzag condition, an achievable rate region for this network is the set of all rates that can be achieved through graph colorings. The zigzag condition is a restrictive condition which does not depend on the desired function at the receiver. This condition is not necessary, but sufficient.

\subsection{A Rate Region for One-Stage Tree Networks}\label{sec:one-stage}

In this section, we want to find a rate region for a general one stage tree network without having any restrictive conditions such as the zigzag condition. Consider the network shown in Figure \ref{fig:one-stage} with $k$ sources.

\begin{defn}\label{def:path}
 A path with length $m$ between two points $Z_1=(x_1^{1},x_2^{1},...,x_{k}^{1})$, and $Z_m=(x_1^{2},x_2^{2},...,x_{k}^{2})$ is determined by $m-1$ points $Z_i$, $1\leq i\leq m$ such that,

i) $P(Z_i)>0$, for all $1\leq i\leq m$.

ii) $Z_i$ and $Z_{i+1}$ only differ in one of their coordinates.   
\end{defn}

Definition \ref{def:path} can be generalized to two $n$-length vectors as follows. 

\begin{defn}
 A path with length $m$ between two points $\bZ_1=(\bx_1^{1},\bx_2^{1},...,\bx_{k}^{1})\in\Ten$, and $\bZ_m=(\bx_1^{2},\bx_2^{2},...,\bx_{k}^{2})\in\Ten$ are determined by $m-1$ points $\bZ_i$, $1\leq i\leq m$ such that,

i) $\bZ_i\in\Ten$, for all $1\leq i\leq m$.

ii) $\bZ_i$ and $\bZ_{i+1}$ only differ in one of their coordinates.   
\end{defn}

Note that, each coordinate of $\bZ_i$ is a vector with length $n$.

\begin{defn}\label{def:jc}
A joint-coloring family $J_C$ for random variables $X_1$, ..., $X_k$ with characteristic graphs $\gxo$,...,$\gxk$, and any valid colorings $\cxo$,...,$\cxk$, respectively is defined as $J_C=\{j_c^1,...,j_c^{n_{j_c}}\}$ where $j_c^i$ is the collection of points $(x_{1}^{i_{1}},x_{2}^{i_{2}},...,x_{k}^{i_{k}})$ whose coordinates have the same color (i.e., $j_c^i=\big\{(x_{1}^{i_{1}},x_{2}^{i_{2}},...,x_{k}^{i_{k}}),(x_{1}^{l_{1}},x_{2}^{l_{2}},...,x_{k}^{l_{k}}):\cxo(x_{1}^{i_{1}})= \cxo(x_{1}^{l_{1}})\mbox{ , ..., }\cxk(x_{k}^{i_{k}})=\cxk(x_{k}^{l_{k}})\big\}$, for any valid $i_1$,...$i_k$, and $l_1$,...$l_k$). Each $j_c^i$ is called a joint coloring class where $n_{j_c}$ is the number of joint coloring classes of a joint coloring family.
\end{defn}

We say a joint coloring class $j_c^i$ is connected if between any two points in $j_c^i$, there exists a path that lies in $j_c^i$. Otherwise, it is disconnected. Definition \ref{def:jc} can be expressed for random vectors $\bX_1$,...,$\bX_k$ with characteristic graphs $\gxon$,...,$\gxkn$, and any valid $\epsilon$-colorings $\cxon$,...,$\cxkn$, respectively. 

\begin{defn}
Consider random variables $X_1$, ..., $X_k$ with characteristic graphs $\gxo$, ..., $\gxk$, and any valid colorings $\cxo$, ..., $\cxk$. We say a joint coloring class $j_c^i\in J_C$ satisfies the \textit{Coloring Connectivity Condition} (C.C.C.) when it is connected or disconnected parts of $j_c^i$ have the same function value. We say colorings $\cxo$, ..., $\cxk$ satisfy C.C.C. when  all joint coloring classes satisfy C.C.C. 
\end{defn}

C.C.C. can be expressed for random vectors $\bX_1$, ..., $\bX_k$ with characteristic graphs $\gxon$, ..., $\gxkn$, and any valid $\epsilon$-colorings $\cxon$, ..., $\cxkn$, respectively. 

\begin{ex}
For example, suppose we have two random variables $X_1$ and $X_2$ with characteristic graphs $\gxo$ and $\gxt$. Let us assume $\cxo$ and $\cxt$ are two valid colorings of $\gxo$ and $\gxt$, respectively. Assume $\cxo(x_1^{1})=\cxo(x_1^{2})$ and $\cxt(x_2^{1})=\cxt(x_2^{2})$. Suppose $j_c^1$ represents this joint coloring class. In other words, $j_c^1=\{(x_1^{i},x_2^{j})\}$, for all $1 \leq i,j\leq 2$ when $p(x_1^{i},x_2^{j})> 0$. Figure \ref{fig:ccc-ab} considers two different cases. The first case is when $p(x_1^{1},x_2^{2})=0$, and other points have a non-zero probability. It is illustrated in Figure \ref{fig:ccc-ab}-a. One can see that there exists a path between any two points in this joint coloring class. So, this joint coloring class satisfies C.C.C. If other joint coloring classes of $\cxo$ and $\cxt$ satisfy C.C.C., we say $\cxo$ and $\cxt$ satisfy C.C.C. Now, consider the second case depicted in Figure \ref{fig:ccc-ab}-b. In this case, we have $p(x_1^{1},x_2^{2})=0$, $p(x_1^{2},x_2^{1})=0$, and other points have a non-zero probability. One can see that there is no path between $(x_1^{1},x_2^{1})$ and $(x_1^{2},x_2^{2})$ in $j_c^1$. So, though these two points belong to a same joint coloring class, their corresponding function values can be different from each other. Thus, $j_c^1$ does not satisfy C.C.C. for this example. Therefore, $\cxo$ and $\cxt$ do not satisfy C.C.C. 
\end{ex}

 \begin{figure}[t]
	\centering
    \includegraphics[width=10.5cm,height=6cm]{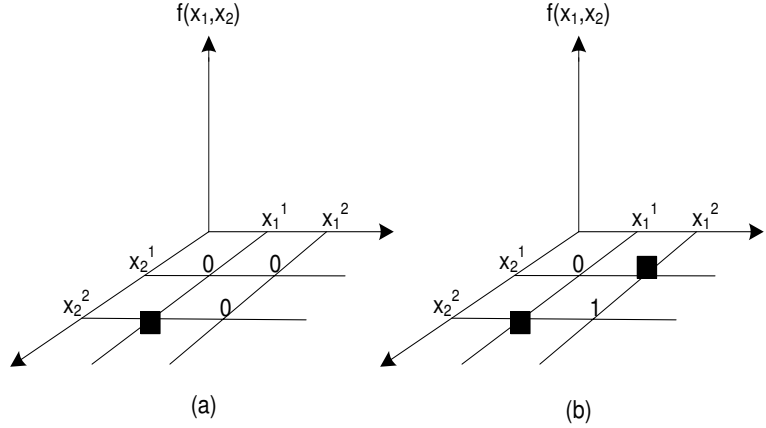}
    \caption{ Two examples of a joint coloring class: a) satisfying C.C.C. b) not satisfying C.C.C. Dark squares indicate points with zero probability. Function values are depicted in the picture.}
    \label{fig:ccc-ab}
  \end{figure}

\begin{lem}\label{lem:triv}
Consider two random variables $X_1$ and $X_2$ with characteristic graphs $\gxo$ and $\gxt$ and any valid colorings $\cxo(X_1)$ and $\cxt(X_2)$ respectively, where $\cxt(X_2)$ is a trivial coloring, assigning different colors to different vertices (to simplify the notation, we use $\cxt(X_2)=X_2$ to refer to this coloring). These colorings satisfy C.C.C. Also, $\cxon(\bX_1)$ and $\cxtn(\bx_2)=\bX_2$ satisfy C.C.C., for any $n$. 
\end{lem}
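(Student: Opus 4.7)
The plan is to unwind the definitions and observe that the trivial coloring for $X_2$ forces every joint coloring class to live on a single ``slice'' $\{X_2 = x_2^{*}\}$, after which connectivity is essentially free.

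More concretely, I would first note that since $c_{G_{X_2}}(X_2) = X_2$ assigns a distinct color to each value of $X_2$, every color class of $c_{G_{X_2}}$ is a singleton $\{x_2^{*}\}$. Fix an arbitrary joint coloring class $j_c^i \in J_C$. By Definition \ref{def:jc}, every point of $j_c^i$ has the same $c_{G_{X_2}}$-color as every other, so all points of $j_c^i$ share a common second coordinate $x_2^{*}$. Thus $j_c^i \subseteq \{(x_1, x_2^{*}) : c_{G_{X_1}}(x_1) = a\}$ for some fixed color $a$, and by the definition of $J_C$ every such point already has $p(x_1, x_2^{*}) > 0$.

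Next I would verify connectivity of $j_c^i$ directly from Definition \ref{def:path}. Take any two distinct points $Z = (x_1^{\alpha}, x_2^{*})$ and $Z' = (x_1^{\beta}, x_2^{*})$ in $j_c^i$. They agree in the second coordinate and differ only in the first, and both have positive probability. Therefore the two-point sequence $Z, Z'$ is itself a valid path of length $2$ lying entirely in $j_c^i$. Hence $j_c^i$ is connected, so it satisfies C.C.C. vacuously (there are no ``disconnected parts'' to worry about). Since $j_c^i$ was arbitrary, every joint coloring class satisfies C.C.C., and therefore $c_{G_{X_1}}(X_1)$ and $c_{G_{X_2}}(X_2) = X_2$ satisfy C.C.C.

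For the $n$-block statement, the argument is verbatim: $c_{G_{\bX_2}^{n}}(\bX_2) = \bX_2$ assigns each $n$-vector its own color, so every joint coloring class in the $\epsilon$-coloring setting consists of pairs $(\bx_1, \bx_2^{*})$ sharing a common second coordinate $\bx_2^{*}$ and an $\epsilon$-coloring value on the first coordinate. Any two such pairs belong to $\Ten$ (since only typical points enter the $\epsilon$-joint coloring class), differ in exactly one (vector-valued) coordinate, and thus form a length-$2$ path in the class. Connectivity of every class follows, so C.C.C. holds for $c_{G_{\bX_1}^{n}}(\bX_1)$ and $c_{G_{\bX_2}^{n}}(\bX_2) = \bX_2$ for every $n$.

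I do not anticipate a real obstacle here; the statement is essentially a sanity check on the definitions. The only subtlety to be careful about is making sure the notion of ``coordinate'' in Definition \ref{def:path} is the source-index coordinate (one of $\bX_1, \ldots, \bX_k$) rather than a time-index within $\bX_i$, so that a single swap of $\bx_1$ while holding $\bx_2^{*}$ fixed legitimately counts as one step of a path.
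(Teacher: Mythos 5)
Your proof is correct and follows essentially the same route as the paper: the trivial coloring on $X_2$ confines each joint coloring class to a single slice $\{X_2=x_2^{*}\}$, so any two points in a class differ in only one coordinate and are joined by a one-step path, and the $n$-block case is verbatim. The only discrepancy is terminological: by the paper's convention (length $=$ number of hops) this is a path of length one, not two.
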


\begin{proof}First, we know that any random variable $X_2$ by itself is a trivial coloring of $\gxt$ such that each vertex of $\gxt$ is assigned to a different color. So, $J_C$ for $\cxo(X_1)$ and $\cxt(X_2)=X_2$ can be written as $J_C=\{j_c^1,...,j_c^{n_{j_c}}\}$ such that $j_c^1=\{(x_1^{i},x_2^{1}): \cxo(x_1^{i})=\sigma_i\}$, where $\sigma_i$ is a generic color. Any two points in $j_c^1$ are  connected to each other with a path with length one. So, $j_c^1$ satisfies C.C.C. This arguments hold for any $j_c^i$ for any valid $i$. Thus, all joint coloring classes and therefore, $\cxo(X_1)$ and $\cxt(X_2)=X_2$ satisfy C.C.C. The argument for $\cxon(\bX_1)$ and $\cxtn(\bX_2)=\bX_2$ is similar.      
\end{proof}

\begin{lem}\label{lem:equal}
Consider random variables $X_1$, ..., $X_k$ with characteristic graphs $\gxo$, ..., $\gxk$, and any valid colorings $\cxo$, ..., $\cxk$ with joint coloring class $J_C=\{j_c^i:i\}$. For any two points $(x_1^{1},...,x_k^{1})$ and $(x_1^{2},...,x_k^{2})$ in $j_c^i$, $f(x_1^{1},...,x_k^{1})=f(x_1^{2},...,x_k^{2})$ if and only if $j_c^i$ satisfies C.C.C. 
\end{lem}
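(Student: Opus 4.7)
The plan is to reduce both directions of the iff to a single core observation: if two points of $j_c^i$ differ in only one coordinate, then $f$ takes the same value at both. Concretely, suppose $Z = (x_1^{a_1},\ldots,x_k^{a_k})$ and $Z' = (x_1^{b_1},\ldots,x_k^{b_k})$ lie in $j_c^i$ and agree in every coordinate except the $j$-th. Since they belong to the same joint coloring class, $c_{G_{X_j}}(x_j^{a_j}) = c_{G_{X_j}}(x_j^{b_j})$; because $c_{G_{X_j}}$ is a valid coloring, the vertices $x_j^{a_j}$ and $x_j^{b_j}$ are non-adjacent in $G_{X_j}$. Unwinding the definition of the characteristic graph, this non-adjacency means that for every assignment to the remaining coordinates with positive joint probability, $f$ must take equal values at the two points. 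Since $Z$ and $Z'$ are both positive-probability points (implicit in the working interpretation of $j_c^i$ and explicit along any path as in Definition \ref{def:path}), we conclude $f(Z) = f(Z')$.

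Iterating this one-coordinate step along any path in $j_c^i$ shows that $f$ is constant on any path-connected subset of $j_c^i$, and in particular on each connected component of $j_c^i$. This is the crux of the lemma; the rest is bookkeeping.

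The two directions now follow directly. For $(\Rightarrow)$, assume $f$ is constant over all of $j_c^i$. If $j_c^i$ is connected then C.C.C.\ holds by definition; if it is disconnected then its distinct parts trivially carry the same $f$-value, and C.C.C.\ again holds. For $(\Leftarrow)$, assume $j_c^i$ satisfies C.C.C. If $j_c^i$ is connected, the core observation already makes $f$ constant on it. If $j_c^i$ is disconnected with connected components $P_1,\ldots,P_r$, the core observation gives a constant value $\phi_l$ of $f$ on each $P_l$, while C.C.C.\ forces $\phi_1=\cdots=\phi_r$, so $f$ is constant on all of $j_c^i$.

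The main obstacle, and really the only nontrivial step, is the core observation. It requires a careful reading of the characteristic graph definition in the $k$-variable setting to see that non-adjacency of two values in $G_{X_j}$ forces equality of $f$ at every pair of tuples differing only in the $j$-th coordinate and having positive probability. Once the positive-probability hypothesis is justified (via the path definition, which only traverses typical/positive-probability tuples), the iteration along paths and the case analysis on connectedness are mechanical.
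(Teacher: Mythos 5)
Your proof is correct and follows essentially the same route as the paper's: the key step in both is that two points of $j_c^i$ differing in a single coordinate must share the same function value (otherwise the two differing coordinate values would be adjacent in the corresponding characteristic graph and could not receive the same color), iterated along a path and combined with a case analysis on connectedness. Your write-up is in fact slightly more explicit than the paper's, which treats the direction from constancy of $f$ to C.C.C.\ only tersely via the contrapositive and a reference to a figure.
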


\begin{proof}
We first show that if $j_c^i$ satisfies C.C.C., then, for any two points $(x_1^{1},...,x_k^{1})$ and $(x_1^{2},...,x_k^{2})$ in $j_c^i$, $f(x_1^{1},...,x_k^{1})=f(x_1^{2},...,x_k^{2})$ . Since $j_c^i$ satisfies C.C.C., either $f(x_1^{1},...,x_k^{1})=f(x_1^{2},...,x_k^{2})$, or there exists a path with length $m-1$ between these two points $Z_1=(x_1^{1},...,x_k^{1})$ and $Z_m=(x_1^{2},...,x_k^{2})$, for some $m$. Two consecutive points $Z_j$ and $Z_{j+1}$ in this path, differ in just one of their coordinates. Without loss of generality, suppose they differ in their first coordinate. In other words, suppose $Z_j=(x_1^{j_1},x_2^{j_2}...,x_k^{j_k})$ and $Z_{j+1}=(x_1^{j_0},x_2^{j_2}...,x_k^{j_k})$. Since these two points belong to $j_c^i$, $\cxo(x_1^{j_1})=\cxo(x_1^{j_0})$. If $f(Z_j)\neq f(Z_{j+1})$, there would exist an edge between $x_1^{j_1}$ and $x_1^{j_0}$ in $\gxo$ and they could not have the same color. So, $f(Z_j)=f(Z_{j+1})$. By applying the same argument inductively for all two consecutive points in the path between $Z_1$ and $Z_m$, one can get $f(Z_1)=f(Z_2)=...=f(Z_m)$.  

If $j_c^i$ does not satisfy C.C.C., it means that there exists at least two points $Z_1$ and $Z_2$ in $j_c^i$ such that no path exists between them with different function values. As an example, consider Figure \ref{fig:ccc-ab}-b. Hence, the function value is the same in a joint coloring class iff it satisfies C.C.C.         
\end{proof}

\begin{lem}\label{lem:same-value}
Consider random variables $\bX_1$, ..., $\bX_k$ with characteristic graphs $\gxon$, ..., $\gxkn$, and any valid $\epsilon$-colorings $\cxon$, ..., $\cxkn$ with the joint coloring class $J_C=\{j_c^i:i\}$. For any two points $(\bx_1^{1},...,\bx_k^{1})$ and $(\bx_1^{2},...,\bx_k^{2})$ in $j_c^i$, $f(\bx_1^{1},...,\bx_k^{1})= f(\bx_1^{2},...,\bx_k^{2})$ if and only if $j_c^i$ satisfies C.C.C. 
\end{lem}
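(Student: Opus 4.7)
The plan is to replicate the proof of Lemma \ref{lem:equal} in the vector setting, promoting scalars $x_i^{j}$ to blocks $\bx_i^{j}$, the single-letter edge set $E_{X_1}$ to $\exon$, and scalar paths to their vector analogues whose intermediate vertices are required to sit in $\Ten$. The qualitative shape of the argument---a case split on whether two points of $j_c^i$ lie in the same connected component, path-walking in the connected case, and a direct appeal to the second clause of C.C.C.\ in the disconnected case---carries over essentially unchanged.

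For sufficiency, fix $\bZ_1=(\bx_1^{1},\dots,\bx_k^{1})$ and $\bZ_m=(\bx_1^{2},\dots,\bx_k^{2})$ in $j_c^i$. If $f(\bZ_1)=f(\bZ_m)$ there is nothing to prove; otherwise C.C.C.\ supplies a path $\bZ_1,\dots,\bZ_m$ inside $j_c^i$ whose consecutive members lie in $\Ten$ and differ in exactly one coordinate. Consider two such consecutive vertices, say $\bZ_\ell=(\bx_1^{a},\bx_2^{*},\dots,\bx_k^{*})$ and $\bZ_{\ell+1}=(\bx_1^{b},\bx_2^{*},\dots,\bx_k^{*})$. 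Both lying in $j_c^i$ forces $\cxon(\bx_1^{a})=\cxon(\bx_1^{b})$. If we had $f(\bZ_\ell)\neq f(\bZ_{\ell+1})$, then some coordinate $j$ would satisfy $f(x_{1j}^{a},x_{2j}^{*},\dots,x_{kj}^{*})\neq f(x_{1j}^{b},x_{2j}^{*},\dots,x_{kj}^{*})$, and typicality of the two endpoints makes the corresponding joint probabilities strictly positive, so the definition of $\gxo$ yields $(x_{1j}^{a},x_{1j}^{b})\in E_{X_1}$ and consequently $(\bx_1^{a},\bx_1^{b})\in\exon$; since both $\bx_1^{a}$ and $\bx_1^{b}$ appear as the first coordinate of a typical sequence, this edge survives in the $\epsilon$-restricted version of $\exon$ on which $\cxon$ is required to be a proper coloring, contradicting the equality of colors. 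Iterating along the path yields $f(\bZ_1)=\cdots=f(\bZ_m)$.

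For necessity, if $j_c^i$ violates C.C.C., then by the very wording of the condition there exist two points of $j_c^i$ lying in different connected components and with disagreeing function values, directly witnessing the failure of function-value equality on $j_c^i$. The one delicate step in the whole argument is the promotion of a pointwise disagreement of the extended function $f^{n}$ to an edge in the $\epsilon$-restricted version of $\exon$ rather than merely in $\exon$ itself; the requirement $\bZ_\ell,\bZ_{\ell+1}\in\Ten$ built into the vector path definition is exactly what enables this promotion, since both endpoints then survive the conditioning defining $\hat p$, and the coordinate-$j$ witness really does lie in the restricted edge set against which the $\epsilon$-coloring is valid.
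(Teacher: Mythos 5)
Your proof follows the paper's argument essentially verbatim: the same case split from the definition of C.C.C., the same path-walking induction showing consecutive path vertices cannot have different function values without forcing an edge in $\gxon$ between same-colored vertices, and the same reading of the converse directly off the definition. The extra care you take in checking that the witnessing edge survives in the $\epsilon$-restricted edge set (via typicality of the path vertices) is a detail the paper glosses over, but it does not change the route.
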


\begin{proof}
The proof is similar to Lemma \ref{lem:equal}. The only difference is to use the definition of C.C.C. for $\cxon$, ..., $\cxkn$. Since $j_c^i$ satisfies C.C.C., either $f(\bx_1^{1},...,\bx_k^{1})= f(\bx_1^{2},...,\bx_k^{2})$, or there exists a path with length $m-1$ between any two points $\bZ_1=(\bx_1^{1},...,\bx_k^{1})\in\Ten$ and $\bZ_m=(\bx_1^{2},...,\bx_k^{2})\in\Ten$ in $j_c^i$, for some $m$. Consider two consecutive points $\bZ_j$ and $\bZ_{j+1}$ in this path. They differ in one of their coordinates (suppose they differ in their first coordinate). In other words, suppose $\bZ_j=(\bx_1^{j_1},\bx_2^{j_2}...,\bx_k^{j_k})\in\Ten$ and $\bZ_{j+1}=(\bx_1^{j_0},\bx_2^{j_2}...,\bx_k^{j_k})\in\Ten$. Since these two points belong to $j_c^i$, $\cxo(\bx_1^{j_1})=\cxo(\bx_1^{j_0})$. If $f(\bZ_j)\neq f(\bZ_{j+1})$, there would exist an edge between $\bx_1^{j_1}$ and $\bx_1^{j_0}$ in $\gxon$ and they could not get the same color. Thus, $f(\bZ_j)=f(\bZ_{j+1})$. By applying the same argument for all two consecutive points in the path between $\bZ_1$ and $\bZ_m$, one can get $f(\bZ_1)=f(\bZ_2)=...=f(\bZ_m)$. The converse part is similar to Lemma \ref{lem:equal}.  
\end{proof}   

Next, we want to show that, if $X_1$ and $X_2$ satisfy the zigzag condition given in Definition \ref{def:zigzag}, any valid colorings of their characteristic graphs satisfy C.C.C., but not vice versa. In other words, we want to show that the zigzag condition used in \cite{doshi-it} is sufficient but not necessary. 
 
\begin{lem}
If two random variables $X_1$ and $X_2$ with characteristic graphs $\gxo$ and $\gxt$ satisfy the zigzag condition, any valid colorings $\cxo$ and $\cxt$ of $\gxo$ and $\gxt$ satisfy C.C.C., but not vice versa.  
\end{lem}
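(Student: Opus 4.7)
The plan is to reduce the forward direction to a function-value equality via Lemma \ref{lem:equal} and to dispatch the converse by exhibiting a clean counterexample. Assume $\cxo(x_1^1)=\cxo(x_1^2)$, $\cxt(x_2^1)=\cxt(x_2^2)$, and both $(x_1^1,x_2^1)$ and $(x_1^2,x_2^2)$ are typical; by Lemma \ref{lem:equal} it suffices to prove $f(x_1^1,x_2^1)=f(x_1^2,x_2^2)$. Applying the zigzag condition at $n=1$, the coordinate-swap clause of Definition \ref{def:zigzag} pins the bridge $(x_1^3,x_2^3)$ to one of the two diagonals $(x_1^1,x_2^2)$ or $(x_1^2,x_2^1)$, and that diagonal pair carries positive probability. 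Say $(x_1^1,x_2^2)$ is typical. Since $\cxt(x_2^1)=\cxt(x_2^2)$ and valid colorings assign distinct colors to adjacent vertices, $x_2^1$ and $x_2^2$ are non-adjacent in $\gxt$; by the definition of the characteristic graph, this forces $f(x_1^1,x_2^1)=f(x_1^1,x_2^2)$. Symmetrically $\cxo(x_1^1)=\cxo(x_1^2)$ delivers $f(x_1^1,x_2^2)=f(x_1^2,x_2^2)$, and the two equalities chain to the required claim.

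For the converse I would exhibit a setup where every valid coloring satisfies C.C.C.\ but zigzag fails. Take $\cX_1=\cX_2=\{0,1\}$ with $p(0,0)=p(1,1)=\tfrac12$ and $f(x_1,x_2)=x_1\oplus x_2$. Then $f$ vanishes on the support, so the characteristic graphs are edgeless, every map into colors is a valid coloring, and every joint coloring class restricted to the support lies in the region $\{f=0\}$, so C.C.C.\ holds automatically by Lemma \ref{lem:equal}. On the other hand, typical blocks must satisfy $\bx_1=\bx_2$, so a bridge $(\bx_1^3,\bx_2^3)$ meeting the cross-pair typicality of zigzag would require $\bx_1^3=\bx_2^1$ and $\bx_1^3=\bx_2^2$ simultaneously, which is impossible whenever $\bx_2^1\neq\bx_2^2$; zigzag therefore fails at every blocklength.

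The subtle point in the forward direction is that typicality of the bridge alone does not suffice: an arbitrary typical pair would have coordinates of unpredictable color, and the chain of function-value equalities would fail to close, since one would lack any reason for $f(x_1^1,x_2^3)$ or $f(x_1^3,x_2^1)$ to match the corner values. The coordinate-swap clause of Definition \ref{def:zigzag} is what makes the argument go through: it pins the bridge to complete the rectangle with the two endpoints, which is exactly the configuration that lets coloring validity propagate $f$-equality across two adjacent sides.
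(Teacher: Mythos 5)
Your proof is correct, and it is worth comparing the two routes. In the forward direction you and the paper rest on the same key observation: the coordinate-swap clause of Definition \ref{def:zigzag} forces the bridge $(x_1^3,x_2^3)$ to be a diagonal of the rectangle spanned by the two given points, and typicality puts that diagonal in the support. The paper then notes that this diagonal point lies in the same joint coloring class, so it furnishes a two-hop path and C.C.C.\ follows from connectivity directly; you instead propagate $f$-equality along the two sides of the rectangle using non-adjacency of same-colored vertices and close the argument with Lemma \ref{lem:equal}. Since Lemma \ref{lem:equal} is exactly the equivalence between connectivity and function-value constancy on a joint coloring class, the two arguments are interchangeable, though yours makes explicit the propagation step (and the positive-probability checks) that the paper leaves implicit. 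The converse is where you genuinely diverge: the paper only cites Lemma \ref{lem:triv}, which exhibits a single pair of colorings (one of them trivial) satisfying C.C.C.\ without any appeal to zigzag, whereas you construct an explicit source pair ($\cX_1=\cX_2=\{0,1\}$, $p(0,0)=p(1,1)=\tfrac12$, $f(x_1,x_2)=x_1\oplus x_2$) in which \emph{every} valid coloring satisfies C.C.C.\ yet the zigzag condition fails at every blocklength. That is a stronger and cleaner witness for ``not vice versa'' than the paper's, at the cost of some verification. One shared caveat: the zigzag condition is stated for $n$-sequences (``for any $\epsilon$ and some $n$''), and both you and the paper apply it at the single-letter level; your argument is no less rigorous than the paper's on this point, but neither addresses the gap between block-level zigzag and single-letter colorings.
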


\begin{proof}
Suppose $X_1$ and $X_2$ satisfy the zigzag condition, and $\cxo$ and $\cxt$ are two valid colorings of $\gxo$ and $\gxt$, respectively. We want to show that these colorings satisfy C.C.C. To do this, consider two points $(x_1^{1},x_2^{1})$ and $(x_1^{2},x_2^{2})$ in a joint coloring class $j_c^i$. The definition of the zigzag condition guarantees the existence of a path with length two between these two point. Thus, $\cxo$ and $\cxt$ satisfy C.C.C. 

The second part of this Lemma says that the converse part is not true. To have an example, one can see that in a special case considered in Lemma \ref{lem:triv}, those colorings always satisfy C.C.C.  without having any condition such as the zigzag condition.    
\end{proof}

\begin{defn} \label{def:joint} 
For random variables $X_1$, ..., $X_k$ with characteristic graphs $\gxo$, ..., $\gxk$, the joint graph entropy is defined as follows:
\begin{equation} 
H_{\sgjk}(X_1,...,X_k) \triangleq \limtoi \min_{\cxon,...,\cxkn} \frac{1}{n} H(\cxon(\bX_1),...,\cxkn(\bX_k))
\end{equation}
\end{defn}

in which $\cxon(\bX_1)$, ..., $\cxkn(\bX_k)$ are $\epsilon$-colorings of $\gxon$, ..., $\gxkn$ satisfying C.C.C. We refer to this joint graph entropy as $H_{\sgs}$ where $S=\{1,2,...,k\}$. Note that, this limit exists because we have a monotonically decreasing sequence bounded below.
Similarly, we can define the conditional graph entropy.
\begin{defn}\label{def:cond}
For random variables $X_1$, ..., $X_k$ with characteristic graphs $\gxo$, ..., $\gxk$, the conditional graph entropy can be defined as follows:
\begin{eqnarray} 
&&H_{\sgji}(X_1,...,X_i|X_{i+1},...,X_{k}) \nonumber\\
&&\triangleq \limtoi \min\frac{1}{n} H(\cxon(\bX_1),...,c_{G_{\bX_i}^{n}}(\bX_i)|c_{G_{\bX_{i+1}}^{n}}(\bX_{i+1}),...,\cxkn(\bX_k))
\end{eqnarray}
where the minimization is over $\cxon(\bX_1)$, ..., $\cxkn(\bX_k)$, which are $\epsilon$-colorings of $\gxon$, ..., $\gxkn$ satisfying C.C.C. 
\end{defn}

\begin{lem}
For $k=2$, Definitions \ref{def:or} and \ref{def:cond} are the same. 
\end{lem}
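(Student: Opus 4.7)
The plan is to show that when $k=2$ the minimization in Definition \ref{def:cond} collapses to the conditional chromatic entropy $H^{\chi}_{\gxon}(\bX_1|\bX_2)$, whose limit equals the Orlitsky--Roche quantity of Definition \ref{def:or} by Theorem \ref{def:eq-cond}. So the bulk of the work is bridging the two minimizations; once that is done, Theorem \ref{def:eq-cond} does all the analytic lifting.

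First I would argue that we can restrict the outer coloring to the trivial one, $\cxtn(\bX_2)=\bX_2$. For this, two observations combine: (i) by Lemma \ref{lem:triv}, $\cxon(\bX_1)$ and $\cxtn(\bX_2)=\bX_2$ satisfy C.C.C. for every valid $\epsilon$-coloring $\cxon$; and (ii) any other $\epsilon$-coloring $\cxtn(\bX_2)$ is a deterministic function of $\bX_2$, so
\begin{equation}
H(\cxon(\bX_1)\mid \cxtn(\bX_2)) \;\geq\; H(\cxon(\bX_1)\mid \bX_2).
\end{equation}
Hence, for each fixed $\cxon$, choosing $\cxtn(\bX_2)=\bX_2$ yields the minimum value of the conditional entropy inside Definition \ref{def:cond}, and this choice is always feasible under C.C.C.\ by Lemma \ref{lem:triv}. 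Thus
\begin{equation}
\min \tfrac{1}{n} H(\cxon(\bX_1)\mid \cxtn(\bX_2)) \;=\; \min_{\cxon} \tfrac{1}{n} H(\cxon(\bX_1)\mid \bX_2),
\end{equation}
where on the right the minimum is over all $\epsilon$-colorings $\cxon$ of $\gxon$, i.e., exactly $\tfrac{1}{n} H^{\chi}_{\gxon}(\bX_1\mid \bX_2)$.

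Next I would pass to the limit $n\to\infty$ and invoke Theorem \ref{def:eq-cond}, which gives
\begin{equation}
\limtoi \tfrac{1}{n} H^{\chi}_{\gxon}(\bX_1\mid \bX_2) \;=\; H_{\gxo}(X_1\mid X_2),
\end{equation}
where the right-hand side is the conditional graph entropy of Definition \ref{def:or}. Chaining this with the reduction from the previous paragraph yields the claimed equality of Definitions \ref{def:or} and \ref{def:cond} at $k=2$.

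The only real subtlety is justifying step (i)--(ii) above, i.e., that the C.C.C.\ constraint does not prevent us from taking $\cxtn(\bX_2)=\bX_2$; Lemma \ref{lem:triv} is exactly the ingredient that removes this potential obstruction. Everything else is a standard ``conditioning reduces entropy'' argument together with the already-established K\"orner--Doshi identity (Theorem \ref{def:eq-cond}).
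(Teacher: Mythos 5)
Your proposal is correct and follows essentially the same route as the paper: the paper likewise collapses the minimization over $\cxtn$ to conditioning on $\bX_2$ (it cites the data processing inequality where you spell out the ``conditioning on a function of $\bX_2$'' step), uses Lemma \ref{lem:triv} to verify that the trivial coloring keeps C.C.C.\ satisfied, and finishes with Theorem \ref{def:eq-cond}. Your version just makes the feasibility-under-C.C.C.\ point more explicit.
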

\begin{proof}
By using the data processing inequality, we have
\begin{eqnarray} 
H_{\gxo}(X_1|X_2) &=& \limtoi \min_{\cxon,\cxtn} \frac{1}{n} H(\cxon(\bX_1)|\cxtn(\bX_2))\nonumber\\
 &=& \limtoi \min_{\cxon} \frac{1}{n} H(\cxon(\bX_1)|\bX_2).\nonumber
\end{eqnarray}

Then, Lemma \ref{lem:triv} implies that $\cxon(\bX_1)$ and $\cxtn(\bx_2)=\bX_2$ satisfy C.C.C. A direct application of Theorem \ref{def:eq-cond} completes the proof. 
\end{proof}

Note that, by this definition, \textit{the graph entropy does not satisfy the chain rule}.  

Suppose $\cS(k)$ denotes the power set of the set $\{1,2,...,k\}$ excluding the empty subset. Then, for any $S\in\cS(k)$,
\[ X_{S}\triangleq \{X_i:i\in S\}.\] Let $S^{c}$ denote the complement of $S$ in $\cS(k)$. For $S=\{1,2,...,k\}$, denote $S^{c}$ as the empty set.  To simplify notation, we refer to a subset of sources by $X_S$. For instance, $\cS(2)=\{\{1\},\{2\},\{1,2\}\}$, and for $S=\{1,2\}$, we write $H_{\sgs}(X_s)$ instead of $H_{\sgjt}(X_1,X_2)$.

\begin{thm}\label{th:gendoshi}
A rate region of the network shown in Figure \ref{fig:one-stage} is determined by these conditions:
\begin{equation}\label{eq:doshi}
\forall S\in\cS(k) \Longrightarrow \sum_{i\in S} R_{1i} \geq H_{\sgs} (X_{S}|X_{S^{c}}).\\
\end{equation}
\end{thm}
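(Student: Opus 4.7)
The plan is to establish the rate region by proving achievability and converse separately, and in each direction the key device is to think of source messages as $\epsilon$-colorings of the power characteristic graphs that jointly satisfy C.C.C.

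For achievability, fix small $\epsilon>0$ and large $n$. Using Definitions \ref{def:joint} and \ref{def:cond}, I would select $\epsilon$-colorings $c_{G_{\bX_i}^n}(\bX_i)$ of $G_{\bX_i}^n$ for $i=1,\dots,k$ such that (i) the $k$-tuple jointly satisfies C.C.C., and (ii) for every $S\in\cS(k)$,
\begin{equation*}
\frac{1}{n}H\bigl(c_{G_{\bX_S}^n}(\bX_S)\bigm|c_{G_{\bX_{S^c}}^n}(\bX_{S^c})\bigr)\leq H_{G_S}(X_S|X_{S^c})+\epsilon.
\end{equation*}
Each source $i$ first maps $\bX_i$ to the coloring $c_{G_{\bX_i}^n}(\bX_i)$ and then applies a Slepian--Wolf encoder to this discrete random variable. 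By the Slepian--Wolf theorem, any rate tuple with $\sum_{i\in S}R_{1i}$ strictly above the conditional entropy of the colorings allows joint recovery of all $k$ color labels at the receiver with vanishing error. The receiver then applies a lookup table: once all colors are decoded, the joint coloring class is identified, and Lemma \ref{lem:same-value} asserts that the value of $f$ is the same at every point of that class, so $f(\bX_1,\dots,\bX_k)$ is recovered correctly on the typical set. Letting $n\to\infty$ and $\epsilon\to 0$ yields (\ref{eq:doshi}).

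For the converse, consider any sequence of $n$-distributed functional codes with $P_e^n\to 0$. Since $\bM_{1i}$ depends only on $\bX_i$, it induces a partition of $\cX_i^n$. On the $\epsilon$-typical set $\Ten$, this partition must be a valid coloring of $G_{\bX_i}^n$: if two typical sequences $\bx_i^{(1)},\bx_i^{(2)}$ are adjacent in $G_{\bX_i}^n$, they are confusable via some common $(\bx_j)_{j\neq i}$, hence assigning them the same message would force a decoding error at one of them, contradicting $P_e^n\to 0$ except on a vanishing-probability set. Moreover, the whole $k$-tuple of induced colorings must satisfy C.C.C. on $\Ten$: if some joint coloring class had two points with different function values unreachable by a path within the class, the decoder could not distinguish them from the received message tuple, again contradicting vanishing error by Lemma \ref{lem:same-value}. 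Standard cut-set reasoning with Fano's inequality then gives, for each $S$,
\begin{equation*}
n\sum_{i\in S}R_{1i}\;\geq\;H(\bM_{1S}\mid\bM_{1S^c})\;\geq\;H\bigl(c_{G_{\bX_S}^n}(\bX_S)\bigm|c_{G_{\bX_{S^c}}^n}(\bX_{S^c})\bigr)-n\delta_n,
\end{equation*}
where the last step uses the data processing inequality together with the fact that the induced maps are $\epsilon$-colorings satisfying C.C.C. Dividing by $n$, minimizing the right-hand side over all such coloring tuples, and passing to the limit produces $\sum_{i\in S}R_{1i}\geq H_{G_S}(X_S|X_{S^c})$ by Definition \ref{def:cond}.

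The main obstacle is the converse, specifically the two-step identification of messages with valid $\epsilon$-colorings that satisfy C.C.C. The subtlety is that $\bM_{1i}$ is only required to be correct on a high-probability set, so one must argue on $\Ten$ and absorb the atypical contribution into $\delta_n$; moreover one must verify that C.C.C. is forced on the typical set by the error criterion, not merely by the code structure. Once this identification is made, the rest follows the familiar Slepian--Wolf converse machinery applied to the induced coloring random variables.
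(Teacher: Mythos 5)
Your proposal is correct and follows essentially the same route as the paper: achievability by sending $\epsilon$-colorings of power characteristic graphs satisfying C.C.C.\ through a Slepian--Wolf code and decoding via a lookup table justified by Lemma \ref{lem:same-value} (the paper packages this as Lemma \ref{lem:fhat}), and a converse showing that any $\epsilon$-achievable code induces $\epsilon$-colorings satisfying C.C.C.\ (the paper's Lemmas \ref{lem:zero-error} and \ref{lem:sec}). Your explicit Fano/data-processing step at the end of the converse merely spells out what the paper leaves implicit, so there is no substantive difference in approach.
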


\begin{proof}
We first show the achievability of this rate region. We also propose a modularized encoding/decoding scheme in this part. Then, for the converse, we show that no encoding/decoding scheme can outperform this rate region.

1)\textit{Achievability}: 

\begin{lem}\label{lem:fhat}
Consider random variables $\bX_1$, ..., $\bX_k$ with characteristic graphs $\gxon$, ..., $\gxkn$, and any valid $\epsilon$-colorings $\cxon$, ..., $\cxkn$ satisfying C.C.C., for sufficiently large $n$. There exists 
\begin{equation}
\hat{f}: \cxon(\cX_1)\times...\times\cxkn(\cX_k)\to\cZ^{n}
\end{equation}
such that $\hat{f}(\cxon(\bx_1),...,\cxkn(\bx_k))=f(\bx_1,...,\bx_k)$, for all $(\bx_1,...,\bx_k)\in\Ten$.
\end{lem}
\begin{proof}
Suppose the joint coloring family for these colorings is $J_C=\{j_c^i:i\}$. We proceed by constructing $\hat{f}$. Assume $(\bx_1^{1},...,\bx_k^{1})\in j_c^i$ and $\cxon(\bx_1^{1})=\sigma_1$, ..., $\cxon(\bx_k^{1})=\sigma_k$. Define $\hat{f}(\sigma_1,...\sigma_k)=f(\bx_1^{1},...,\bx_k^{1})$.

To show this function is well-defined on elements in its support, we should show that for any two points $(\bx_1^{1},...,\bx_k^{1})$ and $(\bx_1^{2},...,\bx_k^{2})$ in $\Ten$, if $\cxon(\bx_1^{1})=\cxon(\bx_1^{2})$, ..., $\cxkn(\bx_k^{1})=\cxkn(\bx_k^{2})$, then $f(\bx_1^{1},...,\bx_k^{1})=f(\bx_1^{2},...,\bx_k^{2})$.  

Since $\cxon(\bx_1^{1})=\cxon(\bx_1^{2})$, ..., $\cxkn(\bx_k^{1})=\cxkn(\bx_k^{2})$, these two points belong to a joint coloring class such as $j_c^i$. Since $\cxon$, ..., $\cxkn$ satisfy C.C.C., by using Lemma \ref{lem:same-value}, $f(\bx_1^{1},...,\bx_k^{1})=f(\bx_1^{2},...,\bx_k^{2})$. Therefore, our function $\hat{f}$ is well-defined and has the desired property.
\end{proof}
Lemma \ref{lem:fhat} implies that, given $\epsilon$-colorings of characteristic graphs of random variables satisfying C.C.C. at the receiver, we can successfully compute the desired function $f$ with a vanishing probability of error as $n$ goes to infinity. Thus, if the decoder at the receiver is given colors, it can look up $f$ based on its table of $\hat{f}$. The question remains of at which rates encoders can transmit these colors to the receiver faithfully (with a probability of error less than $\epsilon$).  

\begin{lem}{(Slepian-Wolf Theorem)}\label{lem:sw}

A rate-region of the network shown in Figure \ref{fig:one-stage} where $f(X_1,...,X_k)=(X_1,...,X_k)$ can be determined by these conditions:  
\begin{equation}\label{eq:doshi}
\forall S\in\cS(k) \Longrightarrow \sum_{i\in S} R_{1i} \geq H (X_{S}|X_{S^{c}}).\\
\end{equation}
\end{lem}

\begin{proof}
See \cite{sw73}.
\end{proof}

We now use the Slepian-Wolf (SW) encoding/decoding scheme on achieved coloring random variables. Suppose the probability of error in each decoder of SW is less than $\frac{\epsilon}{k}$. Then, the total error in the decoding of colorings at the receiver is less than $\epsilon$. Therefore, the total error in the coding scheme of first coloring $\gxon$, ..., $\gxkn$, and then encoding those colors by using SW encoding/decoding scheme is upper bounded by the sum of errors in each stage. By using Lemmas \ref{lem:fhat} and \ref{lem:sw}, the total error is less than $\epsilon$, and goes to zero as $n$ goes to infinity. By applying Lemma \ref{lem:sw} on achieved coloring random variables, we have,  
    
\begin{equation}\label{eq:doshi}
\forall S\in\cS(k) \Longrightarrow \sum_{i\in S} R_{1i} \geq \frac{1}{n} H (\cxsn|\cxscn),
\end{equation}
where $\cxsn$, and $\cxscn$ are $\epsilon$-colorings of characteristic graphs satisfying C.C.C. Thus, using Definition \ref{def:cond} completes the achievability part.

As an example, look at Figure \ref{fig:3part}-c. This network has two source nodes and a receiver. Source nodes compute $\epsilon$-colorings of their characteristic graphs. These colorings should satisfy C.C.C. Then, an SW compression is performed on these colorings. The receiver, first, perform SW decoding to get the colors. Then, by using a look-up table, it can find the value of its desired function (As an example, look at Figure \ref{fig:look-up-ex}).  

\textit{2) Converse}: Here, we show that any distributed functional source coding scheme with a small probability of error induces $\epsilon$-colorings on characteristic graphs of random variables satisfying C.C.C. Suppose $\epsilon > 0$. Define $\cFen$ for all $(n,\epsilon)$ as follows,
\begin{equation}
\cFen=\{\hat{f}:Pr[\hat{f}(\bX_1,...,\bX_k)\neq f(\bX_1,...,\bX_k)]<\epsilon\}.
\end{equation}

In other words, $\cFen$ is the set of all functions equal to $f$ with $\epsilon$ probability of error. For large enough $n$, all achievable functional source codes are in $\cFen$. We call these codes $\epsilon$-achievable functional codes.

\begin{lem}\label{lem:zero-error}
Consider some function $f:\cX_1\times...\times\cX_k\to\cZ$. Any distributed functional code which reconstructs this function with zero error probability induces colorings on $\gxo$,...,$\gxk$ with respect to this function, where these colorings satisfy C.C.C. 
\end{lem}

\begin{proof}
To show this lemma, let us assume we have a zero-error distributed functional code represented by encoders $en_{X_1}$, ..., $en_{X_k}$ and a decoder $r$. Since it is error free, for any two points $(x_1^{1},...,x_k^{1})$ and $(x_1^{2},...,x_k^{2})$, if $p(x_1^{1},...,x_k^{1})>0$, $p(x_1^{2},...,x_k^{2})>0$, $en_{X_1}(x_1^{1})=en_{X_1}(x_1^{2})$, ..., $en_{X_k}(x_k^{1})=en_{X_k}(x_k^{2})$, then,
\begin{equation}\label{eq:r}
f(x_1^{1},...,x_k^{1})=f(x_1^{2},...,x_k^{2})=r'(en_{X_1}(x_1^{1}),...,en_{X_k}(x_k^{1})).
\end{equation}

We want to show that $en_{X_1}$, ..., $en_{X_k}$ are some valid colorings of $\gxo$, ..., $\gxk$ satisfying C.C.C. We demonstrate this argument for $X_1$. The argument for other random variables is analogous. First, we show that $en_{X_1}$ induces a valid coloring on $\gxo$, and then, we show that this coloring satisfies C.C.C. Let us proceed by contradiction. If $en_{X_1}$ did not induce a coloring on $\gxo$, there must be some edge in $\gxo$ with both vertices with the same color. Let us call these vertices $x_1^{1}$ and $x_1^{2}$. Since these vertices are connected in $\gxo$, there must exist a $(x_2^{1},...,x_k^{1})$ such that, $p(x_1^{1},x_2^{1},...,x_k^{1})p(x_1^{2},x_2^{1},...,x_k^{1})>0$,  $en_{X_1}(x_1^{1})=en_{X_1}(x_1^{2})$, and $f(x_1^{1},x_2^{1},...,x_k^{1})\neq f(x_1^{2},x_2^{1},...,x_k^{1})$. By taking $x_2^{1}=x_2^{2}$, ..., $x_k^{1}=x_k^{2}$ in (\ref{eq:r}), one can see that it is not possible. So, the contradiction assumption is wrong and $en_{X_1}$ induces a valid coloring on $\gxo$. 

Now, we should show that these induced colorings satisfy C.C.C. If it was not true, it would mean that there must exist two point $(x_1^{1},...,x_k^{1})$ and $(x_1^{2},...,x_k^{2})$ in a joint coloring class $j_c^i$ such that there is no path between them in $j_c^i$. So, Lemma \ref{lem:equal} says that the function $f$ can get different values in these two points. In other words, it is possible to have $f(x_1^{1},...,x_k^{1})\neq f(x_1^{2},...,x_k^{2})$, where $\cxo(x_1^{1})=\cxo(x_1^{2})$, ..., $\cxk(x_k^{1})=\cxk(x_k^{2})$, which is in contradiction with (\ref{eq:r}). Thus, these colorings satisfy C.C.C.   
\end{proof}
In the last step, we should show that any achievable functional code represented by $\cFen$ induces $\epsilon$-colorings on characteristic graphs satisfying C.C.C.

\begin{lem}\label{lem:sec}
Consider random variables $\bX_1$, ..., $\bX_k$. All $\epsilon$-achievable functional codes of these random variables induce $\epsilon$-colorings on characteristic graphs satisfying C.C.C.
\end{lem}

\begin{proof} 
Suppose $g(\bx_1,...,\bx_k)=r'(en_{X_1}(\bx_1),...,en_{X_k}(\bx_k))\in\cFen$ is such a code. Lemma \ref{lem:zero-error} says that a zero-error reconstruction of $g$ induces some colorings on characteristic graphs satisfying C.C.C., with respect to $g$. Suppose the set of all points $(\bx_1,...,\bx_k)$ such that $g(\bx_1,...,\bx_k)\neq f(\bx_1,...,\bx_k)$ be denoted by $\cC$. Since $g\in\cFen$, $Pr[\cC]<\epsilon$. Therefore, functions $en_{X_1}$, ..., $en_{X_k}$ restricted to $\cC$ are $\epsilon$-colorings of characteristic graphs satisfying C.C.C. (by definition).   
\end{proof}
Lemmas \ref{lem:zero-error} and \ref{lem:sec} establish the converse part and complete the proof.
\end{proof}

If we have two transmitters ($k=2$), Theorem \ref{th:gendoshi} can be simplified as follows.

\begin{cor}\label{th:doshi}
A rate region of the network shown in Figure \ref{fig:3part}-b is determined by these three conditions:
\begin{eqnarray}\label{eq:doshi}
&R_{11}& \geq H_{\gxo} (X_{1}|X_{2})\nonumber\\
&R_{12}& \geq H_{\gxt} (X_{2}|X_{1}) \\
&R_{11}& + R_{12} \geq H_{\sgjt} (X_{1},X_{2}). \nonumber
\end{eqnarray}
\end{cor}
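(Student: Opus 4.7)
The plan is to obtain Corollary \ref{th:doshi} as an immediate specialization of Theorem \ref{th:gendoshi} to the two-source case, so no new technical machinery is required; the work is simply to enumerate the subsets of $\{1,2\}$ and to identify each conditional graph entropy that appears with one of the quantities named in the corollary.

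First I would write out $\cS(2)=\{\{1\},\{2\},\{1,2\}\}$ explicitly and, for each $S\in\cS(2)$, specialize the inequality $\sum_{i\in S}R_{1i}\geq H_{\sgs}(X_S\mid X_{S^c})$ from Theorem \ref{th:gendoshi}. For $S=\{1\}$ this gives $R_{11}\geq H_{G_{X_1}}(X_1\mid X_2)$; for $S=\{2\}$, $R_{12}\geq H_{G_{X_2}}(X_2\mid X_1)$; and for $S=\{1,2\}$, using the convention established just before the theorem that $S^c$ is the empty set in this case, $R_{11}+R_{12}\geq H_{\sgjt}(X_1,X_2)$, which is exactly the joint graph entropy of Definition \ref{def:joint}.

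The only point that needs a small extra remark is that the two single-source bounds are stated in the corollary using the notation $H_{G_{X_1}}(X_1\mid X_2)$ and $H_{G_{X_2}}(X_2\mid X_1)$ of Definition \ref{def:or}, whereas Theorem \ref{th:gendoshi} produces the corresponding quantities in the sense of Definition \ref{def:cond}. I would invoke the lemma already proved in this section (the one showing that for $k=2$ Definitions \ref{def:or} and \ref{def:cond} agree) to identify the two expressions, at which point the three inequalities match exactly those in the statement.

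The main ``obstacle'' is really only bookkeeping: making sure the indexing of $\cS(k)$, the convention $S^c=\emptyset$ when $S=\{1,2\}$, and the equivalence of the two conditional graph-entropy definitions for $k=2$ are all invoked cleanly, so that the specialization is unambiguous. Once those are in place, the achievability and converse for the two-source region are inherited directly from the achievability (the coloring plus Slepian--Wolf construction) and converse (induced $\epsilon$-colorings satisfying C.C.C.) already established in the proof of Theorem \ref{th:gendoshi}, with no further estimates needed.
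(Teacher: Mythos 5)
Your proposal is correct and matches the paper's treatment: the paper presents this corollary as an immediate specialization of Theorem \ref{th:gendoshi} to $k=2$, with no additional argument beyond enumerating $\cS(2)$ and reading off the three inequalities. Your extra care in invoking the lemma identifying Definitions \ref{def:or} and \ref{def:cond} for $k=2$ is a welcome bit of explicitness that the paper leaves implicit.
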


 \begin{figure}[t]
	\centering
    \includegraphics[width=12cm,height=7cm]{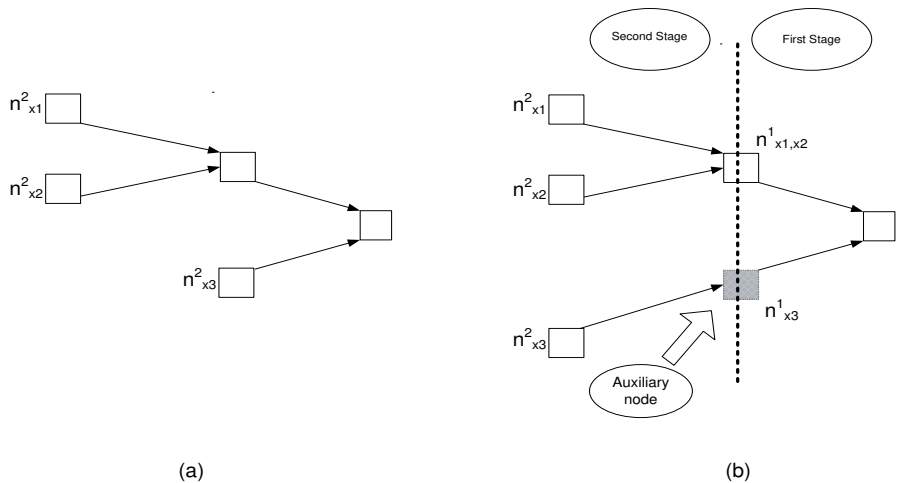}
    \caption{ (a) A simple tree topology. (b) A completed tree.}
    \label{fig:sample-network}
  \end{figure}

\subsection{A Rate Lower Bound for a General Tree Network}\label{sec:tree}

In this section, we compute a rate lower bound of an arbitrary tree network with $k$ correlated sources at its leaves and a receiver in its root (see Figure \ref{fig:general-tree}). We refer to other nodes of this tree as intermediate nodes. The receiver wishes to compute a deterministic function of source random variables. Intermediate nodes have no demand of their own, but they are allowed to perform computation. Computing the desired function $f$ at the receiver is the only demand of the network. For independent sources, we propose a modularized coding scheme to perform arbitrarily closely to derived rate lower bounds.

First, we propose a framework to categorize any tree networks and their nodes.

\begin{defn}
For an arbitrary tree network,
\begin{itemize}
\item The \textit{distance} of each node is the number of hops in the path between that node and the receiver.
\item $d_{max}$ is the distance of the farthest node from the receiver. 
\item A \textit{complete tree} is a tree such that all source nodes are in a distance $d_{max}$ from the receiver.
\item An \textit{auxiliary node} is a new node connected to a leaf of a tree and increases the leaf's distance by one. The added link is called an \textit{auxiliary link}. The leaf in the original tree to which is added an auxiliary node is called the actual node corresponding to that auxiliary node. The link in the original tree connected to the actual node is called the actual link corresponding to that auxiliary link.
\item For any given tree, one can complete it by adding some consecutive auxiliary nodes to its leaves whose distances are less than $d_{max}$. The achieved tree is called the \textit{completed tree} and this process is called \textit{tree completion}.  
\end{itemize}
\end{defn}  

These concepts are depicted in Figure \ref{fig:sample-network}. Auxiliary nodes in the completed tree network act as intermediate nodes. Note that, all functions that may be computed in auxiliary nodes can be gathered in their corresponding actual node in the original tree. So, the rate of the actual link in the original tree network is the minimum of rates of corresponding auxiliary links in the completed tree. Thus, if we compute the rate-region for the completed tree of any given arbitrary tree, we can compute the rate-region of the original tree. Therefore, in the rest of this section, we consider the rate-region of completed tree networks.    

\begin{defn}
Consider a completed tree network with $k$ source nodes in distance $d_{max}$ from the receiver. Nodes in distance $i$ from the receiver are called the $i^{th}$ stage of the tree. $w_i$ is the number of nodes in the $i^{th}$ stage. $\pij$ is a subset of source random variables whose paths to the receiver have the last $i$ common hops. $\pij$ is called source variables of node $n_{ij}$. A connection set of a completed tree is defined as $S_T=\{s_t^i:1 \leq i \leq d_{max}\}$, where $s_t^i=\{\pij:1 \leq j\leq w_i\}$. Note that, any completed tree can be expressed by a connection set.  
\end{defn}

For example, consider the network shown in Figure \ref{fig:sample-network}-b. Its connection set is $S_T=\{s_t^1,s_t^2\}$ such that $s_t^1=\{(X_1,X_2),X_3\}$ and $s_t^2=\{X_1,X_2,X_3\}$. In other words, $\Xi_{11}=(X_1,X_2)$, $\Xi_{12}=X_3$, $\Xi_{21}=X_1$, $\Xi_{22}=X_2$ and $\Xi_{23}=X_3$. One can see that $S_T$ completely describes the structure of the tree.

We have three types of nodes: source nodes, intermediate nodes and a receiver. Source nodes process their messages and transmit them. Intermediate nodes can compute some functions of their received information. The receiver processes the received information to compute its desired function. For example, consider the network shown in Figure \ref{fig:sample-network}-b. Random variables sent through links $e_{21}$, $e_{22}$, $e_{23}$, $e_{11}$ and $e_{12}$ are $M_{21}$, $M_{22}$, $M_{23}$, $M_{11}$ and $M_{12}$ such that $M_{11}=g_{11}(M_{21},M_{22})$, and $M_{12}=g_{12}(M_{23})$.    

\subsubsection{A Rate Lower Bound}
Consider node $n_{ij}$ of a tree. Let $\cS(w_i)$ be the power set of the set $\{1,2,...,w_i\}$ and $s_i\in\cS(w_i)$ be a non-empty subset of $\{1,2,...,w_i\}$.

\begin{thm}\label{thm:tree}
A rate lower bound of a tree network with the connection set $S_T=\{s_t^i:i\}$ can be determined by these conditions,
\begin{equation}\label{eq:rate-one-stage}
\forall s_i\in\cS(w_i) \Longrightarrow \sum_{j\in s_i} R_{ij}\geq H_{\sgxi}(\pis|\pisc)
\end{equation}
for all $i=1,...,|S_T|$ where $\pis=[\pij]_{j\in s_i}$ and $\pisc=\{X_1,...,X_k\}\backslash\{\pis\}$.
\end{thm}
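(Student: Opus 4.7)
The plan is to reduce each stage of the tree to an auxiliary one-stage network and then invoke Theorem~\ref{th:gendoshi}. The bound in (\ref{eq:rate-one-stage}) separates over stages, so it is enough to establish the inequality for a single fixed stage $i$ and then iterate over $i=1,\ldots,|S_T|$.

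First I would fix the stage $i$ and construct an auxiliary one-stage tree in which $w_i$ super-sources $\tilde{n}_{i1},\ldots,\tilde{n}_{iw_i}$ are connected directly to the receiver, where $\tilde{n}_{ij}$ holds the entire random vector $\pij$ of source random variables whose path to the receiver passes through node $n_{ij}$. Since $\{\pij\}_{j=1}^{w_i}$ partitions $\{X_1,\ldots,X_k\}$, the desired function $f(X_1,\ldots,X_k)$ is a function of the tuple $(\Xi_{i1},\ldots,\Xi_{iw_i})$, and the characteristic graph $G_{\pij}$ of the super-source $\pij$ with respect to $f$ and the remaining super-sources is well defined.

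Next I would show that any $n$-distributed functional code for the original tree induces an $n$-distributed functional code for the auxiliary network at the same rates $\{R_{ij}\}_{j=1}^{w_i}$. By definition, the message $\mathbf{M}_{ij}$ leaving node $n_{ij}$ is a deterministic function of the messages arriving from its children, and by induction on the stages below $i$ those messages depend only on the source random variables lying in the subtree rooted at $n_{ij}$, i.e., exactly those in $\pij$. Hence one can define a single encoder $\tilde{en}_{ij}\colon \pij^n \to \{1,\ldots,2^{nR_{ij}}\}$ that produces the same $\mathbf{M}_{ij}$, while reusing the encoders above stage $i$ and the decoder unchanged. The probability of error is preserved.

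Finally, I would apply the converse direction of Theorem~\ref{th:gendoshi} to the auxiliary one-stage tree, with the role of the sources $X_1,\ldots,X_k$ played by the super-sources $\pij$ for $j=1,\ldots,w_i$ and $\cS(k)$ replaced by $\cS(w_i)$. This yields
\begin{equation*}
\sum_{j\in s_i} R_{ij} \;\geq\; H_{\bigcup_{j\in s_i} G_{\pij}}(\pis\mid\pisc)
\end{equation*}
for every non-empty $s_i\in\cS(w_i)$, which matches the claimed bound. The main point requiring care is the induction in the second paragraph: verifying that the composite map from the sources in $\pij$ up through node $n_{ij}$ is a legitimate encoder in the auxiliary network. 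This is a purely structural fact about tree topologies (each subtree has disjoint leaf-sources), so it should be routine; the remaining work is a direct appeal to the one-stage converse already established.
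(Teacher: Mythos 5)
Your proposal is correct and follows essentially the same route as the paper: for each stage $i$, collapse the subtree below node $n_{ij}$ into a super-source holding $\pij$, observe that the link messages at that stage (and the receiver's output) depend only on these super-sources, and invoke the converse of Theorem~\ref{th:gendoshi} on the resulting one-stage network. The only difference is that you spell out the inductive argument that $\bM_{ij}$ is a deterministic function of the sources in the subtree rooted at $n_{ij}$, a step the paper leaves implicit.
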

\begin{proof}
In this part, we want to show that no coding scheme can outperform this rate region. Consider nodes in the $i$-th stage of this network, $n_{ij}$ for $1 \leq j\leq w_i$. Suppose they are directly connected to the receiver. So, the information sent in links of this stage should be sufficient to compute the desired function. Suppose their parent nodes sent all their information without doing any compression. So, by direct application of Theorem \ref{th:gendoshi}, (\ref{eq:rate-one-stage}) can be derived.
This argument can be repeated for all stages. Thus, no coding scheme can outperform these bounds.
\end{proof}
In the following, we express some cases under which we can achieve the derived rate lower bound of Theorem \ref{thm:tree}. 
\subsubsection{Tightness of the Rate Lower Bound for Independent Sources}\label{ex:computation}
In this part, we propose a functional coding scheme to achieve the rate lower bound. Suppose random variables $\bX_1$, ..., $\bX_k$ with characteristic graphs $\gxon$, ..., $\gxkn$ are independent. Assume $\cxon$, ..., $\cxkn$ are valid $\epsilon$-colorings of these characteristic graphs satisfying C.C.C. The proposed coding scheme can be described as follows: source nodes first compute colorings of high probability subgraphs of their characteristic graphs satisfying C.C.C., and then, perform source coding on these coloring random variables. Intermediate nodes first compute their parents' coloring random variables, and then, by using a look-up table, find corresponding source values of their received colorings. Then, they compute $\epsilon$-colorings of their own characteristic graphs. The corresponding source values of their received colorings form an independent set in the graph. If all are assigned to a single color in the minimum entropy coloring, intermediate nodes send this coloring random variable followed by a source coding. But, if vertices of this independent set are assigned to different colors, intermediate nodes send the coloring with the lowest entropy followed by source coding (Slepian-Wolf). The receiver first performs a minimum entropy decoding (\cite{kornerbook}) on its received information and achieves coloring random variables. Then, it uses a look-up table to compute its desired function by using achieved colorings.  

To show the achievability, we show that, if nodes of each stage were directly connected to the receiver, the receiver could compute its desired function. Consider the node $n_{ij}$ in the $i$-th stage of the network. Since the corresponding source values of its received colorings form an independent set on its characteristic graph ($G_{\pij}$) and this node computes the minimum entropy of this graph, it is equivalent to the case where it would receive the exact source information, because both of them lead to the same coloring random variable. So, if all nodes of stage $i$ were directly connected to the receiver, the receiver could compute its desired function and link rates would satisfy the following conditions.     

\begin{equation}
\forall s_i\in\cS(w_i) \Longrightarrow\sum_{j\in s_i} R_{ij}\geq H_{\sgxi}(\pis).
\end{equation}

Thus, by using a simple induction argument, one can see that the proposed scheme is achievable and it can perform arbitrarily closely to the derived rate lower bound, while sources are independent. 

\subsection{A Case When Intermediate Nodes Do not Need to Compute}

Though the proposed coding scheme in Section \ref{ex:computation} can perform arbitrarily closely to the rate lower bound, it may require computation at intermediate nodes.
\begin{defn}
Suppose $f(X_1,...,X_k)$ is a deterministic function of random variables $X_1$,...,$X_k$. $(f,X_1,...,X_k)$ is called a chain-rule proper set when for any $s\in\cS(k)$, $H_{\sgs}=H_{G_{X_s}}(X_s)$.
\end{defn}

\begin{thm}
In a general tree network, if sources $X_1$,...,$X_k$ are independent random variables and $(f,X_1,...,X_k)$ is a chain-rule proper set, it is sufficient to have intermediate nodes as relays to perform arbitrarily closely to the rate lower bound mentioned in Theorem \ref{thm:tree}. 
\end{thm}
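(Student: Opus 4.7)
The plan is to exhibit a coding scheme whose intermediate nodes act purely as relays and then show that, under independence and the chain-rule-proper hypothesis, its link rates hit the Theorem~\ref{thm:tree} lower bound arbitrarily closely.

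First I would specify the scheme. Each leaf $\ell$ computes a minimum-entropy $\epsilon$-coloring $c_{G_{\bX_\ell}^n}(\bX_\ell)$ of a large power of its own characteristic graph and Slepian--Wolf encodes the resulting color; C.C.C.\ for each leaf is immediate from Lemma~\ref{lem:triv}. Every intermediate node simply concatenates and forwards all of its incoming messages, so its outgoing rate is the sum of its incoming rates. The receiver jointly Slepian--Wolf decodes, recovering the full bundle of leaf colors, and then applies the look-up table $\hat f$ supplied by Lemma~\ref{lem:fhat}. Achievability follows because this bundle is exactly a valid $\epsilon$-coloring of $\bigcup_{\ell=1}^{k} G_{\bX_\ell}^n$ satisfying C.C.C., so Lemmas~\ref{lem:fhat} and~\ref{lem:sw} guarantee vanishing error.

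Next comes the rate accounting. Under independence the minimum-entropy leaf colorings are independent random variables, so by Theorem~\ref{def:eq-uncond} the rate on any outgoing link $e_{ij}$ is, up to $\epsilon$,
\[
R_{ij}\;=\;\sum_{\ell\in\Xi_{ij}} H_{G_{X_\ell}}(X_\ell)\;=\;H_{\bigcup_{\ell\in\Xi_{ij}} G_{X_\ell}}(\Xi_{ij}).
\]
Summing over any $s_i\in\cS(w_i)$ gives
\[
\sum_{j\in s_i} R_{ij}\;=\;\sum_{\ell\in\Xi_{is_i}} H_{G_{X_\ell}}(X_\ell)\;=\;H_{\bigcup_{\ell\in\Xi_{is_i}} G_{X_\ell}}(\Xi_{is_i}),
\]
and the chain-rule-proper hypothesis applied to $\Xi_{is_i}\subseteq\{X_1,\ldots,X_k\}$ converts the right-hand side into $H_{G_{\Xi_{is_i}}}(\Xi_{is_i})$. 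Finally, independence of $\Xi_{is_i}$ from $\Xi_{is_i^c}$ lets us re-insert the conditioning at no cost, matching the Theorem~\ref{thm:tree} bound exactly up to $\epsilon$.

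The main obstacle will be justifying the middle equality, namely that independence really does allow the joint graph entropy $H_{\bigcup_\ell G_{X_\ell}}$ to split as a sum of individual $H_{G_{X_\ell}}$'s. What has to be checked is that the product of individually minimum-entropy $\epsilon$-colorings still satisfies C.C.C., so that the product is admissible in the infimum defining Definition~\ref{def:joint}. Under independence the joint support is a product set, so given any two members of a joint coloring class one can walk from one to the other by changing a single coordinate at a time; each intermediate point has positive probability and, because it shares each coordinate with an endpoint that already carried the prescribed color, preserves all color assignments along the path. Hence C.C.C.\ is automatic, the product coloring is legitimate, and the chain-rule-proper hypothesis does the rest.
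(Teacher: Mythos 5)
Your proposal is correct and follows essentially the same route as the paper: relay at intermediate nodes, use independence to drop the conditioning in the Theorem~\ref{thm:tree} bounds, and invoke the chain-rule-proper hypothesis to convert $H_{\bigcup_{\ell\in s} G_{X_\ell}}(X_s)$ into $H_{G_{X_s}}(X_s)$. The one place you go beyond the paper is the explicit verification that the product of individually minimum-entropy colorings satisfies C.C.C.\ under independence (via the coordinate-by-coordinate walk on the product support), so that the relay rate $\sum_{\ell\in s}H_{G_{X_\ell}}(X_\ell)$ really equals the joint graph entropy of the union --- a step the paper's proof simply asserts when it writes $R_{ij}=H_{\bigcup_{i\in s}G_{X_i}}(X_s)$ for a relaying node.
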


\begin{proof}
Consider an intermediate node $n_{ij}$ in the $i$-th stage of the network whose corresponding source random variables are $X_s$ where $s\in\cS(k)$ (i.e., $X_s=\pij$). Since random variables are independent, one can write rate bounds of Theorem \ref{thm:tree} as,
\begin{equation}\label{eq:uncond}
\forall s_i\in\cS(w_i) \Longrightarrow\sum_{j\in s_i} R_{ij}\geq H_{\sgxi}(\pis).
\end{equation}

Now, consider the outgoing link rate of the node $n_{ij}$. If this intermediate node acts like a relay, we have $R_{ij}=H_{\sgs(X_s)}$ (since $X_s=\pij$). If $(f,X_1,...,X_k)$ is a chain-rule proper set, we can write, 
\begin{eqnarray}
R_{ij}&=&H_{\sgs}(X_s)\nonumber\\
&=&H_{G_{X_s}}(X_s)\nonumber\\
&=&H_{G_{\pij}}(\pij).
\end{eqnarray}

For any intermediate node $n_{ij}$ where $j\in s_i$ and $s_i\in\cS(w_i)$, we can write a similar argument which lead to conditions (\ref{eq:uncond}).
As mentioned in Theorem \ref{thm:tree}, to perform arbitrarily closely to the rate lower bound, this node needs to compress its received information up to the rate $H_{G_{X_s}}(X_s)$. If this node acted as a relay and forwarded the received information from the previous stage, it would lead to an achievable rate $H_{\bigcup_{i\in s}G_{X_i}}(X_s)$ in the next stage, which in general is not equal to $H_{G_{X_s}}(X_s)$. So, this scheme cannot achieve the rate lower bound. However, if for any $s\in\cS(k)$, $H_{\bigcup_{i\in s}G_{X_i}}(X_s)=H_{G_{X_s}}(X_s)$, this scheme can perform arbitrarily closely to the rate lower bound by having intermediate nodes as relays.   
\end{proof}

In the following Lemma, we provide a sufficient condition to guarantee that a set is a chain-rule proper set.

\begin{lem}\label{thm:proper}
Suppose $X_1$ and $X_2$ are independent and $f(X_1,X_2)$ is a deterministic function. If for any $x_2^1$ and $x_2^2$ in $\cX_2$ we have $f(x_1^i,x_2^1)\neq f(x_1^j,x_2^2)$ for any possible $i$ and $j$, then, $(f,X_1,X_2)$ is a chain-rule proper set. 
\end{lem}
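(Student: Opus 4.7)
The plan is to exploit the hypothesis to force every characteristic graph in sight to be complete on its relevant support, which collapses each graph entropy to an ordinary Shannon entropy; the chain-rule-proper equality then reduces to the ordinary chain rule for independent random variables.

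First, I interpret the hypothesis as asserting that $f$ is injective on the joint support of $(X_1,X_2)$: any two distinct support pairs yield distinct values of $f$. Under this reading, $\gxo$ is complete on the support of $X_1$, because for any two distinct $x_1^a, x_1^b$ in the support, picking any $x_2$ in the support of $X_2$ (which exists by independence) yields $f(x_1^a, x_2) \neq f(x_1^b, x_2)$ and hence an edge. The symmetric argument makes $\gxt$ complete on the support of $X_2$, and the joint characteristic graph $\gjt$ places an edge between every pair of distinct support points.

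Second, a complete graph on its support has only singleton maximal independent sets, so Definition~\ref{def:korner} gives $H_{\gxo}(X_1) = H(X_1)$, $H_{\gxt}(X_2) = H(X_2)$, and $H_{\gjt}(X_1,X_2) = H(X_1,X_2)$; the same holds for the $\epsilon$-chromatic entropies of the $n$-th powers in the limit, because the $n$-th power of a graph complete on its support is essentially complete on strongly typical sequences, so any valid $\epsilon$-coloring of $\gxon$ must separate them and $\frac{1}{n}H_{\gxon}^{\chi}(\bX_1) \to H(X_1)$, and analogously for $\bX_2$. I then verify chain-rule properness for every $s \in \cS(2)$: the singleton cases are tautological since only one characteristic graph appears in the union $\sgs$; for $s = \{1,2\}$, I use Definition~\ref{def:joint} directly, noting that $(\cxon(\bX_1), \cxtn(\bX_2))$ satisfies C.C.C.\ automatically (by Lemma~\ref{lem:triv}, since $\cxtn$ is essentially the identity) and that independence splits the joint coloring entropy as $\frac{1}{n} H(\cxon(\bX_1)) + \frac{1}{n} H(\cxtn(\bX_2))$. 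Taking minima and limits gives $H_{\sgjt}(X_1, X_2) = H(X_1) + H(X_2) = H(X_1,X_2) = H_{\gjt}(X_1, X_2)$.

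The main obstacle is the passage from single-letter completeness of $\gxo$ and $\gxt$ to the limiting chromatic-entropy statements: one needs to show that the minimum-entropy $\epsilon$-coloring of $\gxon$ has entropy arbitrarily close to $H(\bX_1)$, not merely upper-bounded by it. Once this technicality is settled via a typical-sequence argument, the entire lemma collapses into the ordinary chain rule under independence.
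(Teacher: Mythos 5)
There is a genuine gap, and it is in the very first step: you have read the hypothesis as saying that $f$ is injective on the joint support of $(X_1,X_2)$, but the condition only quantifies over pairs of \emph{distinct} elements $x_2^1\neq x_2^2$ of $\cX_2$ and asks that $f(x_1^i,x_2^1)\neq f(x_1^j,x_2^2)$ for all $i,j$; it says nothing about two different values of $X_1$ paired with the \emph{same} $x_2$. (The condition cannot be read as including $x_2^1=x_2^2$, since taking $i=j$ there would make it unsatisfiable.) Concretely, $f(x_1,x_2)=x_2$ satisfies the hypothesis, yet $\gxo$ is the empty graph, not the complete graph. So your claim that $\gxo$ is complete on the support of $X_1$ --- the claim on which the entire collapse to ordinary Shannon entropies and the ordinary chain rule rests --- does not follow. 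What you have proved is the lemma under the strictly stronger hypothesis of full injectivity, which trivializes the statement (every graph entropy in sight becomes a Shannon entropy); the actual content of the lemma is that $H_{\sgjt}(X_1,X_2)=H_{\gjt}(X_1,X_2)$ holds even when $\gxo$ is an arbitrary graph.

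The paper's argument instead works with the structure of the joint characteristic graph: under the hypothesis, distinct values of $x_2$ always produce distinct function values, so $\gxt$ is complete and $\gjt$ decomposes into $|\cX_2|$ slices, each a copy of $\gxo$, with any two distinct slices fully connected to each other. Consequently every valid coloring of $\gjt$ corresponds exactly to a pair consisting of a valid coloring of $\gxo$ and the trivial (all-distinct) coloring of the complete graph $\gxt$, and conversely; minimizing entropy on one side therefore minimizes it on the other, which is precisely chain-rule properness for $s=\{1,2\}$ (the singleton cases being tautological, as you correctly note). If you want to repair your proof, replace ``$\gxo$ is complete'' by this decomposition of $\gjt$; the parts of your write-up about C.C.C.\ holding automatically via Lemma~\ref{lem:triv} and about independence splitting the joint coloring entropy can then be reused, and your closing remark about passing from the single-letter graphs to the $n$-th powers and the limits in the definitions of the joint and ordinary graph entropies identifies a technical step that the paper itself also leaves largely implicit.
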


 \begin{figure}[t]
	\centering
    \includegraphics[width=8.5cm,height=5cm]{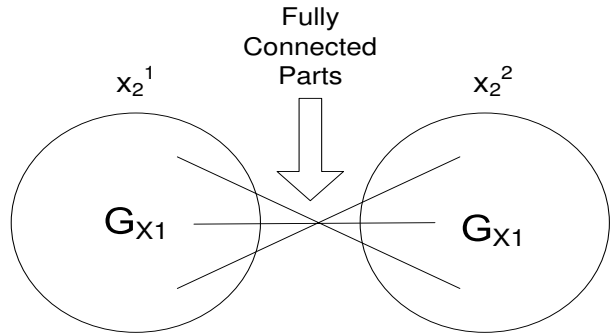}
    \caption{An example of $G_{X_1,X_2}$ satisfying conditions of Lemma \ref{thm:proper}, when $\cX_2$ has two members. }
    \label{fig:color-proper-func}
  \end{figure}

\begin{proof}
We show that under this condition any colorings of the graph $G_{X_1,X_2}$ can be expressed as colorings of $G_{X_1}$ and $G_{X_2}$, and vice versa. The converse part is straightforward because any colorings of $G_{X_1}$ and $G_{X_2}$ can be viewed as a coloring of $G_{X_1,X_2}$.  

Consider Figure \ref{fig:color-proper-func} which illustrates conditions of this lemma. Under these conditions, since all $x_2$ in $\cX_2$ have different function values, graph $G_{X_1,X_2}$ can be decomposed to some subgraphs which have the same topology as $G_{X_1}$, corresponding to each $x_2$ in $\cX_2$. These subgraphs are fully connected to each other under conditions of Corollary \ref{thm:proper}. Thus, any coloring of this graph can be represented as two colorings of $G_{X_1}$ and $G_{X_2}$, which is a complete graph. Hence, the minimum entropy coloring of $G_{X_1,X_2}$ is equal to the minimum entropy coloring of $(G_{X_1},G_{X_2})$. Therefore, $H_{G_{X_1},G_{X_2}}(X_1,X_2)=H_{G_{X_1,X_2}}(X_1,X_2)$.  
\end{proof}

\section{Multi-Functional Compression with Side Information}\label{chap:multi}

In this section, we consider the problem of multi-functional compression with side information. The problem is how we can compress a source $X$ so that the receiver is able to compute some deterministic functions $f_1(X,Y_1)$, ..., $f_m(X,Y_m)$, where $Y_i$, $1\leq i\leq m$, are available at the receiver as side information.

Section \ref{chap:tree} only considers the case where the receiver desires to compute one function (m=1). Here, we consider a case where computation of several functions with different side information is desired at the receiver. Our results do not depend on the fact that all desired functions are in one receiver and one can apply them to the case of having several receivers with different desired functions (i.e., functions are separable). We define a new concept named the \textit{multi-functional graph entropy} which is an extension of the graph entropy defined by K\"orner in \cite{k73}. We show that the minimum achievable rate for this problem is equal to the conditional multi-functional graph entropy of the source random variable given side informations. We also propose a coding scheme based on graph colorings to achieve this rate.

\begin{figure}[t]
	\centering
    \includegraphics[width=9.5cm,height=4cm]{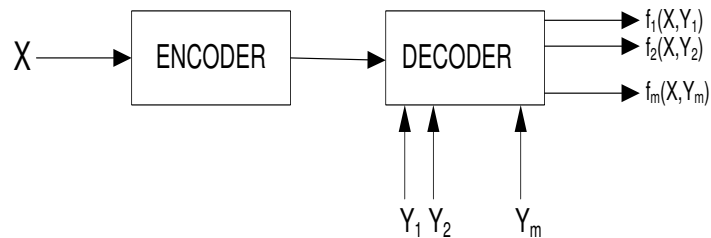}
    \caption{A multi-functional compression problem with side information.}
    \label{fig:sidem}
  \end{figure}

\subsection{Problem Setup} 
Consider discrete memoryless sources (i.e., $\{X^i\}_{i=1}^{\infty}$ and $\{Y_k^i\}_{i=1}^{\infty}$ ) and assume that these sources are drawn from finite sets $\cX$ and $\cY_k$ with a joint distribution $p_k(x,y_k)$. We express $n$-sequence of these random variables as $\mathbf{{X}} = \{X^i\}_{i=l}^{i=l+n-1}$ and $\mathbf{{Y_k}} = \{Y_k\}_{i=l}^{i=l+n-1}$ with joint probability distribution $p_k(\bx,\by_k)$. Assume $k=1,...,m$ (we have $m$ random variables as side information at the receiver). 

The receiver wants to compute $m$ deterministic functions $f_k:\cX\times\cY_k\to\cZ_k$ or $f_k:\cX^{n}\times\cY_{k}^{n}\to\cZ_{k}^{n}$, its vector extension. Without loss of generality, we assume $l=1$ and to simplify notations, $n$ will be implied by the context. We have one encoder $en_X$ and $m$ decoders $r_1,...,r_m$ (one for each function and its corresponding side information). Encoder $en_X$ maps 
\begin{equation}
en_X:\cX^{n}\to\{1,...,2^{nR}\},
\end{equation}
and, each decoder $r_k$ maps
\begin{equation}
r_k:\{1,...,2^{nR}\}\times \{1,...,2^{n}\}\to\cZ_{k}^{n}.
\end{equation}

The probability of error in decoding $f_k$ is
\begin{equation}
P_{e_k}^{n}=Pr[{(\bx,\by_k):f_k(\bx,\by_k)\neq r_k(en_X(\bx),\by_k)}],
\end{equation}
and, the total probability of error is
\begin{equation}
P_{e}^{n}=1-\prod_{k}(1-P_{e_k}^{n}).
\end{equation}
We declare an error when we have an error in computation of at least one function. A rate $R$ is achievable if $P_{e}^{n}\to 0$ when $n\to\infty$. Our aim here is to find the minimum achievable rate.

\subsection{Main Results} \label{sec:Main}

Prior work in the functional compression problem consider a case when computation of a function is desired at the receiver (m=1). In this section, we consider a case when computation of several functions is desired at the receiver. As an example, consider the network shown in Figure \ref{fig:sidem}. The receiver wants to compute $m$ functions with different side information random variables. We want to compute the minimum achievable rate for this case. Note that our results do not depend on the fact that all functions are desired in one receiver and one can extend them to the case of having several receivers with different desired functions (i.e., functions are separable.). 

First, let us consider the case of $m=2$. Then, we extend our results to the case of arbitrary $m$. In this problem, the receiver wants to compute two deterministic functions $f_{1}(X,Y_{1})$ and $f_{2}(X,Y_{2})$, where $Y_{1}$ and $Y_{2}$ are available at the receiver as side information. We wish to find the minimum achievable rate for this problem. 

Let us call $\gf=(V,E_{f_{1}})$ the characteristic graph of $X$ with respect to $Y_{1}$, $p_{1}({x,y_{1}})$ and $f_{1}({X,Y_{1}})$, and $\gff=(V,E_{f_{2}})$ the characteristic graph of $X$ with respect to $Y_{2}$, $p_{2}({x,y_{2}})$ and $f_{2}({X,Y_{2}})$. Now, define $\gm=(V,E_{f_{1},f_{2}})$ such that $E_{f_{1},f_{2}}=E_{f_{1}}\bigcup E_{f_{2}}$. In other words, $\gm$ is the \textit{or} function of $\gf$ and $\gff$. We call $\gm$ the \textit{multi-functional characteristic graph} of $X$.

When we deal with one function, we drop $f$ from notations (as in Section \ref{chap:tree}).

\begin{defn}\label{def:multi-char}
The \textit{multi-functional characteristic graph} $\gm=(V,E_{f_{1},f_{2}})$ of $X$ with respect to $Y_{1}$, $Y_{2}$, $p_{1}(x,y_{1})$, $p_{2}(x,y_{2})$ ,and $f_{1}(x,y_{1})$,$f_{2}(x,y_{2})$ is defined as follows: 

$V = \cX$
and an edge $(x_1,x_2) \in \cX^2$ is in $E_{f_{1},f_{2}}$ iff there exists a $y_{1} \in \cY_{1}$ such that $p_{1}(x_1,y_{1})p_{1}(x_2,y_{1}) > 0$ and $f_{1}(x_1,y_{1})\neq f_{1}(x_2,y_{1})$ or there exists a $y_{2} \in \cY_{2}$ such that $p_{2}(x_1,y_{2})p_{2}(x_2,y_{2}) > 0$ and $f_{2}(x_1,y_{2})\neq f_{2}(x_2,y_{2})$ . 
\end{defn}

Similarly to Definition \ref{def:eq-uncond}, we define the \textit{multi-functional graph entropy} as follows: 

\begin{figure}[t]
	\centering
    \includegraphics[width=10cm,height=5.5cm]{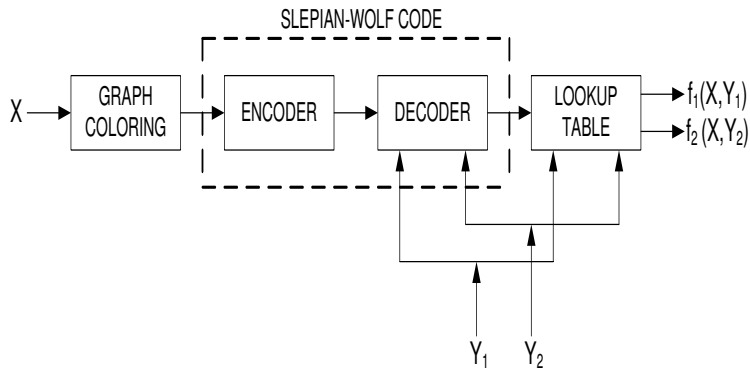}
    \caption{A source coding scheme for multi-functional compression problem with side information.}
    \label{fig:sw}
  \end{figure}

\begin{thm}
\begin{align}
H_{\gm}(X)=\limtoi \frac1n H_{\gm^n}^{\chi}(\bX).
\end{align}
\end{thm}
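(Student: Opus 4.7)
The plan is to recognize that the multi-functional characteristic graph $\gm$, once constructed, is an ordinary graph on vertex set $\cX$ equipped with the vertex distribution of $X$. Its $n$-th power is built by the same rule as for any characteristic graph: an edge joins $\bx^1$ and $\bx^2$ iff at least one coordinate pair $(x_i^1, x_i^2)$ is an edge of $\gm$. Consequently the statement is a direct specialization of K\"orner's theorem (Theorem~\ref{def:eq-uncond}), which holds for arbitrary graphs equipped with a vertex distribution, so applying it to $\gm$ yields the claim.

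The first step is to verify the structural identity $E(\gm^n) = E(\gf^n) \cup E(\gff^n)$, which follows by unrolling Definition~\ref{def:multi-char} on both sides together with the graph-power construction. This confirms that $\gm^n$ is precisely the multi-functional characteristic graph of $\bX$ with respect to $\bY_1, \bY_2$ and the vector extensions $f_1^n, f_2^n$, so that an $\epsilon$-coloring of $\gm^n$ is the natural combinatorial object for multi-function compression: its color classes are subsets of $\cX^n$ on which both $f_1^n$ and $f_2^n$ are determined (up to $\epsilon$-probability) by $\by_1$ and $\by_2$ alone.

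Next I would invoke K\"orner's theorem applied to the graph $\gm$. The achievability direction starts from a minimizing distribution on maximal independent sets of $\gm$ attaining $H_{\gm}(X) = \min_{X \in W \in \Gamma(\gm)} I(X;W)$; forming the product cover by maximal independent sets of $\gm^n$ and extracting an $\epsilon$-coloring gives a sequence whose per-symbol entropy tends to $H_{\gm}(X)$. Conversely, any $\epsilon$-coloring of $\gm^n$ partitions $\cX^n$ into independent sets of $\gm^n$, and the subadditivity/convexity arguments used in the single-function proofs of Theorems~\ref{def:eq-uncond} and \ref{def:eq-cond} show that $\frac{1}{n} H_{\gm^n}^{\chi}(\bX)$ cannot beat the single-letter minimum $H_{\gm}(X)$ in the limit.

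The main obstacle is not conceptual but notational: once one accepts that $\gm$ is ``just a graph'' and that its $n$-th power inherits the expected multi-function interpretation, the whole result reduces to the existing theory. The one genuinely nontrivial check is that the power-graph construction commutes with the union of edge sets $E(\gf) \cup E(\gff)$, and this is immediate from the definitions. Once that identification is in place, the theorem follows with no further work beyond what is already contained in Theorem~\ref{def:eq-uncond}.
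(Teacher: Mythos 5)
Your proposal is correct and matches the paper's (implicit) treatment: the paper states this result as a direct extension of K\"orner's theorem (Theorem~\ref{def:eq-uncond}), offering no separate proof because $\gm$ is an ordinary graph with a vertex distribution and the power-graph and chromatic-entropy constructions apply to it verbatim. Your additional check that $E(\gm^n)=E(\gf^n)\cup E(\gff^n)$ is a correct and worthwhile observation for the operational interpretation, but it is not needed for the equality itself.
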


The \textit{conditional multi-functional graph entropy} can be defined similarly to Definition \ref{def:eq-cond} as follows:

\begin{thm}
\begin{align}
H_{\gm}(X|Y)=\limtoi \frac1n H_{\gm^n}^{\chi}(\bX |\bY),
\end{align}
\end{thm}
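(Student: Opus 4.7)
The plan is to reduce the statement to a direct application of Theorem~\ref{def:eq-cond}, since $\gm$ is, structurally, just another conflict graph on $\cX$. The content beyond this reduction is essentially a single identity relating the $n$-th power of the union graph to the union of the $n$-th powers, plus a careful interpretation of the conditioning variable.

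First, I would establish the commutation identity $(\gm)^n = \gf^n \cup \gff^n$. By definition, $(\bx,\bx')$ is an edge of $(\gf\cup\gff)^n$ iff there is some coordinate $i$ at which $(x_i,x_i')$ is an edge of $\gf\cup\gff$, i.e., of $\gf$ or of $\gff$. This is exactly the condition that $(\bx,\bx')$ is an edge of $\gf^n$ or of $\gff^n$, so $(\gm)^n = \gf^n\cup\gff^n$. Consequently the $n$-th power of the multi-functional characteristic graph is the multi-functional characteristic graph of the $n$-block extension, and $H^{\chi}_{\gm^n}(\bX\mid\bY)$ is the minimum conditional entropy of any $\epsilon$-coloring that simultaneously separates all $f_1$-confusable and all $f_2$-confusable $n$-block pairs.

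Second, I would invoke Theorem~\ref{def:eq-cond} directly with $\gm$ playing the role of $\gxo$. The proof of that theorem (the Orlitsky--Roche / K\"orner argument) uses only the facts that the underlying object is a graph on the source alphabet whose edges encode which pairs must receive distinct colors, and that its $n$-th power obeys the standard typicality/covering combinatorics---both of which we have just verified for $\gm$. The upper bound $\limsup_n \frac{1}{n}H^{\chi}_{\gm^n}(\bX\mid\bY)\le H_{\gm}(X\mid Y)$ is obtained by drawing an optimal distribution on $X\in W\in\Gamma(\gm)$ achieving the conditional graph entropy, covering typical $\bx$ sequences by a family of independent sets of $\gm^n$ indexed via this distribution, and assigning one color per such set. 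The matching lower bound follows because any $\epsilon$-coloring of $\gm^n$ has color classes that are independent sets of $\gm^n$, so a standard convexity/subadditivity argument on $(1/n)H(c(\bX)\mid\bY)$ together with the definition of graph entropy yields $\liminf_n \frac{1}{n}H^{\chi}_{\gm^n}(\bX\mid\bY)\ge H_{\gm}(X\mid Y)$.

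The main obstacle---and the step I would handle most carefully---is the interpretation of the conditioning variable. The multi-functional setup gives joint laws $p_1(x,y_1)$ and $p_2(x,y_2)$ only marginally, so the symbol $Y$ in $H_{\gm}(X\mid Y)$ must be pinned down. I would fix a joint distribution $p(x,y_1,y_2)$ compatible with both marginals (e.g., $Y_1$ and $Y_2$ conditionally independent given $X$), set $Y=(Y_1,Y_2)$, and observe that both $H_{\gm}(X\mid Y)$ and $H^{\chi}_{\gm^n}(\bX\mid\bY)$ are well-defined under this coupling. The only role the coupling plays is to evaluate the conditional entropies; since the edge set of $\gm$ is determined by the union of the individual confusability relations and is insensitive to the specific coupling chosen, the equivalence derived from Theorem~\ref{def:eq-cond} holds regardless. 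This completes the reduction and establishes the claim.
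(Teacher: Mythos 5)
The paper offers no proof of this statement: despite the theorem environment, the surrounding text reads ``can be defined similarly to Definition~\ref{def:eq-cond},'' so the identity is simply asserted as the immediate analogue of Theorem~\ref{def:eq-cond} for the graph $\gm$, with all justification left implicit. Your proposal is therefore not a different route so much as an explicit filling-in of the reduction the paper takes for granted, and its ingredients are sound: the commutation $(\gf\cup\gff)^n=\gf^n\cup\gff^n$ is a correct one-line verification, and the observation that the K\"orner/Doshi argument uses only that $\gm$ is a graph on $\cX$ carrying a vertex distribution is the right reason the analogue holds. Two points deserve slightly more care than you give them. First, Theorem~\ref{def:eq-cond} is stated only for the characteristic graph of a single function, and the notion of $\epsilon$-coloring in the paper is itself defined through that function; so ``invoke it directly'' requires either (a) checking, as you indicate, that the proof and the $\epsilon$-coloring machinery transfer to an arbitrary graph, or (b) the cleaner option of realizing $\gm$ as an honest characteristic graph: under your conditionally independent coupling, $\gm$ coincides (up to edges between symbols whose side-information supports are disjoint, a degenerate case worth excluding explicitly) with the characteristic graph of $X$ with respect to $Y=(Y_1,Y_2)$, $p(x,y_1,y_2)$, and the stacked function $F(x,y_1,y_2)=(f_1(x,y_1),f_2(x,y_2))$, after which the invocation is literal. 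Second, your disambiguation of $Y$ as $(Y_1,Y_2)$ under a fixed compatible coupling is exactly what the paper silently does (it later writes $H_{\gm}(X|Y_1,Y_2)$ in Theorem~\ref{thm:m2}); your remark that the edge set of $\gm$ is coupling-independent is correct, but note that the \emph{values} of the conditional entropies on both sides do depend on the coupling, so the statement is only well-posed once that coupling is fixed --- a point the paper never addresses. With these caveats your argument is correct and supplies more than the paper itself provides.
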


where $\gm^n$ is the $n$-th power of $\gm$. Now, we can state the following theorem.

\begin{thm}\label{thm:m2}
$H_{\gm}(X|Y_1,Y_2)$ is the minimum achievable rate for the network shown in Figure \ref{fig:sidem} when $m=2$.
\end{thm}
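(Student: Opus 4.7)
The plan is to treat $\gm$ as a single characteristic graph that captures all pairwise confusions relevant to both $f_1$ and $f_2$, reducing the problem to a coloring argument against joint side information $(\bY_1,\bY_2)$. The key structural observation, immediate from Definition \ref{def:multi-char} and the definition of the $n$-th power graph, is $\gm^n=\gf^n\cup\gff^n$, so that every valid coloring of $\gm^n$ is simultaneously a valid coloring of $\gf^n$ and of $\gff^n$.

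For achievability, fix $\delta>0$ and use the characterization of $H_{\gm}(X\mid Y_1,Y_2)$ as the limit of conditional chromatic entropies: for all sufficiently large $n$ there is an $\epsilon$-coloring $c$ of $\gm^n$ with $\frac{1}{n}H(c(\bX)\mid\bY_1,\bY_2)\le H_{\gm}(X\mid Y_1,Y_2)+\delta/2$. I would encode $c(\bX)$ by Slepian--Wolf with joint side information $(\bY_1,\bY_2)$ at rate $\frac{1}{n}H(c(\bX)\mid\bY_1,\bY_2)+\delta/2$. Once the decoder recovers $c(\bX)$, since $c$ is also an $\epsilon$-coloring of $\gf^n$, the single-source version of Lemma \ref{lem:fhat} supplies a look-up map $\hat{f}_1(c(\bX),\bY_1)$ that agrees with $f_1(\bX,\bY_1)$ on the $(\bX,\bY_1)$-typical set; the analogous statement yields $\hat{f}_2(c(\bX),\bY_2)$. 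A union bound over the Slepian--Wolf error event and the two atypicality events drives $P_e^n$ to zero, so any rate above $H_{\gm}(X\mid Y_1,Y_2)$ is achievable.

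For the converse, suppose an encoder $en_X$ and decoders $r_1,r_2$ achieve vanishing error at rate $R$. I would show that, outside a set of probability $2\epsilon$, the partition of $\cX^n$ induced by $en_X$ is an independent-set partition of $\gm^n$: if two sequences $\bx^1,\bx^2$ shared a codeword yet spanned an edge of $\gm^n=\gf^n\cup\gff^n$, then by Definition \ref{def:multi-char} either some coordinate conflict through $\bY_1$ forces $r_1$ to err on a non-negligible fraction of sequences, or the symmetric statement holds for $r_2$ through $\bY_2$. This is Lemma \ref{lem:sec} applied twice, once per function, with a union bound intersecting the $(\bX,\bY_1)$- and $(\bX,\bY_2)$-typical sets. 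Hence $en_X$ realises an $\epsilon$-coloring of $\gm^n$, so $nR\ge H(en_X(\bX))\ge H(en_X(\bX)\mid\bY_1,\bY_2)\ge H_{\gm^n}^{\chi}(\bX\mid\bY_1,\bY_2)$; dividing by $n$ and letting $n\to\infty$ yields $R\ge H_{\gm}(X\mid Y_1,Y_2)$.

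The main obstacle will be the converse: certifying cleanly that a \emph{single} codeword partition respects the edge structure for $f_1$ and for $f_2$ simultaneously on one high-probability set, rather than on two different ones. The fix is the union bound above, paying only $2\epsilon$ total, and the same bookkeeping extends verbatim from $m=2$ to general $m$ by replacing $\gm^n=\gf^n\cup\gff^n$ with $G_{X,f_1,\ldots,f_m}^n=\bigcup_{k=1}^{m}G_{X,f_k}^n$ and running an $m$-fold union bound in both directions.
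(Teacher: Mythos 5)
Your proposal follows essentially the same route as the paper: the structural fact that $\gm^n=\gf^n\cup\gff^n$ (so any valid coloring of $\gm^n$ is simultaneously a valid coloring of $\gf^n$ and $\gff^n$) drives achievability via coloring plus Slepian--Wolf against the joint side information, and the converse shows any achievable encoder induces a valid ($\epsilon$-)coloring of $\gm^n$ exactly as in the paper's contradiction argument. Your write-up is somewhat more explicit about the $\epsilon$-coloring bookkeeping and the final entropy chain $nR\ge H(en_X(\bX)|\bY_1,\bY_2)\ge H_{\gm^n}^{\chi}(\bX|\bY_1,\bY_2)$, but this is a refinement in rigor, not a different argument.
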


\begin{proof}
 To show this, we first show that $R_X \geq H_{\gm}(X|Y_1,Y_2)$ is an achievable rate (achievability), and no one can outperform this rate (converse). To do this, first, we show that any valid coloring of $\gm^{n}$ for any $n$ leads to an achievable encoding-decoding scheme for this problem (achievability). Then, we show that every achievable encoding-decoding scheme performing on blocks with length $n$, induces a valid coloring of $\gm^{n}$ (converse). 

Achievablity: According to \cite{doshi-it}, any valid coloring of $\gf^{n}$ leads to successfully computing of $f_{1}(\bX,\bY_{1})$ at the receiver.  If $\cf$ is a valid coloring of $\gf^{n}$, there exists a function $r_{1}$ such that $r_{1}(\cf(\bX),\bY_{1})=f_{1}(\bX,\bY_{1})$, with high probability. A similar argument holds for $\gff$. Now, assume that $\cm$ is a valid coloring of $\gm^{n}$. Since, $E_{f_{1}}^n \subseteq E_{f_{1},f_{2}}^n$ and $E_{f_{2}}^{n} \subseteq E_{f_{1},f_{2}}^{n}$, any valid coloring of $\gm^{n}$ induces valid colorings for $\gf^{n}$ and $\gff^{n}$. Thus, any valid coloring of $\gm^{n}$ leads to successful computation of $f_{1}(\bX,\bY_{1})$ and $f_{2}(\bX,\bY_{2})$ at the receiver. So, $\cm$ leads to an achievable encoding scheme (i.e. there exist two functions $r_{1}$ and $r_{2}$ such that $r_{1}(\cm(\bX),\bY_{1})=f_{1}(\bX,\bY_{1})$ and $r_{2}(\cm(\bX),\bY_{2})=f_{2}(\bX,\bY_{2})$, with high probability.). 

When the receiver wants the whole information of the source node, Slepian and Wolf proposed a technique in \cite{sw73} to compress source random variable $X$ up to the rate $H(X|Y)$ when $Y$ is available at the receiver. Here, one can perform Slepian-Wolf compression technique on the minimum entropy coloring of large enough power graph and get the given bound.  

Converse: Now, we show that any achievable encoding-decoding scheme performing on blocks with length $n$, induces a valid coloring of $\gm^{n}$. In other words, we want to show that if there exist functions $en_X$, $r_{1}$ and $r_{2}$ such that $r_{1}(en_X(\bX),\bY_{1})=f_{1}(\bX,\bY_{1})$ and $r_{2}(en_X(\bX),\bY_{2})=f_{2}(\bX,\bY_{2})$, $en_X(\bX)$ is a valid coloring of $\gm^{n}$.  

Let us proceed by contradiction. If $en_X(\bX)$ were not a valid coloring of $\gm^{n}$, there must be some edge in $E_{f_{1},f_{2}}^{n}$ with both vertices with the same color. Let us call these two vertices $\bx_1$ and $\bx_2$ which take the same values  (i.e., $en_X(\bx_1)=en_X(\bx_2)$), but also are connected. Since they are connected to each other, by definition of $\gm^{n}$, there exists a $\by_1\in\cY_{1}$ such that $p_{1}(\bx_1,\by_{1})p_{1}(\bx_2,\by_{1}) > 0$ and $f_{1}(\bx_1,\by_{1})\neq f_{1}(\bx_2,\by_{1})$, or there exists a $\by_2\in\cY_{2}$ such that $p_{2}(\bx_1,\by_{2})p_{2}(\bx_2,\by_{2}) > 0$ and $f_{2}(\bx_1,\by_{2})\neq f_{1}(\bx_2,\by_{2})$. Without loss of generality, assume that the first case occurs. Thus, we have a $\by_{1}\in \cY_{1}$ such that $p_{1}(\bx_1,\by_{1})p_{1}(\bx_2,\by_{1}) > 0$ and $f_{1}(\bx_1,\by_{1})\neq f_{1}(\bx_2,\by_{1})$. So, $r_{1}(en_X(\bx_1),\by_{1})\neq r_{1}(en_X(\bx_2),\by_{1})$. Since  $en_X(\bx_1)=en_X(\bx_2)$, then, $r_{1}(en_X(\bx_1),\by_{1})\neq r_{1}(en_X(\bx_1),\by_{1})$. But, it is not possible. Thus, our contradiction assumption was not true. In other words, any achievable encoding-decoding scheme for this problem induces a valid coloring of $\gm^{n}$ and thus completes the proof.
\end{proof}

Now, let us consider the network shown in Figure \ref{fig:sidem}, where the receiver wishes to compute $m$ deterministic functions of source information having some side information.  

\begin{thm}
$H_{\gmm}(X|Y_1,...,Y_m)$ is the minimum achievable rate for the network shown in Figure \ref{fig:sidem}.
\end{thm}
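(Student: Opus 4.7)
The plan is to mirror the proof of Theorem \ref{thm:m2} with the generalized multi-functional characteristic graph $\gmm=(V,E_{f_1,\ldots,f_m})$, where $V=\cX$ and $E_{f_1,\ldots,f_m}=\bigcup_{k=1}^{m} E_{f_k}$; that is, we place an edge between $x_1$ and $x_2$ whenever there exists some $k\in\{1,\ldots,m\}$ and some $y_k\in\cY_k$ such that $p_k(x_1,y_k)p_k(x_2,y_k)>0$ and $f_k(x_1,y_k)\neq f_k(x_2,y_k)$. The conditional multi-functional graph entropy $H_{\gmm}(X|Y_1,\ldots,Y_m)$ is then the direct analog of the $m=2$ quantity via the chromatic-entropy limit.

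For \emph{achievability}, I would first note the edge inclusion $E_{f_k}^n\subseteq E_{f_1,\ldots,f_m}^n$ for every $k$, so any valid coloring $c_{\gmm^n}$ of the $n$-th power of the multi-functional characteristic graph automatically restricts to a valid coloring of each $G_{X,f_k}^n$. By the single-function argument (Theorem \ref{def:eq-cond} combined with the side-information result of Orlitsky--Roche), for each $k$ there exists a decoder $r_k$ such that $r_k(c_{\gmm^n}(\bX),\bY_k)=f_k(\bX,\bY_k)$ with vanishing error. Applying Slepian--Wolf compression to the minimum-entropy coloring of a sufficiently large power $\gmm^n$ given each side information then yields an encoding rate arbitrarily close to $H_{\gmm}(X|Y_1,\ldots,Y_m)$, and a union bound over the $m$ decoders keeps $P_e^n\to 0$.

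For the \emph{converse}, I would argue by contradiction in the same style as Theorem \ref{thm:m2}: suppose a scheme $(en_X,r_1,\ldots,r_m)$ is achievable at rate $R$ but $en_X$ does not induce a valid coloring of $\gmm^n$ (on the high-probability subgraph). Then there exist $\bx_1,\bx_2\in\cX^n$ with $en_X(\bx_1)=en_X(\bx_2)$ that are adjacent in $\gmm^n$, so by definition there is some index $k$ and some $\by_k$ with $p_k(\bx_1,\by_k)p_k(\bx_2,\by_k)>0$ and $f_k(\bx_1,\by_k)\neq f_k(\bx_2,\by_k)$. But then $r_k(en_X(\bx_1),\by_k)$ must simultaneously equal two distinct function values on a set of nonvanishing probability, contradicting achievability for function $f_k$. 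Hence every achievable scheme induces an $\epsilon$-coloring of $\gmm^n$, and taking $n\to\infty$ with the chromatic-entropy limit yields $R\geq H_{\gmm}(X|Y_1,\ldots,Y_m)$.

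The main obstacle is essentially conceptual rather than technical: one must verify that the union-of-edges construction correctly captures the joint confusability structure so that the single-function tools transfer verbatim. Once the edge-inclusion observation is in hand, both directions reduce to applying the $m=1$ (Orlitsky--Roche / K\"orner) machinery $m$ times in parallel, with the union bound on error probabilities handling the only new quantitative subtlety. No new combinatorial identity about graph entropies is required, since $H_{\gmm}$ is defined directly via coloring of the joint edge set rather than via any chain-rule decomposition over the component graphs $G_{X,f_k}$.
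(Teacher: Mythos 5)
Your proposal is correct and follows essentially the same route as the paper: the paper itself only sketches the general-$m$ case by deferring to the $m=2$ argument of Theorem \ref{thm:m2}, and your write-up is precisely that argument generalized — achievability via the edge inclusions $E_{f_k}^n\subseteq E_{f_1,\ldots,f_m}^n$ so that a coloring of $\gmm^n$ restricts to a valid coloring for each $f_k$, followed by Slepian--Wolf on the minimum-entropy coloring, and a converse by contradiction showing any achievable encoder induces a valid ($\epsilon$-)coloring of $\gmm^n$. If anything, your version is more explicit than the paper's sketch about the union bound over the $m$ decoders.
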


The argument here is similar to the case of $m=2$ given in Theorem \ref{thm:m2}. We only sketch the proof. One may first show that any colorings of multi-functional characteristic graph of $X$ with respect to desired functions (i.e., $\gmm$) leads to an achievable scheme. Then, showing that any achievable encoding-decoding scheme induces a coloring on $\gmm$ completes the proof. 


\section{Polynomial Time Cases for Finding the Minimum Entropy Coloring of a Characteristic Graph}\label{chap:min-coloring}

In this section, we consider the problem of finding the minimum entropy coloring of a characteristic graph. This problem arises in the functional compression problem where computation of a function of sources is desired at the receiver. We considered some aspects of this problem in Sections [\ref{chap:tree}- \ref{chap:distortion}] and proposed some coding schemes. In those proposed coding schemes, one needs to compute the minimum entropy coloring (a coloring random variable which minimizes the entropy) of a characteristic graph. In general, finding this coloring is an NP-hard problem (as shown by Cardinal et al. \cite{np1}) . However, in this section, we show that depending on the characteristic graph's structure, there are some interesting cases where finding the minimum entropy coloring is not NP-hard, but tractable and practical. In one of these cases, we show that, having a non-zero joint probability condition on random variables' distributions, for any desired function $f$, makes characteristic graphs to be formed of some non-overlapping fully-connected maximal independent sets. Therefore, the minimum entropy coloring can be solved in polynomial time. In another case, we show that, if the function we seek to compute is a type of quantization functions, this problem is also tractable. 

We also consider this problem in a general case. By using Huffman or Lempel-Ziv coding notions, we heuristically relate finding the minimum entropy coloring to finding the maximum independent set of a graph. While the minimum-entropy coloring problem is a recently studied problem, there are some heuristic algorithms to solve approximately the maximum independent set problem.

We proceed this section by stating the problem setup. Then, we explain our contributions to this problem.

\subsection{Problem Setup}

In some problems such as the functional compression problem, we need to find a coloring random variable of a characteristic graph which minimizes the entropy. The problem is how to compute such a coloring for a given characteristic graph. 

Given a characteristic graph $\gxo$ (or, its $n$-th power, $\gxon$), one can assign different colors to its vertices. Suppose $\Cxo$ is the collection of all valid colorings of this graph. Among these colorings, one which minimizes the entropy of the coloring random variable is called the minimum-entropy coloring, and we refer to it by $\cxomin$:
\begin{equation}
\cxomin=\arg\!\!\!\!\!\!\!\!\min_{\cxo\in\Cxo} H(\cxo).
\end{equation}

The problem is how to compute $\cxomin$ given $\gxo$. 
 
\subsection{Main Results}\label{sec:coloring}

In general, finding $\cxomin$ is an NP-hard problem (\cite{np1}). However, in this section, we investigate cases where this coloring can be computed in polynomial time, depending on the characteristic graph's structure. In one of these cases, we show that, by having a non-zero joint probability condition on random variables' distributions, for any desired function, finding $\cxomin$ can be solved in polynomial time. In another case, we show that, if the function we seek to compute is a quantization function, this problem is also tractable. 
We also consider this problem in a general case. By using Huffman or Lempel-Ziv coding notions, we heuristically relate finding the minimum entropy coloring to finding the maximum independent set of a graph.

For simplicity, we consider functions with two input random variables, but one can extend all discussions to functions with more input random variables than two.

  \begin{figure}[t]
	\centering
    \includegraphics[width=10cm,height=5cm]{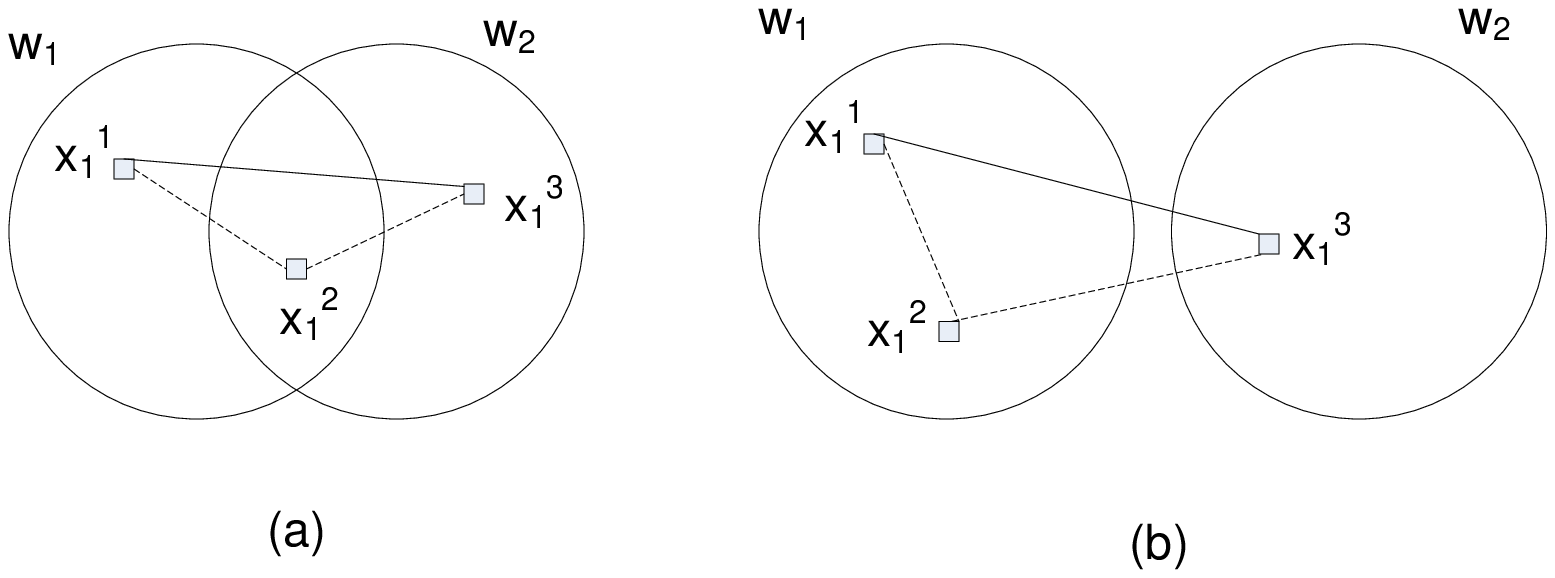}
    \caption{ Having non-zero joint probability distribution, a) maximal independent sets cannot overlap with each other (this figure is to depict the contradiction) b) maximal independent sets should be fully connected to each other. In this figure, a solid line represents a connection, and a dashed line means no connection exists.}
    \label{fig:indep-sets}
  \end{figure}

\subsubsection{Non-Zero Joint Probability Distribution Condition} \label{subsec:trivial-coloring}  

Consider the network shown in Figure \ref{fig:3part}-b. Source random variables have a joint probability distribution $p(x_1,x_2)$, and the receiver wishes to compute a deterministic function of sources (i.e., $f(X_1,X_2)$). In Section \ref{chap:tree}, we showed that, in an achievable coding scheme, one needs to compute minimum entropy colorings of characteristic graphs. The question is how source nodes can compute minimum entropy colorings of their characteristic graphs $\gxo$ and $\gxt$ (or, similarly the minimum entropy colorings of $\gxon$ and $\gxtn$, for some $n$). For an arbitrary graph, this problem is NP-hard (\cite{np1}). However, in certain cases, depending on the probability distribution or the desired function, the characteristic graph has some special structure which leads to a tractable scheme to find the minimum entropy coloring. In this section, we consider the effect of the probability distribution. 

\begin{thm}\label{thm:color}
Suppose for all $(x_1,x_2)\in\cX_1\times\cX_2$, $p(x_1,x_2)>0$. Then, maximal independent sets of the characteristic graph $\gxo$ (and, its $n$-th power $\gxon$, for any $n$) are some non-overlapping fully-connected sets. Under this condition, the minimum entropy coloring can be achieved by assigning different colors to its different maximal independent sets. 
\end{thm}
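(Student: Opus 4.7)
The plan is to prove this by showing the graph is actually a complete multipartite graph, with the parts being natural equivalence classes of $X_1$ values. First, I would observe that since $p(x_1, x_2) > 0$ for all $(x_1, x_2)$, the probability condition in the definition of $E_{X_1}$ is vacuous, so $(x_1^1, x_1^2) \in E_{X_1}$ iff there exists $x_2 \in \mathcal{X}_2$ with $f(x_1^1, x_2) \neq f(x_1^2, x_2)$. Define a relation on $\mathcal{X}_1$ by $x_1^1 \sim x_1^2$ iff $f(x_1^1, x_2) = f(x_1^2, x_2)$ for every $x_2 \in \mathcal{X}_2$. This is immediately reflexive, symmetric, and transitive, so it partitions $\mathcal{X}_1$ into equivalence classes $A_1, \ldots, A_r$.

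Next I would show these classes are exactly the maximal independent sets. Any two vertices in the same class $A_i$ induce identical function values across all $x_2$, so they are not connected; thus each $A_i$ is independent. For two vertices $x_1^1 \in A_i$, $x_1^2 \in A_j$ with $i \neq j$, by definition there exists some $x_2$ with $f(x_1^1, x_2) \neq f(x_1^2, x_2)$, so the edge $(x_1^1, x_1^2)$ lies in $E_{X_1}$. Hence distinct classes are fully connected to each other, which means (i) no independent set can contain vertices from two different classes, so each $A_i$ is maximal; (ii) the maximal independent sets are pairwise disjoint; and (iii) they cover $\mathcal{X}_1$. This establishes the structural claim. The same argument works verbatim for $G_{X_1}^n$, where the equivalence relation extends coordinatewise: $\mathbf{x}_1^1 \sim \mathbf{x}_1^2$ iff $x_{1i}^1 \sim x_{1i}^2$ for every $i$, since the power-graph edge condition requires confusability in at least one coordinate.

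For the minimum-entropy claim, note that any valid coloring $c_{G_{X_1}}$ assigns the same color only to vertices forming an independent set, and by the structure above every independent set is contained in a single equivalence class $A_i$. Therefore the partition induced by any valid coloring is a refinement of the partition $\{A_1, \ldots, A_r\}$. The coloring that assigns one distinct color per equivalence class is itself valid, and every other valid coloring corresponds to a strict refinement. Since refining a partition of a random variable can only increase (or leave unchanged) the entropy of the induced coloring random variable by the grouping rule for entropy, the coarsest valid coloring — one color per $A_i$ — achieves the minimum entropy.

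The main obstacle I expect is being careful about the "fully connected between different classes" step: one has to rule out the possibility that an edge witnessed by some particular $x_2$ between representatives fails to lift to an edge between other members of the two classes. The non-zero probability assumption is essential here because it guarantees that the same witness $x_2$ works for every pair of representatives once the equivalence structure is in place; without it, edges could exist only for specific compatible pairs and the clean complete-multipartite structure would fail. The extension to $G_{X_1}^n$ is essentially bookkeeping once the coordinatewise equivalence is stated properly.
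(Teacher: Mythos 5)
Your proof is correct and rests on the same key observation as the paper's: under full support, non-adjacency in $G_{X_1}$ means agreement of $f$ for every $x_2\in\cX_2$, and this relation is transitive, forcing the non-overlapping, pairwise fully connected (complete multipartite) structure --- you package this as an explicit equivalence relation, while the paper runs two separate contradiction arguments (first non-overlap, then full connectivity), and your coordinatewise extension to $G_{\bX_1}^{n}$ matches the paper's. You additionally justify the minimum-entropy claim via the grouping rule (every valid coloring refines the class partition, and coarsening cannot increase entropy), a step the paper simply declares trivial, so your write-up is if anything slightly more complete.
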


\begin{proof}
Suppose $\Gamma(\gxo)$ is the set of all maximal independent sets of $\gxo$. Let us proceed by contradiction. Consider Figure \ref{fig:indep-sets}-a. Suppose $w_1$ and $w_2$ are two different non-empty maximal independent sets. Without loss of generality, assume $x_1^{1}$ and $x_1^2$ are in $w_1$, and $x_1^{2}$ and $x_1^3$ are in $w_2$. These sets have a common element $x_1^2$. Since $w_1$ and $w_2$ are two different maximal independent sets, $x_1^{1}\notin w_2$ and $x_1^{3}\notin w_1$. Since $x_1^1$ and $x_1^2$ are in $w_1$, there is no edge between them in $\gxo$. The same argument holds for $x_1^2$ and $x_1^3$. But, we have an edge between $x_1^1$ and $x_1^3$, because $w_1$ and $w_2$ are two different maximal independent sets, and at least there should exist such an edge between them. Now, we want to show that it is not possible.

Since there is no edge between $x_1^1$ and $x_1^2$, for any $x_2^1\in\cX_2$, $p(x_1^1,x_2^1)p(x_1^2,x_2^1)>0$, and $f(x_1^1,x_2^1)=f(x_1^2,x_2^1)$. A similar argument can be expressed for $x_1^2$ and $x_1^3$. In other words, for any $x_2^1\in\cX_2$, $p(x_1^2,x_2^1)p(x_1^3,x_2^1)>0$, and $f(x_1^2,x_2^1)=f(x_1^3,x_2^1)$. Thus, for all $x_2^1\in\cX_2$, $p(x_1^1,x_2^1)p(x_1^3,x_2^1)>0$, and $f(x_1^1,x_2^1)=f(x_1^3,x_2^1)$. However, since $x_1^1$ and $x_1^3$ are connected to each other, there should exist a $x_2^1\in\cX_2$ such that $f(x_1^1,x_2^1)\neq f(x_1^3,x_2^1)$ which is not possible. So, the contradiction assumption is not correct and these two maximal independent sets do not overlap with each other.

We showed that maximal independent sets cannot have overlaps with each other. Now, we want to show that they are also fully connected to each other. Again, let us proceed by contradiction. Consider Figure \ref{fig:indep-sets}-b. Suppose $w_1$ and $w_2$ are two different non-overlapping maximal independent sets. Suppose there exists an element in $w_2$ (call it $x_1^3$) which is connected to one of elements in $w_1$ (call it $x_1^1$) and is not connected to another element of $w_1$ (call it $x_1^2$). By using a similar discussion to the one in the previous paragraph, we may show that it is not possible. Thus, $x_1^3$ should be connected to $x_1^1$. Therefore, if for all $(x_1,x_2)\in\cX_1\times\cX_2$, $p(x_1,x_2)>0$, then maximal independent sets of  $\gxo$ are some separate fully connected sets. In other words, the complement of $\gxo$ is formed by some non-overlapping cliques. Finding the minimum entropy coloring of this graph is trivial and can be achieved by assigning different colors to these non-overlapping fully-connected maximal independent sets. 

This argument also holds for any power of $\gxo$. Suppose $\bx_1^1$, $\bx_1^2$ and $\bx_1^3$ are some typical sequences in $\cX_1^{n}$. If $\bx_1^1$ is not connected to $\bx_1^2$ and $\bx_1^3$, it is not possible to have $\bx_1^2$ and $\bx_1^3$ connected. Therefore, one can apply a similar argument to prove the theorem for $\gxon$, for some $n$. This completes the proof.  
\end{proof}

Here are some remarks about Theorem \ref{thm:color}:

\begin{itemize}
\item If the characteristic graph satisfying conditions of Theorem \ref{thm:color} is sparse, its power graph would also remain sparse (a sparse graph with $m$ vertices is a graph whose number of edges is much smaller than $\frac{m(m-1)}{2}$). 
\item The condition $p(x_1,x_2)>0$, for all $(x_1,x_2)\in\cX_1\times\cX_2$, is a necessary condition for Theorem \ref{thm:color}. In order to illustrate this, consider Figure \ref{fig:indep-set-example}. In this example, $x_1^1$, $x_1^2$ and $x_1^3$ are in $\cX_1$, and $x_2^1$, $x_2^2$ and $x_2^3$ are in $\cX_2$. Suppose $p(x_1^2,x_2^2)=0$. By considering the value of function $f$ at these points depicted in the figure, one can see that, in $\gxo$, $x_1^2$ is not connected to $x_1^1$ and $x_1^3$. However, $x_1^1$ and $x_1^3$ are connected to each other. Thus, Theorem \ref{thm:color} does not hold here. 
\item The condition used in Theorem \ref{thm:color} only restricts the probability distribution and it does not depend on the function $f$. Thus, for any function $f$ at the receiver, if we have a non-zero joint probability distribution of source random variables (for example, when source random variables are independent), finding the minimum-entropy coloring is easy and tractable. 
\end{itemize}

  \begin{figure}[t]
	\centering
    \includegraphics[width=8.5cm,height=5.5cm]{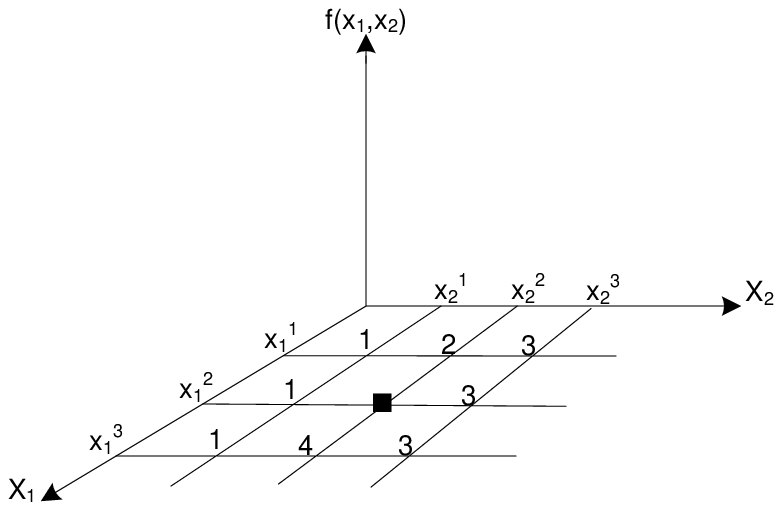}
    \caption{Having non-zero joint probability condition is necessary for Theorem \ref{thm:color}. A dark square represents a zero probability point.}
    \label{fig:indep-set-example}
  \end{figure}

\subsubsection{Quantization Functions}\label{subsec:special-functions}

In Section \ref{subsec:trivial-coloring}, we introduced a condition on the joint probability distribution of random variables which leads to a specific structure of the characteristic graph so that finding the minimum entropy coloring is not NP-hard. In this section, we consider some special functions which to lead to some special graph structures. 

An interesting function is a quantization function. A natural quantization function is a function which separates the $X_1-X_2$ plane into some rectangles such that each rectangle corresponds to a different value of that function. Sides of these rectangles are parallel to the plane axes. Figure \ref{fig:quantization-function}-a depicts such a quantization function.

Given a quantization function, one can extend different sides of each rectangle in the $X_1-X_2$ plane. This may make some new rectangles. We call each of them \textit{a function region}. Each function region can be determined by two subsets of $\cX_1$ and $\cX_2$. For example, in Figure \ref{fig:quantization-function}-b, one of the function regions is distinguished by the shaded area.   

\begin{defn}
Consider two function regions $\cX_1^1\times\cX_2^1$ and $\cX_1^2\times\cX_2^2$. If for any $x_1^1\in \cX_1^1$ and $x_1^2\in \cX_1^2$, there exist $x_2^1$ such that $p(x_1^1,x_2^1)p(x_1^2,x_2^1)>0$ and $f(x_1^1,x_2^1)\neq f(x_1^2,x_2^1)$, we say these two function regions are pairwise $X_1$-proper.
\end{defn}


\begin{thm}\label{thm:quan}
Consider a quantization function $f$ such that its function regions are pairwise $X_1$-proper. Then, $\gxo$ (and $\gxon$, for any $n$) is formed of some non-overlapping fully-connected maximal independent sets, and its minimum entropy coloring can be achieved by assigning different colors to different maximal independent sets.  
\end{thm}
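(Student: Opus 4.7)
The plan is to mirror the structural argument from the proof of Theorem \ref{thm:color}, but drive it from the geometry of the quantization function rather than from a nonzero-probability hypothesis. First I would note that the sides of the rectangles that define $f$ induce a product partition of $\cX_1 \times \cX_2$ into function regions of the form $\cX_1^{(i)} \times \cX_2^{(j)}$, so that $\cX_1 = \bigsqcup_i \cX_1^{(i)}$ and $\cX_2 = \bigsqcup_j \cX_2^{(j)}$, with $f$ constant on each cell of this grid. I would then claim that the blocks $\cX_1^{(i)}$ are exactly the maximal independent sets of $\gxo$.

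To show each $\cX_1^{(i)}$ is independent in $\gxo$: given $x_1^1, x_1^2 \in \cX_1^{(i)}$ and any $x_2 \in \cX_2$, the point $x_2$ lies in some unique $\cX_2^{(j)}$, so $(x_1^1,x_2)$ and $(x_1^2,x_2)$ both lie in the same function region $\cX_1^{(i)} \times \cX_2^{(j)}$, giving $f(x_1^1,x_2)=f(x_1^2,x_2)$; hence no edge. To show maximality and mutual full connectivity: for $i \neq k$, pick $x_1^1 \in \cX_1^{(i)}$, $x_1^2 \in \cX_1^{(k)}$, and apply the pairwise $X_1$-proper hypothesis to the two function regions to produce an $x_2^1$ witnessing an edge in $\gxo$. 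Thus any two blocks are completely joined in $\gxo$, so they cannot merge into a single independent set, and each $\cX_1^{(i)}$ is both maximal and disjoint from the others.

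To extend to $\gxon$, I would argue that the product sets $\cX_1^{(j_1)} \times \cdots \times \cX_1^{(j_n)}$ serve as the maximal independent sets of $\gxon$: they are independent by the per-coordinate argument above (a single $i$ witnessing an edge would force the two sequences into different blocks in coordinate $i$, contradicting sameness of the product indices), and any two distinct such products differ in some coordinate $i$, so applying pairwise $X_1$-properness in that coordinate yields a witnessing $\bx_2$ (with other coordinates chosen inside a common $\cX_2^{(\cdot)}$ block of positive probability) that provides an edge in $E_{X_1}^n$. These product sets partition $\cX_1^n$ and are pairwise fully connected.

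Finally, for the minimum entropy coloring: since distinct maximal independent sets are fully connected, any valid coloring must assign different colors to different blocks, so the coarsest valid coloring is the one that uses exactly one color per $\cX_1^{(i)}$ (and per product block for $\gxon$). Any other valid coloring is a refinement of this one, hence a (deterministic) function of it, which by the standard fact $H(g(U)) \le H(U)$ can only raise the entropy. Thus the block-indicator coloring realizes $\cxomin$. The main obstacle I anticipate is the bookkeeping for the power graph case, specifically verifying that the pairwise $X_1$-proper hypothesis, which is stated for function regions of the base problem, suffices coordinatewise to force edges between distinct product blocks of $\gxon$ using only sequences of positive joint probability; modulo this, the argument is a clean parallel of Theorem \ref{thm:color}.
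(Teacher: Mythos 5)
Your proposal is correct and follows essentially the same route as the paper: identify the $X_1$-blocks of the function regions as the maximal independent sets, use the definition of function regions for independence and the pairwise $X_1$-proper hypothesis for full connectivity between distinct blocks, and conclude that one color per block is the minimum entropy coloring, extending to $\gxon$ coordinatewise. Two small remarks: your anticipated obstacle for the power graph dissolves because an edge of $\gxon$ is \emph{defined} to exist as soon as a single coordinate pair is an edge of $\gxo$ (no joint $n$-sequence witness of positive probability needs to be assembled), and in your entropy step the direction of the refinement is stated backwards --- the block coloring is a deterministic function of any valid coloring (not vice versa), which is exactly what gives $H(\text{block coloring})\le H(c)$ for every valid $c$.
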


  \begin{figure}[t]
	\centering
    \includegraphics[width=11cm,height=7cm]{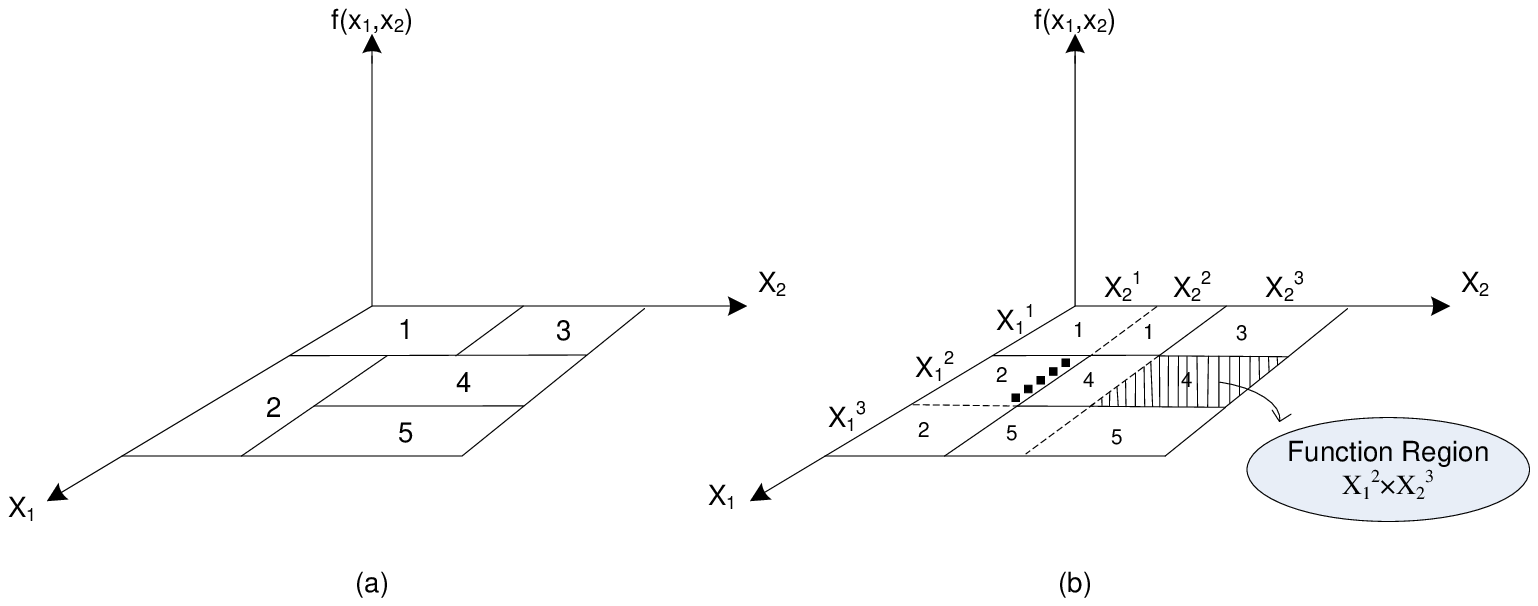}
    \caption{a) A quantization function. Function values are depicted in the figure on each rectangle. b) By extending sides of rectangles, the plane is covered by some function regions.}
    \label{fig:quantization-function}
  \end{figure}

\begin{proof}
We first prove it for $\gxo$. Suppose $\cX_1^1\times\cX_2^1$, and $\cX_1^2\times\cX_2^2$ are two $X_1$-proper function regions of a quantization function $f$, where $\cX_1^1\neq\cX_1^2$. We show that $\cX_1^1$ and $\cX_1^2$ are two non-overlapping fully-connected maximal independent sets. By definition, $\cX_1^1$ and $\cX_1^2$ are two non-equal partition sets of $\cX_1$. Thus, they do not have any element in common. 

Now, we want to show that vertices of each of these partition sets are not connected to each other. Without loss of generality, we show it for $\cX_1^1$. If this partition set of $\cX_1$ has only one element, this is a trivial case. So, suppose $x_1^1$ and $x_1^2$ are two elements in $\cX_1^1$. By definition of function regions, one can see that, for any $x_2^1\in\cX_2$ such that $p(x_1^1,x_2^1)p(x_1^2,x_2^1)>0$, then $f(x_1^1,x_2^1)=f(x_1^2,x_2^1)$. Thus, these two vertices are not connected to each other. Now, suppose $x_1^3$ is an element in $\cX_1^2$. Since these function regions are $X_1$-proper, there should exist at least one $x_2^1\in\cX_2$, such that $p(x_1^1,x_2^1)p(x_1^3,x_2^1)>0$, and $f(x_1^1,x_2^1)\neq f(x_1^3,x_2^1)$. Thus, $x_1^1$ and $x_1^3$ are connected to each other. Therefore, $\cX_1^1$ and $\cX_1^2$ are two non-overlapping fully-connected maximal independent sets. One can easily apply this argument to other partition sets. Thus, the minimum entropy coloring can be achieved by assigning different colors to different maximal independent sets (partition sets). The proof for $\gxon$, for any $n$, is similar to the one mentioned in Theorem \ref{thm:color}. This completes the proof.    
\end{proof}

Note that without $X_1$-proper condition of Theorem \ref{thm:quan}, assigning different colors to different partitions still leads to an achievable coloring scheme. However, it is not necessarily the minimum entropy coloring. In other words, without this condition, maximal independent sets may overlap.


\begin{cor}\label{thm:increasing}
If a function $f$ is strictly monotonic with respect to $X_1$, and $p(x_1,x_2)\neq 0$, for all $x_1\in\cX_1$ and $x_2\in\cX_2$, then, $\gxo$ (and, $\gxon$ for any $n$) is a complete graph. 
\end{cor}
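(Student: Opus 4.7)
The plan is to read off the claim directly from the definition of the characteristic graph, using strict monotonicity to guarantee that function values differ on every pair, and using the non-zero joint probability hypothesis to guarantee that the required witness $x_2$ has strictly positive probability.

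First I would recall what strictly monotonic in $X_1$ means: after fixing any $x_2\in\cX_2$, the map $x_1\mapsto f(x_1,x_2)$ is injective on $\cX_1$. So for any two distinct vertices $x_1^1,x_1^2\in V_{X_1}=\cX_1$ and for \emph{every} $x_2\in\cX_2$ one automatically has $f(x_1^1,x_2)\neq f(x_1^2,x_2)$. The non-zero probability hypothesis is then used to supply an admissible witness: since $\cX_2$ is nonempty, pick any $x_2^1\in\cX_2$; the assumption $p(x_1,x_2)>0$ for all pairs yields $p(x_1^1,x_2^1)\,p(x_1^2,x_2^1)>0$. Both conditions in the definition of $E_{X_1}$ are therefore met, so $(x_1^1,x_1^2)\in E_{X_1}$. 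Since $x_1^1,x_1^2$ were an arbitrary distinct pair, every pair of vertices is connected, i.e.\ $\gxo$ is the complete graph on $|\cX_1|$ vertices.

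For the $n$-th power $\gxon$, I would use the definition: $(\bx_1^1,\bx_1^2)\in E_{X_1}^n$ iff there exists an index $i$ with $(x_{1i}^1,x_{1i}^2)\in E_{X_1}$. Take any two distinct $n$-sequences $\bx_1^1\neq\bx_1^2$; they must disagree in at least one coordinate $i$, so $x_{1i}^1\neq x_{1i}^2$, and by the previous paragraph that coordinate pair is an edge of $\gxo$. Hence $(\bx_1^1,\bx_1^2)\in E_{X_1}^n$, and $\gxon$ is complete as well.

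There is no real obstacle here; the statement is essentially a one-line consequence of the two hypotheses together with the definition of the characteristic graph, and the vector extension is immediate from the definition of $\gxon$. The only point deserving a brief remark is that both hypotheses are genuinely needed: monotonicity alone without full-support could fail to furnish an admissible $x_2^1$, while full-support alone without monotonicity could allow $f(x_1^1,x_2)=f(x_1^2,x_2)$ and leave vertices disconnected, so the proof should explicitly invoke each hypothesis at the step where it is used.
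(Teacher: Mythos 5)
Your proof is correct. Note that the paper itself gives no explicit proof of this corollary; it is stated immediately after Theorem \ref{thm:quan}, and the implied derivation is structural: by Theorem \ref{thm:color}, the full-support hypothesis forces $\gxo$ to be a disjoint union of fully-connected maximal independent sets, and strict monotonicity in $X_1$ collapses every maximal independent set to a singleton (no two distinct $x_1$ values can agree on $f$ for any $x_2$), so the graph is complete. Your argument instead verifies completeness directly from the definition of $E_{X_1}$: monotonicity supplies $f(x_1^1,x_2)\neq f(x_1^2,x_2)$ for every $x_2$, full support supplies an admissible witness $x_2^1$ with $p(x_1^1,x_2^1)p(x_1^2,x_2^1)>0$, and the power graph $\gxon$ is handled by noting that distinct $n$-sequences differ in some coordinate. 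This direct route is self-contained and arguably cleaner, since it does not lean on the earlier structural theorem; the structural route buys a little extra insight (it exhibits the corollary as the degenerate case of the independent-set decomposition), but nothing is lost by bypassing it. Your closing remark that each hypothesis is genuinely needed at a specific step is accurate and matches the paper's own discussion of why the full-support condition cannot be dropped.
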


Under conditions of Corollary \ref{thm:increasing}, functional compression does not give us any gain, because, in a complete graph, one should assign different colors to different vertices. Traditional compression where $f$ is the identity function is a special case of Corollary \ref{thm:increasing}.

\subsubsection{Minimum Entropy Coloring for an Arbitrary Graph}

Finding the minimum entropy coloring of an arbitrary graph (called the chromatic entropy) is NP-hard (\cite{np1}). Reference \cite{np1} showed that, even finding a coloring whose entropy is within $(\frac{1}{7}-\epsilon)\log m$ of its chromatic entropy is NP-hard, for any $\epsilon>0$, where $m$ is the number of vertices of the graph. That is a reason we introduced some special structures on the characteristic graph to have some tractable and practical schemes to find the minimum entropy coloring. While cases investigated in Sections \ref{subsec:trivial-coloring} and \ref{subsec:special-functions} cover certain practical cases, in this part, we want to consider this problem without assuming any special structure of the graph. In particular, we show that, by using a notion of an empirical Huffman coding scheme or a Lempel-Ziv coding scheme, one can heuristically relate finding the minimum-entropy coloring problem and finding the maximum independent set problem. While the minimum-entropy coloring problem is a recently studied problem, there are some heuristic algorithms to solve the maximum independent set problem \cite{algbook}.        

Suppose $\gxo$ is the characteristic graph of $X_1$. Without loss of generality, in this section, we consider $n=1$. All discussions can be extended to $\gxon$, for any $n$. Suppose $p(x_1)$ is the probability distribution of $X_1$. Let us define the adjacency matrix $A=[a_{ij}]$ for this graph as follows: $a_{ij}=1$ when $x_1^i$ and $x_1^j$ are connected to each other in $\gxo$, otherwise, $a_{ij}=0$. One can see that the adjacency matrix is symmetric, with all zeros in its diagonal. A \textit{one} in this matrix means that its corresponding vertices should be assigned to different colors.

Let us define a permutation matrix $P$ with the same size of $A$. This matrix has only a \textit{one} in each of its rows and columns. The matrix $PAP^t$ would be a matrix such that rows and columns of $A$ are reordered simultaneously, with respect to this permutation  matrix $P$. For any valid coloring, there exists a permutation matrix $P$, such that $PAP^t$ has zero square matrices on its diagonal. This reordering is such that, vertices with the same color are adjacent to each other in $PAP^t$. Each of these zero square matrices on the diagonal of $PAP^t$ represents a maximal independent set, or equivalently a color class. One can see that there exists a bijective mapping between any valid coloring and any permutation matrix $P$ which leads to have some zero square matrices on the diagonal of $PAP^t$.

\begin{ex}
For an example, consider the coloring of the graph depicted in Figure \ref{fig:graph-five-vertices}. This coloring leads to the following $PAP^t$ matrix:

\begin{equation}
PAP^t=\left( \begin{array}{ccc}
\begin{array}{ccc}
0 &0\\
0&0 \\
 \end{array} & D_1 &D_2\\
D_1&\begin{array}{cc}
0 &0\\
0&0 \\
 \end{array}&D_3\\
D_2&D_3& 0\\
\end{array}\right)
\end{equation}
where $D_i$, $i=1,2,3$ are non-zero matrices. Each of zero square matrices on the diagonal represents a color class, or a maximal independent set of this graph. The permutation matrix $P$ in this case is,
\begin{equation}
P=\left( \begin{array}{ccccc}
0&0&1&0&0\\
0&0&0&1&0\\
1&0&0&0&0\\
0&1&0&0&0\\
0&0&0&0&1\\
\end{array}\right).
\end{equation}
\end{ex}

Now, we want to take the probability distribution into account. To do this, we repeat each vertex $x_1^i$ in the adjacency matrix, $n_i$ times, such that $\frac{p(x_1^i)}{p(x_1^j)}=\frac{n_i}{n_j}$, for any valid $i$ and $j$. We call the achieved matrix, the weighted adjacency matrix and denote it by $A_w$. The above argument about the permutation remains the same. Any valid coloring can be represented by a permutation matrix $P$ such that $PA_wP^t$ has some zero square matrices on its diagonal. Since we represent the probability distribution of each vertex as its number of repetitions in $A_w$, the proportional sizes of zero square matrices on the diagonal of $PA_wP^t$ represent the corresponding probability of that color class. In other words, a color class of a larger zero square matrix has more probability than a color class with a smaller zero square matrix.    

Now, one can heuristically use Huffman coding technique to find a coloring (or its corresponding permutation matrix) to minimize the entropy.  To do this, we first find a permutation matrix which leads to the largest zero square matrix on the diagonal of $PA_wP^t$. Then, we assign a color to that class, and eliminate its corresponding rows and columns. We repeat this algorithm till all vertices are assigned to some colors. One can see that, finding the largest zero square matrix on the diagonal of $PA_wP^t$ is equivalent to finding the maximum independent set of a graph. Note that, it is a heuristic algorithm, and does not necessarily reach to the minimum entropy coloring. The other point is that, here, we have assumed that the probability distribution of $X_1$ is known. If we do not know this probability distribution, one can use an empirical distribution, instead of the actual distribution. In that case, using a Lempel-Ziv coding notion instead of Huffman coding leads to a similar algorithm.

\section{Feedback in Functional Compression}\label{chap:feedback}

In this section, we investigate the effect of having feedback on the rate-region of the functional compression problem. If the function at the receiver is the identity function, this problem is Slepian-Wolf compression with feedback. For this case, having feedback does not improve rate bounds. For example, reference \cite{mayak} considers both zero-error and asymptotically zero-error Slepian-Wolf compression with feedback. However, for a general desired function at the receiver, having feedback may improve rate bounds of the case without feedback.

\subsection{Main Results}
Consider a distributed functional compression problem with two sources and a receiver depicted in Figure \ref{fig:net-feedback}-a. This network does not have feedback. In Section \ref{chap:tree}, we derived a rate-region for this network. In this section, we consider the effect of having feedback on the rate-region of the network. For simplicity, we consider a simple distributed network topology with two sources. However, one can extend all discussions to more general networks of the type considered in Sections \ref{chap:tree} and \ref{chap:multi}.

Consider the network shown in Figure \ref{fig:net-feedback}-b. If the desired function at the receiver is the identity function, this problem is Slepian-Wolf compression with feedback. For this case, having feedback does not change the rate region (\cite{sw73} and \cite{mayak}). 
However, when we have a general function at the receiver, by having feedback, one may improve the rate bounds of Theorem \ref{th:doshi}.

 \begin{figure}[t]
	\centering
    \includegraphics[width=9cm,height=5cm]{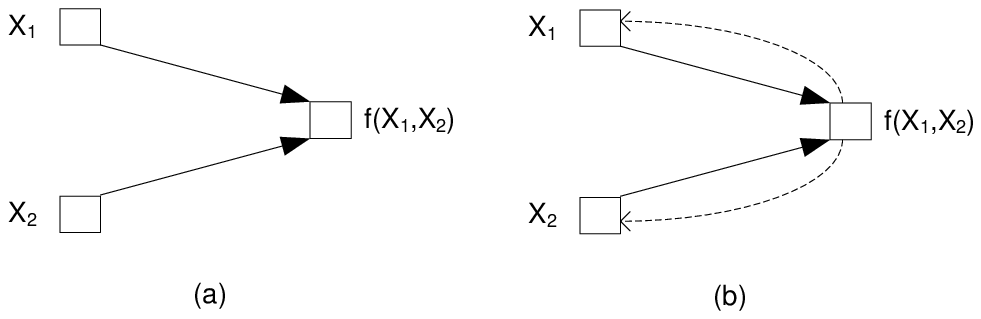}
    \caption{A distributed functional compression network a) without feedback b) with feedback.}
    \label{fig:net-feedback}
  \end{figure}

\begin{thm}
Having feedback may improve rate bounds of Theorem \ref{th:doshi}.  
\end{thm}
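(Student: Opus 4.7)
The plan is to exhibit a single explicit instance of sources and a function for which a feedback-aided scheme beats the rate region of Theorem \ref{th:doshi}; one such example suffices to justify the word ``may'' in the statement. I would take $X_1,X_2$ to be independent and uniformly distributed on $\{0,1\}$, and let the receiver wish to compute $f(X_1,X_2)=X_1\wedge X_2$.

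First I would evaluate the quantities appearing on the right-hand side of Theorem \ref{th:doshi} for this instance. Since $f(1,0)\neq f(1,1)$ and $f(0,1)\neq f(1,1)$, both $\gxo$ and $\gxt$ are complete on $\{0,1\}$, and the power-graph construction then shows that $\gxon$ and $\gxtn$ are complete on $\{0,1\}^n$ for every $n$. Hence any valid $\epsilon$-coloring is essentially injective, and Definitions \ref{def:joint} and \ref{def:cond} yield $H_{\gxo}(X_1|X_2)=H_{\gxt}(X_2|X_1)=1$ and $H_{\sgjt}(X_1,X_2)=H(X_1,X_2)=2$. Theorem \ref{th:doshi} therefore forces $R_{11}+R_{12}\geq 2$ for every feedback-free coding scheme on this instance.

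Second I would design a two-phase feedback scheme whose sum-rate is strictly smaller. In phase one, source $1$ transmits $\bX_1$ uncoded at rate $R_{11}=1$. The feedback link then delivers $\bX_1$ to source $2$, which computes the index set $\mathcal{I}=\{i:X_{1i}=1\}$ and, in phase two, transmits only the subsequence $(X_{2i})_{i\in\mathcal{I}}$; on the remaining indices we have $X_{1i}=0$ and therefore $f(X_{1i},X_{2i})=0$ regardless of $X_{2i}$, so nothing need be communicated about those positions. Because $|\mathcal{I}|/n\to 1/2$ in probability, for any $\delta>0$ source $2$ can pad its output up to a fixed length, thereby operating at rate $R_{12}=1/2+\delta$, while declaring an error on the vanishing-probability event $\{|\mathcal{I}|>n(1/2+\delta)\}$. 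The decoder reads $\bX_1$ off of the phase-one message, decodes the phase-two message to recover $(X_{2i})_{i\in\mathcal{I}}$, and assembles $f(\bX_1,\bX_2)$. The sum-rate is $3/2+\delta<2$, which contradicts the feedback-free bound.

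The step I expect to require the most care is reconciling the variable-length behaviour of source $2$'s transmission with the fixed-block encoding framework of Section \ref{subsec:problemsetup}, together with a formal confirmation that source $2$ is permitted to encode as a function of both its own source block and the feedback signal. Once the padding-plus-atypical-error-event argument is spelled out and the feedback-dependent encoder is admitted, the strict gap between $3/2+\delta$ and $2$ delivers the theorem.
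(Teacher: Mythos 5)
Your proposal is correct, but it proves the theorem by a genuinely different route than the paper. The paper argues abstractly: it compares the minimum-entropy joint coloring $\cjmin$ (unconstrained) with the minimum-entropy joint coloring $\cjsmin$ satisfying C.C.C., posits the case where these differ, and builds a switching scheme in which each source sends one signaling bit, the receiver echoes these bits over the feedback links, and the sources then choose between the two coloring schemes; the resulting sum-rate gain is $\frac{1}{n}(1-P_a)[H(\cjsmin)-H(\cjmin)]$. Your argument instead exhibits one concrete instance ($X_1,X_2$ i.i.d.\ uniform bits, $f=X_1\wedge X_2$), verifies that both characteristic graphs and all their powers are complete so that Theorem \ref{th:doshi} forces $R_{11}+R_{12}\geq 2$ without feedback, and then beats this with an explicit two-phase interactive scheme achieving sum rate $3/2+\delta$. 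Since the theorem only asserts that feedback \emph{may} improve the bounds, a single worked example suffices, and yours is arguably more self-contained: the paper's proof never actually exhibits a source/function pair for which $S_{\cjmin}\cap S_{\cjsmin}=\emptyset$, whereas your example is fully verified end to end. What the paper's approach buys is generality (a mechanism and a quantified gain for any instance where C.C.C.\ is binding) and frugality of feedback: its scheme uses only a constant number of feedback bits per block, which it can legitimately ignore in the rate computation, whereas your scheme ships the entire block $\bX_1$ back to source $2$, i.e., uses feedback at rate $1$. The model in Section \ref{chap:feedback} places no rate constraint on the feedback links and the theorem concerns only the forward rates $R_{11},R_{12}$, so this is admissible, but you should state explicitly that feedback is assumed free, since that assumption is doing real work in your construction in a way it does not in the paper's.
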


\begin{proof}
Consider a network without feedback depicted in Figure \ref{fig:net-feedback}-a. In Section \ref{chap:tree}, we showed an achievable scheme where sources send their minimum entropy colorings of high probability subgraphs of their characteristic graphs satisfying C.C.C., followed by Slepian-Wolf compression. This scheme performs arbitrarily closely to rate bounds derived in Theorem \ref{th:doshi}. Now, we seek to show that, in some cases, by having feedback, one can outperform these bounds. Consider source random variables $X_1$ and $X_2$ with characteristic graphs $\gxo$ and $\gxt$, respectively. Suppose $S_{\cjmin}$ and $S_{\cjsmin}$ are two sets of joint colorings of source random variables defined as follows,

\begin{eqnarray}
S_{\cjmin}= \qquad\arg\!\!\!\!\!\!\!\!\!\!\!\!\!\!\!\!\!\!\!\!\!\min_{\substack{(\cxon,\cxtn)\in\Cxon\times\Cxtn}} \frac{1}{n} H(\cxon,\cxtn)\nonumber\\
S_{\cjsmin}=\qquad\arg\!\!\!\!\!\!\!\!\!\!\!\!\!\!\!\!\!\!\!\!\!\min_{\substack{(\cxon,\cxtn)\in\Cxon\times\Cxtn\\
\textrm{ satisfying C.C.C.}}} \frac{1}{n} H(\cxon,\cxtn).
\end{eqnarray}

Now, consider the case when $S_{\cjmin}\cap S_{\cjsmin}=\emptyset$, i.e., suppose any $\cjmin\in S_{\cjmin}$ does not satisfy C.C.C. Thus, C.C.C. restricts the link sum rates of any achievable scheme, because $H(\cjmin)<H(\cjsmin)$ for any $\cjmin\in S_{\cjmin}$ and $\cjsmin\in S_{\cjsmin}$.

Choose any two joint colorings $\cjmin\in S_{\cjmin}$ and $\cjsmin\in S_{\cjsmin}$. Suppose set $A$ contains all points $(\bx_1,\bx_2)$ such that their corresponding colors in the joint-coloring class of $\cjmin$ do not satisfy C.C.C. Now, we propose a coding scheme with feedback which can outperform rate bounds of the case without having feedback. If sources know whether or not they have some sequences in $A$, they can switch between $\cjmin$ and $\cjsmin$ in their coding scheme with feedback. Since $H(\cjmin)<H(\cjsmin)$, this approach outperforms the one without feedback in terms of rates. In the following, we present a possible feedback scheme.

Before sending each sequence, sources first check if their sequences belong to $A$ or not. Say $A_{X_1}$ is the set of all $\bx_1$ such that there exists a $\bx_2$ such that $(\bx_1,\bx_2)\in A$. $A_{X_2}$ is defined similarly. One can see that $A\subseteq A_{X_1}\times A_{X_2}$. So, instead of checking if a sequence is in $A$ or not, by exchanging some information, sources check if the sequence belongs to $A_{X_1}\times A_{X_2}$ or not. In order to do this, source $X_1$ sends a one to the receiver when $\bx_1\in A_{X_1}$. Otherwise, it sends a zero. Source $X_2$ uses a similar scheme. The receiver exchanges these bits using feedback channels. When a source sends a one, and receives a one from its feedback channel, it uses $\cjsmin$ as its joint coloring. Otherwise, it uses $\cjmin$ in its coding scheme. Depending on which joint coloring scheme has been used by sources, the receiver uses a corresponding look-up table to compute the desired function. Hence, this scheme is achievable. An example of this scheme is depicted in Figure \ref{fig:scheme-feedback}.

\begin{figure}[t]
	\centering
    \includegraphics[width=9cm,height=5cm]{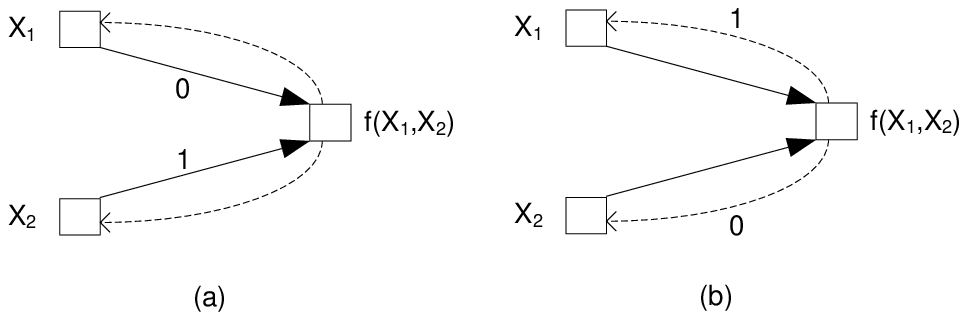}
    \caption{An example of the proposed feedback scheme. a) Since $\bx_1\notin A_{X_1}$, source $X_1$ sends 0. Since $\bx_2\in A_{X_2}$, source $X_2$ sends 1. b) The receiver forward signaling bits to the sources. Then, sources can use the coloring scheme $\cjmin$.}
    \label{fig:scheme-feedback}
  \end{figure}

Since the length of sequences is arbitrarily large, one can ignore these four extra signaling bits in rate computation. If we did not have feedback, according to Theorem \ref{th:doshi},
\begin{equation}\label{eq:without-feedback}
R_{11}+R_{12}\geq \frac{1}{n} H(\cjsmin).
\end{equation}     

Say $P_a=Pr[(\bx_1,\bx_2)\in A_{X_1}\times A_{X_2}]$. Thus, for the proposed coding scheme with feedback, we have,
\begin{equation}\label{eq:with-feedback}
R_{11}^{f}+R_{12}^{f}\geq \frac{1}{n}[P_a H(\cjsmin)+ (1-P_a) H(\cjmin)]
\end{equation}
where $R_{1i}^f$ is the transmission rate of source $i$ with feedback. Thus,
\begin{equation}\label{eq:gain}
[R_{11}^{f}+R_{12}^{f}]-[R_{11}+R_{12}]\geq \frac{1}{n} (1-P_a) [H(\cjmin)- H(\cjsmin)].
\end{equation}

The right-hand side of (\ref{eq:gain}) represents a gain in link sum rates by having feedback. When, $P_a\neq 1$ and $\cjmin\neq\cjsmin$, this is strictly positive, which means the proposed coding scheme with feedback outperforms the one without having feedback in terms of rate bounds. For the identity function at the receiver, $\cjmin=\cjsmin$, and the proposed coding scheme with feedback does not improve rate bounds. Note that, for the identity function at the receiver, Slepian-Wolf compression can perform arbitrarily closely to min-cut max-flow bounds.
\end{proof}

Note that, in a general network, for cases where the minimum entropy colorings of sources satisfy C.C.C., it is not known whether or not feedback can improve rate bounds.

\section{A Rate-Distortion Region for Distributed Functional Compression}\label{chap:distortion}

In this section, we consider the problem of distributed functional compression with distortion. The objective is to compress correlated discrete sources so that an arbitrary deterministic function of those sources can be computed up to a distortion level at the receiver. In this section, we derive a rate-distortion region for a network with two transmitters and a receiver. All discussions can be extended to more general networks considered in Sections \ref{chap:tree} and \ref{chap:multi}. 

A recent result is presented in \cite{doshi-it} which computes a rate-distortion region for the side information problem. The result in \cite{doshi-it} gives a characterization of Yamamoto's rate distortion function \cite{yam} in terms of a reconstruction function. Here, we extend these results to the distributed functional compression problem. In this case, we compute a rate-distortion region and then, propose a practical coding scheme with a non-trivial performance guarantee. Note that this proposed characterization is not a single letter characterization.

\subsection{Problem Setup} \label{subsec:problemsetup}
Consider two sources as described in Section \ref{subsec:problemsetup}. The receiver wants to compute a deterministic function $f:\cX_1\times\cX_2\to\cZ$ or $f:\cX_1^{n}\times\cX_2^{n}\to\cZ^{n}$, its vector extension up to distortion $D$ with respect to a given distortion function $d: \cZ\times\cZ\to[0,\infty)$. A vector extension of the distortion function is defined as follows:

\begin{equation}
d(\bz_{1},\bz_{2})=\frac{1}{n}\sum_{i=1}^{n} d(z_{1i},z_{2i}),
\end{equation}

where $\bz_{1},\bz_{2}\in\cZ^{n}$. As in \cite{wz76}, we assume that $d(z_{1},z_{2})=0$ if and only if $z_{1}=z_{2}$. This assumption causes vector extension to satisfy the same property (i.e., $d(\bz_{1},\bz_{2})=0$ if and only if $\bz_{1}=\bz_{2}$). 

Consider the network depicted in Figure \ref{fig:3part}-b. The sources encode their data at rates $R_{11}$ and $R_{12}$ by using encoders $en_{X_1}$ and $en_{X_2}$, respectively . The receiver decodes the received data by using decoder $r$. Hence, we have:

\begin{eqnarray}
en_{X_1}:\cX_1^{n}\to\{1,...,2^{nR_{11}}\} \nonumber\\
en_{X_2}:\cX_2^{n}\to\{1,...,2^{nR_{12}}\} \nonumber
\end{eqnarray}
 and a decoder maps,
\begin{equation}
r:\{1,...,2^{nR_{11}}\}\times\{1,...,2^{nR_{12}}\}\to\cZ^{n}. \nonumber
\end{equation}

The probability of error is
\begin{align*}
P_{e}^{n} = \Pr[\{(\bx_1,\bx_2):d(f(\bx_1,\bx_2),r(en_{X_1}(\bx_1),en_{X_2}(\bx_2)))> D\}].
\end{align*}

We say a rate pair $(R_{11},R_{12})$ is achievable up to distortion $D$ if there exist $en_{X_1}$, $en_{X_2}$ and $r$ such that $P_{e}^{n}\to 0$ when $n\to\infty$.

Our aim is to find feasible rates for different links of the network shown in Figure \ref{fig:3part}-b when the receiver wants to compute $f(X_1,X_2)$ up to distortion $D$.

\begin{table}
\begin{center}
	\caption{Research progress on nonzero-distortion source coding problems}\label{tab}
  \begin{tabular}{ccc}
\toprule
\textbf{Problem types} & $f(X_1,X_2)=(X_1,X_2)$ & General $f(X_1,X_2)$ \\
\toprule
\multirow{3}{*}{\textbf{Side information}} &
 & Feng et al. \cite{fes04}\\ 
&Wyner and Ziv \cite{wz76} & Yamamoto \cite{yam} \\ 
& & Doshi et al. \cite{doshi-it} \\
\midrule
\multirow{4}{*}{\textbf{Distributed}} & 
Coleman et al. \cite{coleman} & \\
&Berger and Yeung \cite{by89} & * \\
& Barros and Servetto \cite{bs03} & \\
& Wagner et al. \cite{wtv06} & \\
\bottomrule
\end{tabular}
\end{center}
\end{table}

\subsection{Prior Results} \label{subsec:previous}
In this part, we overview prior relevant work. Consider the network shown in Figure \ref{fig:3part}-a. For this network, in \cite{yam}, Yamamoto gives a characterization of a rate-distortion function for the side information functional compression problem (i.e., $X_2$ is available at the receiver). The rate-distortion function proposed in \cite{yam} is a generalization of the Wyner-Ziv side-information rate-distortion function \cite{wz76}. Specifically, Yamamoto gives the rate distortion function as follows:

\begin{thm}\label{thm:yam}
The rate distortion function for the functional compression problem with side information is
\[R(D)=\min_{p\in\cP(D)}I(W_1;X_1|X_2)\]
where $\cP(D)$ is the collection of all distributions on $W_1$ given $X_1$ such that
there exists a $g:\cW_1\times\cX_2\to\cZ$ satisfying $E[d(f(\bX_1,\bX_2),g(\bW_1,\bX_2))]\leq D$.
\end{thm}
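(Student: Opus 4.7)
The plan is to follow the standard Wyner--Ziv coding template, adapted so that the decoder reconstructs $f(X_1,X_2)$ rather than $X_1$ itself. Achievability will be a random binning argument with an auxiliary random variable $W_1$ satisfying the Markov chain $W_1-X_1-X_2$; the converse will identify a valid auxiliary in each coordinate using Fano's inequality plus standard chain-rule manipulations.

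For achievability I would fix any $p(w_1\mid x_1)$ and deterministic map $g:\cW_1\times\cX_2\to\cZ$ with $E[d(f(X_1,X_2),g(W_1,X_2))]\leq D-\delta$ for some slack $\delta>0$. Generate $2^{n(I(W_1;X_1)+\epsilon)}$ i.i.d.\ codewords $\bW_1$ from $p(w_1)$ and throw them uniformly at random into $2^{n(I(W_1;X_1\mid X_2)+2\epsilon)}$ bins. Encoder $en_{X_1}$ searches for a codeword $\bw_1$ jointly typical with $\bX_1$; by the covering lemma this succeeds w.h.p. It transmits only the bin index, giving rate $I(W_1;X_1\mid X_2)+2\epsilon$. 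The decoder, holding $\bX_2$, looks in the received bin for the unique codeword jointly typical with $\bX_2$; Wyner--Ziv's packing lemma shows this succeeds w.h.p. It then outputs $\hat{\bZ}=g(\bW_1,\bX_2)$ coordinatewise. A typical-average-lemma / AEP argument gives $\tfrac{1}{n}\sum d(f(X_{1i},X_{2i}),g(W_{1i},X_{2i}))\to E[d(f,g)]\leq D-\delta$ in probability, so $P_e^n\to 0$.

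For the converse, suppose a sequence of $(n,2^{nR})$ codes achieves distortion $D$. Let $T=en_{X_1}(\bX_1)$. Then
\begin{align*}
nR \;\geq\; H(T) \;\geq\; H(T\mid \bX_2) \;=\; I(T;\bX_1\mid \bX_2)
\;=\; \sum_{i=1}^{n} I(T;X_{1i}\mid \bX_2, X_1^{i-1}).
\end{align*}
Define $W_{1i}\triangleq (T, X_1^{i-1}, X_2^{i-1}, X_{2,i+1}^{n})$. One checks the single-letter Markov chain $W_{1i}-X_{1i}-X_{2i}$ from the memorylessness of the source and the fact that $T$ is a function of $\bX_1$. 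Bounding each mutual information below by $I(W_{1i};X_{1i}\mid X_{2i})$ (using that conditioning on the independent coordinates of $\bX_2$ does not change the source statistics), and introducing a uniform time-sharing index $Q$ with $W_1=(W_{1Q},Q)$, $X_1=X_{1Q}$, $X_2=X_{2Q}$, yields $R\geq I(W_1;X_1\mid X_2)$. Finally, the decoder's output $\hat{Z}_i=g_i(T,\bX_2)$ is a function of $(W_{1i},X_{2i})$, so defining $g(w_1,x_2)=g_Q(T,\bX_2)_{|Q=q}$ gives a single-letter $g$ for which Fano-type bounds and the assumption $P_e^n\to 0$ force $E[d(f(X_1,X_2),g(W_1,X_2))]\leq D+o(1)$. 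Taking $n\to\infty$ places the constructed $W_1$ in $\cP(D)$, completing the converse.

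The main obstacle is the converse's identification of the auxiliary $W_{1i}$: one must include enough of the ``past'' and ``future'' of the side information $\bX_2$ to preserve the Markov chain $W_{1i}-X_{1i}-X_{2i}$ while still being compatible with the decoder's reconstruction map, and one must verify that the resulting single-letter $g$ is indeed deterministic and measurable with respect to $(W_1,X_2)$. Cardinality bounding of $\cW_1$ via Carath\'eodory (as in Wyner--Ziv) ensures the minimum in $\cP(D)$ is attained, so $\min$ rather than $\inf$ is justified.
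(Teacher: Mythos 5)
The paper itself contains no proof of this statement: Theorem \ref{thm:yam} is quoted from Yamamoto \cite{yam} in the ``Prior Results'' subsection as known background, so there is no internal argument to compare yours against. Your sketch follows the classical route --- Wyner--Ziv random binning for achievability with the reconstruction map $g$ replacing the identity, and a converse that identifies the auxiliary $W_{1i}=(T,X_1^{i-1},X_2^{i-1},X_{2,i+1}^{n})$ --- which is essentially how Yamamoto established the result; the key steps (the Markov chain $W_{1i}-X_{1i}-X_{2i}$ from memorylessness, the fact that $(W_{1i},X_{2i})$ determines the decoder's $i$-th output so the single-letter $g$ is deterministic, time sharing, and Carath\'eodory for the cardinality bound) are all correctly placed. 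Two points worth tightening: in achievability, concluding that the per-letter distortions average to $E[d(f,g)]$ requires the whole triple $(\bW_1,\bX_1,\bX_2)$ to be jointly typical, and since the encoder only matches $\bW_1$ to $\bX_1$ this needs the Markov lemma (strong typicality), not merely the covering and packing lemmas; and in the converse, ``Fano-type bounds'' is a misnomer --- what is actually used is that the excess-distortion probability vanishing, together with boundedness of $d$, forces $E[d]\leq D+o(1)$.
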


This is an extension of the Wyner-Ziv rate-distortion result \cite{wz76}. Further, the variable $W_1\in\Gamma(G_{X_1})$ in the definition of the Orlitsky-Roche rate, Definition \ref{def:or}, (a variable over the independent sets of $G_{X_1}$) can be seen as an  interpretation of Yamamoto's auxiliary variable, $W_1$, for the zero-distortion case.

A new characterization of the rate distortion function given by Yamamoto was discussed in \cite{doshi-it}. It was shown in \cite{doshi-it} that finding a suitable reconstruction function, $\fh$, is equivalent
to find $g$ on $\cW_1\times\cX_2$ from Theorem \ref{thm:yam}.
Let $\cF_m(D)$ denote the set of all functions $\fh_m:\cX_1^m\times\cX_2^m\to\cZ^m$ such that 
\[\lim_{n\to\infty} E[d(f(\bX_1,\bX_2),\fh_m(\bX_1,\bX_2))]\leq D, \]
and let $\cF(D)=\bigcup_{m\in {N}}\cF_m(D)$. Also, let $\gxofh$ denote the characteristic graph of $\bX_1$ with respect to $\bX_2$, $p(\bx_1,\bx_2)$, and $\fh$ for any $\fh\in\cF(D)$. For each $m$ and all functions $\fh\in\cF(D)$, denote
for brevity the normalized graph entropy $\frac{1}{m} H_{\gxofh}(\bX_1|\bX_2)$
as $H_{\gxofh}(X_1|X_2)$. The following theorem was given in \cite{doshi-it}:

\begin{thm}\label{thm:doshi-dist}
A rate distortion function for the network shown in Figure \ref{fig:3part}-a can be expressed as follows:
\[R(D)=\inf_{\fh\in\cF(D)}H_{\gxofh}(X_1|X_2).\]
\end{thm}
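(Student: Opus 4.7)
The plan is to establish the equality by achievability and converse, using the reconstruction-function viewpoint as the bridge between the distortion constraint and the zero-distortion graph-entropy machinery from Theorem~\ref{def:eq-cond}. The key idea is that a rate $R$ is achievable at distortion $D$ if and only if we can losslessly compute \emph{some} surrogate function $\fh \in \cF(D)$ at the receiver at rate $R$: once we replace $f$ by $\fh$, the problem reduces to an Orlitsky--Roche side-information problem for $\fh$, for which Theorem~\ref{def:eq-cond} gives the exact per-symbol rate $H_{G_{X_1,\fh}}(X_1|X_2)$.

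For achievability, fix any $m$ and any $\fh_m \in \cF_m(D)$. Applying Theorem~\ref{def:eq-cond} with $\fh_m$ in place of $f$ (treating $\bX_1 \in \cX_1^m$ and $\bX_2 \in \cX_2^m$ as the underlying source variables), I obtain a modularized coding scheme---color a sufficiently large power of the characteristic graph $G_{X_1,\fh_m}$ with a near-minimum-entropy coloring, then Slepian--Wolf compress the coloring conditioned on $\bX_2$---that reconstructs $\fh_m(\bX_1,\bX_2)$ with vanishing error at rate arbitrarily close to $\frac{1}{m} H_{G_{\bX_1,\fh_m}}(\bX_1|\bX_2) = H_{G_{X_1,\fh_m}}(X_1|X_2)$ per source symbol. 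Because $\fh_m \in \cF(D)$, the expected distortion $E[d(f,\fh_m)]$ is at most $D$ in the limit, and a routine union bound combining this with the vanishing reconstruction error of $\fh_m$ shows the achieved distortion is at most $D + o(1)$. Infimizing over $\fh \in \cF(D)$ yields $R(D) \leq \inf_{\fh \in \cF(D)} H_{G_{X_1,\fh}}(X_1|X_2)$.

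For the converse, take any sequence of achievable codes $(en_{X_1}^{(n)}, r^{(n)})$ operating at rate $R$ and distortion $D$. Define the induced block reconstruction function $\fh^{(n)}(\bx_1,\bx_2) := r^{(n)}(en_{X_1}^{(n)}(\bx_1),\bx_2)$; this is a deterministic map from $\cX_1^n \times \cX_2^n$ into $\cZ^n$. The distortion guarantee $E[d(f,\fh^{(n)})] \leq D + o(1)$ places $\fh^{(n)}$ (or a mild perturbation of it) into $\cF(D)$. Crucially, the decoder outputs $\fh^{(n)}(\bX_1,\bX_2)$ \emph{exactly} by construction---there is no error in computing $\fh^{(n)}$, only error between $f$ and $\fh^{(n)}$. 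Therefore the original code is a zero-error coding scheme for computing $\fh^{(n)}$ with side information $\bX_2$, and the converse portion of the Orlitsky--Roche / Theorem~\ref{def:eq-cond} result forces
\[ R \;\geq\; \tfrac{1}{n} H_{G_{\bX_1,\fh^{(n)}}}(\bX_1|\bX_2) \;=\; H_{G_{X_1,\fh^{(n)}}}(X_1|X_2) \;\geq\; \inf_{\fh \in \cF(D)} H_{G_{X_1,\fh}}(X_1|X_2). \]

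The principal obstacle is the careful separation of the two distinct notions of ``error'' in the converse: distortion is measured between the target function $f$ and the reconstruction $\fh^{(n)}$, whereas the graph-entropy converse requires the receiver to compute some function with vanishing (or zero) error. By identifying the decoder's output itself with the function $\fh^{(n)}$, the second error is zero by definition, and only the first error participates in the distortion constraint. A secondary technicality is that $\fh^{(n)} \in \cF(D)$ may hold only approximately (distortion $D + o(1)$); this is handled either by allowing $\fh^{(n)} \in \cF(D + \epsilon_n)$ and passing $\epsilon_n \to 0$, or by modifying $\fh^{(n)}$ on a vanishing-probability set so that it lies exactly in $\cF(D)$, using the finiteness of $\cZ$ and continuity of conditional graph entropy under small perturbations.
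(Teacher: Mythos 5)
The paper does not actually prove this theorem: it appears in the ``Prior Results'' subsection of Section \ref{chap:distortion} and is quoted as a result of Doshi et al.\ \cite{doshi-it}, so there is no in-paper proof to compare yours against. On its own terms, your argument is the standard reconstruction-function characterization and matches the route the paper attributes to \cite{doshi-it}: achievability by losslessly computing a surrogate $\fh\in\cF(D)$ via the coloring-plus-Slepian--Wolf scheme of Theorem \ref{def:eq-cond}, and a converse by identifying the decoder output $\fh^{(n)}(\bx_1,\bx_2)=r^{(n)}(en_{X_1}^{(n)}(\bx_1),\bx_2)$ with an element of $\cF(D)$ that the given code computes with \emph{zero} error. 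The separation you emphasize between the two notions of error is exactly the right pivot, and the $D+o(1)$ slack in both directions is handled acceptably by boundedness of $d$ on the finite alphabet $\cZ$.

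The one step that does not go through as written is the converse inequality $nR\geq H_{G_{\bX_1,\fh^{(n)}}}(\bX_1|\bX_2)$, which you attribute to ``the converse portion of Theorem~\ref{def:eq-cond}.'' That theorem is an asymptotic statement about large powers of a fixed characteristic graph, whereas here you have a \emph{single-blocklength} zero-error code for $\fh^{(n)}$ and may not take further powers (doing so would change the function). The correct chain is: because $\fh^{(n)}$ is by definition the decoder output, the encoder $en_{X_1}^{(n)}$ automatically induces a valid coloring of $G_{\bX_1,\fh^{(n)}}$ (two sequences joined by an edge are distinguished by some $\bx_2$ of positive probability, hence must receive distinct codewords); therefore $nR\geq H(en_{X_1}^{(n)}(\bX_1)\,|\,\bX_2)\geq H^{\chi}_{G_{\bX_1,\fh^{(n)}}}(\bX_1|\bX_2)\geq H_{G_{\bX_1,\fh^{(n)}}}(\bX_1|\bX_2)$, where the final inequality holds because the normalized conditional chromatic entropies of graph powers form a subadditive, monotonically decreasing sequence whose limit is the conditional graph entropy, so the first term of the sequence dominates the limit. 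With that bridge supplied, your converse closes and the theorem follows.
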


The problem of finding an appropriate function $\fh$ is equivalent to finding a new graph whose edges are a subset of the edges of the characteristic graph. A graph parameterization by $D$ was proposed in \cite{doshi-it} to look at a subset of $\cF(D)$. The resulting bound is not tight, but it provides a practical technique to tackle a very difficult problem.

Define the $D$-characteristic graph of $X_1$ with respect to $X_2$, $p(x_1,x_2)$, and $f(X_1,X_2)$, as having vertices $V=\cX_1$ and the pair $(x_1^1,x_1^2)$ is an edge if there exists some $x_2^1\in\cX_2$ such that $p(x_1^1,x_2^1)p(x_1^2,x_2^1)>0$ and $d(f(x_1^1,x_2^1),f(x_1^2,x_2^1))>D$.  We call this graph as $G_{X_1}(D)$. Because $d(z_1,z_2)=0$ if and only if $z_1=z_2$, the $0$-characteristic graph is the characteristic graph (i.e., $G_{X_1}(0)=G_{X_1}$). The following corollary was given in \cite{doshi-it}:

\begin{cor}\label{thm:cor}
The rate $H_{G_{X_1}(D)}(X_1|X_2)$ is achievable.
\end{cor}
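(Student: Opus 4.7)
The plan is to invoke Theorem \ref{thm:doshi-dist} by producing an explicit reconstruction function $\fh\in\cF(D)$ whose conditional graph entropy is at most $H_{G_{X_1}(D)}(X_1|X_2)$. The construction is driven by colorings of (powers of) the $D$-characteristic graph, in parallel with the modularized coloring plus Slepian--Wolf scheme used earlier in the excerpt.

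First, I would fix a large block length $n$ and take an $\epsilon$-coloring $c$ of $G_{X_1}(D)^n$ that attains (or approaches) the conditional chromatic entropy $H_{G_{X_1}(D)^n}^{\chi}(\bX_1|\bX_2)$. Using the coloring, I would define $\fh^{(n)}:\cX_1^n\times\cX_2^n\to\cZ^n$ as follows: for each color class $W$ (an independent set of $G_{X_1}(D)^n$) and each $\bx_2$, pick a representative $\bx_1^{*}(W,\bx_2)\in W$ with $p(\bx_1^{*},\bx_2)>0$ whenever such an element exists, and set $\fh^{(n)}(\bx_1,\bx_2)=f(\bx_1^{*}(c(\bx_1),\bx_2),\bx_2)$ on the typical set, extended arbitrarily elsewhere.

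Next, I would check the distortion requirement. For a jointly typical pair $(\bx_1,\bx_2)$ with $p(\bx_1,\bx_2)>0$, both $\bx_1$ and its representative $\bx_1^{*}$ lie in the same independent set of $G_{X_1}(D)^n$, so by the very definition of $G_{X_1}(D)$ (extended to $n$-blocks via the vector distortion), $d(f(\bx_1,\bx_2),f(\bx_1^{*},\bx_2))\le D$. Hence $E[d(f(\bX_1,\bX_2),\fh^{(n)}(\bX_1,\bX_2))]\le D + o(1)$, putting $\fh^{(n)}\in\cF(D)$ for all sufficiently large $n$. Moreover, because $c(\bx_1)=c(\bx_1')$ forces $\fh^{(n)}(\bx_1,\bx_2)=\fh^{(n)}(\bx_1',\bx_2)$ for all $\bx_2$, the coloring $c$ is itself a valid coloring of $G_{X_1,\fh^{(n)}}$, which yields $H_{G_{X_1,\fh^{(n)}}}^{\chi}(\bX_1|\bX_2)\le H(c(\bX_1)|\bX_2)=H_{G_{X_1}(D)^n}^{\chi}(\bX_1|\bX_2)$.

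Finally, applying Theorem \ref{thm:doshi-dist} gives that the rate $(1/n)H_{G_{X_1,\fh^{(n)}}}(\bX_1|\bX_2)$ is achievable, and the preceding inequality combined with Theorem \ref{def:eq-cond} applied to $G_{X_1}(D)$ shows that this rate tends to $H_{G_{X_1}(D)}(X_1|X_2)$ as $n\to\infty$, completing the argument. The main obstacle I anticipate is the atypical-set bookkeeping: ensuring that (i) within each color class there is actually a representative of positive joint probability for the relevant $\bx_2$, and (ii) the contribution of atypical pairs to the expected distortion vanishes; both are handled by a standard strong-typicality argument together with the assumption that $d$ is bounded on its effective support, but they need to be spelled out carefully to justify the transition from an $\epsilon$-coloring of $G_{X_1}(D)^n$ to a bona fide element of $\cF(D)$.
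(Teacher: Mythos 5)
Your argument is sound, but note that the paper itself offers no proof of Corollary \ref{thm:cor}: it is quoted as a prior result from \cite{doshi-it}, so the only in-paper point of comparison is the proof of the distributed analogue, Corollary \ref{thm:mycor}. Measured against that, your route is the natural side-information counterpart and is in one respect cleaner. In the distributed case the paper must work with $D/2$-characteristic graphs and assume $d$ is a metric, because both coordinates are quantized by colorings and the decoder's error must be split by the triangle inequality across the two sources. In your setting only $X_1$ is colored while $\bx_2$ is known exactly at the decoder, so the full distortion budget $D$ can be spent on the single substitution $\bx_1\mapsto\bx_1^{*}$, and no metric assumption is needed; non-adjacency in $G_{X_1}(D)^n$ applied coordinatewise, averaged by the vector distortion $\frac{1}{n}\sum_i d(\cdot,\cdot)$, gives the bound directly. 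Your detour through Theorem \ref{thm:doshi-dist} (exhibiting $\fh^{(n)}\in\cF(D)$ and observing that $c$ colors $G_{X_1,\fh^{(n)}}$) is valid but slightly heavier than necessary: one can argue operationally, as the paper does for Corollary \ref{thm:mycor}, that Slepian--Wolf coding of the coloring at rate $\frac{1}{n}H(c(\bX_1)|\bX_2)$ already lets the decoder output $f(\bx_1^{*},\bx_2)$ within distortion $D$, and then invoke Theorem \ref{def:eq-cond} to drive this rate to $H_{G_{X_1}(D)}(X_1|X_2)$. One detail you should make explicit in the bookkeeping you flag: since $c$ is only an $\epsilon$-coloring, same-colored vertices are non-adjacent only in the restricted graph $\Gh$ built from $\ph$ on $\cA$, so the representative $\bx_1^{*}(W,\bx_2)$ must be chosen with $\ph(\bx_1^{*},\bx_2)>0$ (not merely $p(\bx_1^{*},\bx_2)>0$), otherwise the coordinatewise distortion guarantee does not follow; the residual contribution of $\cA^{c}$ to the expected distortion is then controlled by the boundedness of $d$ on the finite alphabet $\cZ$.
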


\subsection{Main Results} \label{sec:main}
This section contains our contributions in this problem. Our aim is to find a rate-distortion region for the network shown in Figure \ref{fig:3part}-b. Recall the Yamamoto rate distortion function (Theorem \ref{thm:yam}) and Theorem \ref{thm:doshi-dist}. These theorems explain a rate distortion function for the side information problem. Now, we are considering the case when we have distributed functional compression. 

Again, for any $m$, let $\cF_m(D)$ denote the set of all functions $\fh_m:\cX_1^m\times\cX_2^m\to\cZ^m$ such that 
\[\lim_{n\to\infty} E[d(f(\bX_1,\bX_2),\fh_m(\bX_1,\bX_2))]\leq D. \]

In other words, we consider $n$ blocks of $m$-vectors; thus, the functions in the expectation above will be on $\cX_1^{mn}\times\cX_2^{mn}$. Let $\cF(D)=\bigcup_{m\in{N}}\cF_m(D)$. Let $\gxofh$ denote the characteristic graph of $\bX_1$ with respect to $\bX_2$, $p(\bx_1,\bx_2)$, and $\fh$ for any $\fh\in\cF(D)$ and $\gxtfh$ denote the characteristic graph of $\bX_2$ with respect to $\bX_1$, $p(\bx_1,\bx_2)$, and $\fh$ for any $\fh\in\cF(D)$. For each $m$ and all functions $\fh\in\cF(D)$, denote
for brevity normalized graph entropies $\frac{1}{m} H_{\gxofh}(\bX_1|\bX_2)$
as $H_{\gxofh}(X_1|X_2)$, $\frac{1}{m} H_{\gxtfh}(\bX_2|\bX_1)$
as $H_{\gxtfh}(X_2|X_1)$ and $\frac{1}{m} H_{\gxofh,\gxtfh}(\bX_1,\bX_2)$
as $H_{\gxofh,\gxtfh}(X_1,X_2)$.  

Now, for a specific function $\fh\in\cF(D)$, define $R_{\fh}(D)=(R_{11}^{\fh}(D),R_{12}^{\fh}(D))$ such that,
\begin{eqnarray}\label{eq:cond}
R_{11}^{\fh}&\geq& H_{\gxofh}(X_1|X_2) \\
R_{12}^{\fh}&\geq& H_{\gxtfh}(X_2|X_1) \nonumber\\
R_{11}^{\fh}+R_{12}^{\fh}&\geq& H_{\gxofh,\gxtfh}(X_1,X_2). \nonumber
\end{eqnarray}

\begin{thm}\label{thm:main}
A rate-distortion region for the network shown in Figure \ref{fig:3part}-b is determined by $\bigcup_{\fh\in\cF(D)} R_{\fh}(D)$. 
\end{thm}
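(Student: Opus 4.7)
The plan is to reduce Theorem~\ref{thm:main} to the lossless distributed rate region of Corollary~\ref{th:doshi} by quantizing $f$ to a surrogate function $\hat{f}\in\cF(D)$ that we \emph{do} want to compute losslessly, and then to show that any achievable lossy scheme itself implicitly defines such a surrogate. The two parts mirror the achievability/converse structure already used for the lossless case in Section~\ref{chap:tree}; the extra ingredient is the parameterization by $\hat{f}$.

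\textbf{Achievability.} Fix $\hat{f}\in\cF_m(D)$ for some $m$. Group the source sequences into super-symbols of length $m$ so that $(\bX_1,\bX_2)$ lives on $\cX_1^m\times\cX_2^m$ and $\hat{f}$ is a deterministic function on this product alphabet. The characteristic graphs of these super-symbols with respect to $\hat{f}$ are exactly $\gxofh$ and $\gxtfh$. Apply the lossless achievability proof of Corollary~\ref{th:doshi} (encode $\epsilon$-colorings of large power graphs satisfying C.C.C., then Slepian--Wolf those coloring random variables, and recover the function value at the receiver via the look-up rule of Lemma~\ref{lem:fhat}) to the \emph{function} $\hat{f}$. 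This produces, for any $\delta>0$, a coding scheme at rates within $\delta$ of
\[
R_{11}^{\hat{f}}\geq H_{\gxofh}(X_1|X_2),\;\; R_{12}^{\hat{f}}\geq H_{\gxtfh}(X_2|X_1),\;\; R_{11}^{\hat{f}}+R_{12}^{\hat{f}}\geq H_{\gxofh,\gxtfh}(X_1,X_2),
\]
whose decoded output equals $\hat{f}(\bX_1,\bX_2)$ with vanishing error probability. The expected distortion between $f(\bX_1,\bX_2)$ and the decoder output is then at most $\lim_n E[d(f,\hat{f})]+o(1)\le D+o(1)$ by the definition of $\cF_m(D)$, so every point of $R_{\hat{f}}(D)$ is achievable, and hence so is the union over $\hat{f}\in\cF(D)$.

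\textbf{Converse.} Suppose a rate pair $(R_{11},R_{12})$ is achievable up to distortion $D$, so there exist encoders $en_{X_1},en_{X_2}$ and a decoder $r$ of block length $n$ with $E[d(f(\bX_1,\bX_2),r(en_{X_1}(\bX_1),en_{X_2}(\bX_2)))]\le D+o(1)$. Define the induced reconstruction
\[
\hat{f}_n(\bx_1,\bx_2)\triangleq r\bigl(en_{X_1}(\bx_1),en_{X_2}(\bx_2)\bigr),
\]
so that $\hat{f}_n\in\cF_n(D)$. The triple $(en_{X_1},en_{X_2},r)$ reconstructs $\hat{f}_n$ with \emph{zero} distortion by construction, hence it is a zero-error distributed functional code for $\hat{f}_n$. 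Applying Lemma~\ref{lem:zero-error} (and its $\epsilon$-version Lemma~\ref{lem:sec}) to $\hat{f}_n$, the encoders $en_{X_1},en_{X_2}$ are valid colorings of the characteristic graphs $G_{\bX_1,\hat{f}_n}$ and $G_{\bX_2,\hat{f}_n}$ that satisfy C.C.C. Standard Slepian--Wolf converse bounds on these coloring random variables then force $(R_{11},R_{12})$ to satisfy the three inequalities in \eqref{eq:cond} with $\hat{f}=\hat{f}_n$, placing the pair inside $R_{\hat{f}_n}(D)\subseteq\bigcup_{\hat{f}\in\cF(D)}R_{\hat{f}}(D)$.

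\textbf{Main obstacle.} The delicate step is the converse: I need to argue that the induced $\hat{f}_n$ lies in $\cF(D)$ after passing to the $n\to\infty$ limit (this is handled by the definition of $\cF_m(D)$ with $m=n$) and that applying the lossless converse of Section~\ref{chap:tree} to $\hat{f}_n$ yields rate bounds in terms of the \emph{normalized} graph entropies $\tfrac{1}{n}H_{G_{\bX_1,\hat{f}_n}}(\bX_1|\bX_2)$ that agree with those appearing in $R_{\hat{f}_n}(D)$. This is precisely why $\cF(D)$ is taken to be the union over all block lengths $m$: the converse forces us to allow arbitrarily long surrogate functions, and the normalization in the definitions of $H_{\gxofh}(X_1|X_2)$, $H_{\gxtfh}(X_2|X_1)$, $H_{\gxofh,\gxtfh}(X_1,X_2)$ makes the achievability and converse bounds meet on the same multi-letter expression, yielding the (non-single-letter) characterization claimed.
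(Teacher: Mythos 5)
Your proposal is correct and follows essentially the same route as the paper: achievability by applying the lossless region of Corollary~\ref{th:doshi} to a fixed surrogate $\fh\in\cF(D)$ and invoking the definition of $\cF(D)$ for the distortion guarantee, and a converse that identifies the decoder output $r(en_{X_1}(\bX_1),en_{X_2}(\bX_2))$ with some $\fh\in\cF(D)$ and again appeals to the lossless converse. Your version merely spells out the intermediate steps (super-symbol blocking, Lemmas~\ref{lem:zero-error} and~\ref{lem:sec}) that the paper leaves implicit in its citation of Theorem~\ref{th:doshi}.
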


\begin{proof}
We want to show that $\bigcup_{\fh\in\cF(D)} R_{\fh}(D)$ determines a rate-distortion region for the considered network. We first show this rate-distortion region is achievable for any $\fh\in\cF(D)$, and then we prove every achievable rate region is a subregion of it (converse).   

According to Theorem \ref{th:doshi}, $R_{\fh}(D)$ is sufficient to determine the function $\fh(\bX_1,\bX_2)$ at the receiver. Also, by definition,
\[\lim_{n\to\infty} E[d(f(\bX_1,\bX_2),\fh(\bX_1,\bX_2))]\leq D. \]

Thus, for a specific $\fh\in\cF(D)$, $R_{\fh}(D)$ is achievable. Therefore, the union of these achievable regions for different $\fh\in\cF(D)$ \big(i.e., $\bigcup_{\fh\in\cF(D)} R_{\fh}(D)$\big) is also achievable.

Next, we show that any achievable rate region is a subregion of $\bigcup_{\fh\in\cF(D)} R_{\fh}(D)$. Assume that we have an achievable scheme in which source $1$ encodes its data to $en_{X_1}(\bX_1)$ and source $2$ encodes its data to $en_{X_2}(\bX_2)$. At the receiver, we compute $r(en_{X_1}(\bX_1),en_{X_2}(\bX_2))$. Since it is an achievable scheme up to distortion $D$, there exists $\fh\in\cF(D)$ such that $r(en_{X_1}(\bX_1),en_{X_2}(\bX_2))=\fh(\bX_1,\bX_2)$. Thus, considering Theorem \ref{th:doshi}, this achievable rate-distortion region is a subregion of $\bigcup_{\fh\in\cF(D)} R_{\fh}(D)$. This completes the proof.
\end{proof}

Next, we present a simple scheme which satisfies Theorem \ref{thm:main}. Again, the problem of finding an appropriate function $\fh$ is equivalent to finding a new graph whose edges are a subset of the edges of the characteristic graph of random variables. This motivates Corollary \ref{thm:mycor} where we use a similar graph parameterization by $D$. Our scheme is as follows:

Define the $D$-characteristic graph of $X_1$ with respect to $X_2$, $p(x_1,x_2)$, and $f(X_1,X_2)$, as having vertices $V=\cX_1$ and the pair $(x_1^1,x_1^2)$ is an edge if there exists some $x_2^1\in\cX_2$ such that $p(x_1^1,x_2^1)p(x_1^2,x_2^1)>0$ and $d(f(x_1^1,x_2^1),f(x_1^2,x_2^1))>D$.  Denote this graph as $G_{X_1}(D)$. Similarly, we define $G_{X_2}(D)$. Following Corollary \ref{thm:cor} and Theorem \ref{thm:main}, we have the following Corollary:

\begin{cor}\label{thm:mycor}
For independent sources, if the distortion function is a metric and $(R_{11},R_{12})$ satisfies the following conditions, then, $(R_{11},R_{12})$ is achievable.

\begin{eqnarray}\label{eq:cond2}
R_{11}&\geq& H_{G_{X_1}(D/2)}(X_1) \\
R_{12}&\geq& H_{G_{X_2}(D/2)}(X_2) \nonumber\\
R_{11}+R_{12}&\geq& H_{G_{X_1}(D/2),G_{X_2}(D/2)}(X_1,X_2). \nonumber
\end{eqnarray} 
\end{cor}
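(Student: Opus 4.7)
The plan is to deduce the corollary from Theorem~\ref{thm:main} by exhibiting, for every block length $m$, an explicit $\fh_m\in\cF_m(D)$ whose induced rate region $R_{\fh_m}(D)$ contains the set of tuples satisfying (\ref{eq:cond2}) up to an additive slack that vanishes as $m\to\infty$. First I would take an $\epsilon$-minimum-entropy coloring $C_1^m$ of the $m$-th power $(G_{X_1}(D/2))^m$, fix an arbitrary representative $\bx_1^{*}(c)\in\cX_1^m$ in each color class, and do the same for $X_2$ to obtain $C_2^m$ and $\bx_2^{*}$. Define
\[
\fh_m(\bx_1,\bx_2)\;\triangleq\;f\bigl(\bx_1^{*}(C_1^m(\bx_1)),\,\bx_2^{*}(C_2^m(\bx_2))\bigr).
\]
The structural fact driving everything is that two sequences assigned the same $C_1^m$-value share a maximal independent set of $(G_{X_1}(D/2))^m$, so by the definition of the power graph they must be coordinate-wise non-adjacent in $G_{X_1}(D/2)$, and symmetrically on the $X_2$ side.

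Next I would verify $\fh_m\in\cF_m(D)$ using the metric property of $d$. Writing $\bx_1^{*}=\bx_1^{*}(C_1^m(\bx_1))$ and $\bx_2^{*}=\bx_2^{*}(C_2^m(\bx_2))$, applying the triangle inequality coordinate by coordinate yields
\[
d\bigl(f(\bx_1,\bx_2),\fh_m(\bx_1,\bx_2)\bigr)\;\le\;d\bigl(f(\bx_1,\bx_2),f(\bx_1^{*},\bx_2)\bigr)+d\bigl(f(\bx_1^{*},\bx_2),f(\bx_1^{*},\bx_2^{*})\bigr).
\]
For each $i$, the pair $(x_{1i},x_{1i}^{*})$ is a non-edge of $G_{X_1}(D/2)$, so the defining property of the $D/2$-characteristic graph forces $d(f(x_{1i},x_{2i}),f(x_{1i}^{*},x_{2i}))\le D/2$; averaging across coordinates preserves the bound, and the second summand is handled symmetrically.

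I would then bound the components of $R_{\fh_m}(D)$ by coloring entropies. By construction $C_1^m$ is a valid coloring of the induced characteristic graph $G_{\bX_1,\fh_m}$ (identical colors force identical $\fh_m$-outputs for every $\bx_2$), so extending color classes to maximal independent sets of $G_{\bX_1,\fh_m}$ and applying data processing yields $H_{G_{\bX_1,\fh_m}}(\bX_1)\le H(C_1^m)$; independence of the sources collapses the conditional graph entropy to the unconditional one. Theorem~\ref{def:eq-uncond} then gives $\tfrac1m H(C_1^m)\to H_{G_{X_1}(D/2)}(X_1)$, with the analogous statement for $X_2$. Because $C_1^m$ and $C_2^m$ are functions of independent sources, every joint-coloring class is a positive-probability rectangle in $\cX_1^m\times\cX_2^m$, so C.C.C.\ holds via length-two paths; consequently $\tfrac1m H(C_1^m,C_2^m)=\tfrac1m[H(C_1^m)+H(C_2^m)]$ converges to the sum $H_{G_{X_1}(D/2)}(X_1)+H_{G_{X_2}(D/2)}(X_2)$, which matches $H_{G_{X_1}(D/2),G_{X_2}(D/2)}(X_1,X_2)$ in Definition~\ref{def:joint} because the joint minimization there decouples over independent sources. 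Theorem~\ref{thm:main} applied with $\fh_m$ therefore certifies $R_{\fh_m}(D)$ as achievable, and letting $m\to\infty$ places every tuple strictly satisfying (\ref{eq:cond2}) inside $R_{\fh_m}(D)$; closure of the achievable region finishes the boundary case.

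The main obstacle is the sum-rate comparison in the third paragraph: since graph entropy does not obey a chain rule (as the paper emphasizes), identifying $H_{G_{X_1}(D/2),G_{X_2}(D/2)}(X_1,X_2)$ with the sum of the two individual graph entropies requires explicit use of independence both to certify C.C.C.\ for arbitrary product colorings and to decouple the joint minimization. A secondary subtlety is in the distortion step: the edge condition of $G_{X_1}(D/2)$ only constrains $d(f(x_1,x_2),f(x_1^{*},x_2))$ at $x_2$'s with positive joint probability, so one must choose representatives $\bx_1^{*},\bx_2^{*}$ within typical sets and absorb the atypical contribution into the $\epsilon$-margin of the coloring.
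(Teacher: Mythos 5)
Your proposal is correct and follows essentially the same route as the paper: both send colorings of the $D/2$-characteristic graphs and use the metric's triangle inequality to split the distortion as $D/2+D/2=D$ when two source pairs receive the same color pair. Your write-up is more explicit than the paper's (which simply invokes Theorem~\ref{th:doshi} on the $D/2$-characteristic graphs), in that you construct the reconstruction function $\fh_m$ from color-class representatives, verify $\fh_m\in\cF_m(D)$, and justify the decoupling of the sum-rate bound under independence — details the paper leaves implicit but which do not change the underlying argument.
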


\begin{proof}
From Theorem \ref{th:doshi}, by sending colorings of high probability subgraphs of sources's $D/2$-characteristic graphs satisfying C.C.C., one can achieve the rate region described in (\ref{eq:cond2}). For simplicity, we assume the power of the graphs is one. Extensions to an arbitrary power are analogous. Suppose the receiver gets two colors from sources (say $c_1$ from source $1$, and $c_2$ from source $2$). To show that the receiver is able to compute its desired function up to distortion level $D$, we need to show that for every $(x_1^1,x_2^1)$ and $(x_1^2,x_2^2)$ such that $C_{G_{X_1}(D/2)}(x_1^1)=C_{G_{X_1}(D/2)}(x_1^2)$ and $C_{G_{X_2}(D/2)}(x_2^1)=C_{G_{X_2}(D/2)}(x_2^2)$, we have $d(f(x_1^1,x_2^1),f(x_1^2,x_2^2))\leq D$. Since the distortion function $d$ is a metric, we have,
\begin{eqnarray}
d(f(x_1^1,x_2^1),f(x_1^2,x_2^2))&\leq& d(f(x_1^1,x_2^1),f(x_1^2,x_2^1))+ d(f(x_1^2,x_2^1),f(x_1^2,x_2^2))\nonumber\\
&\leq& D/2+D/2=D. 
\end{eqnarray}
This completes the proof.
\end{proof}

\section{Conclusions and Future Work}\label{chap:conc}

In this paper, we considered different aspects of the functional compression problem where computing a function (or, some functions) of sources is desired at the receiver(s). The rate region of this problem has been considered in the literature under certain restrictive assumptions, particularly in terms of the network topology, the functions and the characteristics of the sources. In this paper, we significantly relaxed these assumptions. In Section \ref{chap:tree} of this paper, we considered this problem for an arbitrary tree network and asymptotically lossless computation and derived rate lower bounds. We showed that, for one stage tree networks with correlated sources, or for general trees with independent sources, these lower bounds are tight. For these cases, we proposed a modularized coding scheme based on graph colorings and Slepian-Wolf compression which performs arbitrarily closely to rate lower bounds. Optimal computations that should be performed at intermediate nodes are derived, for a general tree network with independent sources. We showed that, for a family of functions and random variables called chain rule proper sets, computation at intermediate nodes is not necessary. We also introduced a new condition on colorings of source random variables' characteristic graphs called the coloring connectivity condition (C.C.C.) and showed that, unlike the condition mentioned in Doshi \textit{et al.}, this condition is necessary and sufficient for any achievable coding scheme based on colorings. We also showed that, unlike entropy, graph entropy does not satisfy the chain rule.

The problem of having different desired functions with side information at the receiver was considered in Section \ref{chap:multi}. For this problem, we defined a new
concept named multi-functional graph entropy, an
extension of graph entropy defined by K\"orner to the multi-functional case. We showed that, the minimum achievable rate for this problem with side information is equal to conditional multi-functional graph entropy of the source random variable given the side information. We also proposed a coding scheme based on graph colorings to achieve this rate. 

In Section \ref{chap:feedback}, we investigated the effect of having feedback on the rate region of the functional compression problem. If the function at the receiver is the identity function, this problem reduces to the Slepian-Wolf compression with feedback, for which having feedback does not increase the rate. However, we showed that, in general, feedback can improve rate bounds.     

The problem of distributed functional compression with distortion was investigated in Section \ref{chap:distortion}. The objective is to compress correlated discrete sources so that an arbitrary deterministic function of those sources can be computed within a distortion level at the receiver. In this case, we computed a rate-distortion region for this problem which is not a single letter characterization. Then, we proposed a simple suboptimal coding scheme with a non-trivial performance guarantee. 

In these proposed coding schemes, one needs to compute the minimum entropy coloring of a characteristic graph. In general, finding this coloring is an NP-hard problem (\cite{np1}). However, in Section \ref{chap:min-coloring}, we showed that depending on the characteristic graph's structure, there are certain cases where finding the minimum entropy coloring is not NP-hard, but tractable and practical. In one of these cases, we showed that, by having a non-zero joint probability condition on random variables' distributions, for any desired function, finding the minimum entropy coloring can be solved in polynomial time. In another case, we showed that, if the desired function is a type of quantization functions, this problem is also tractable.

For possible future work, one may consider a general network topology rather than tree networks. For instance, one can consider a general multi-source multicast network in which receivers desire to have a deterministic function of source random variables. For the case of having the identity function at the receivers, this problem is well-studied in \cite{ahl2000}, \cite{medard2003} and \cite{random} under the name of network coding for multi-source multicast networks. Reference \cite{random} shows that, random linear network coding can perform arbitrarily closely to min-cut max-flow bounds. To have an achievable scheme for the functional version of this problem, one may perform random network coding on coloring random variables satisfying C.C.C. If receivers desire different functions, one can use colorings of multi-functional characteristic graphs satisfying C.C.C., and then use random network coding for these coloring random variables. This achievable scheme can be extended to disjoint multicast and disjoint multicast plus multicast cases described in \cite{medard2003}. This scheme is an achievable scheme; however it is not optimal in general. If sources are independent, one may use encoding/decoding functions derived for tree networks at intermediate nodes, along with network coding.     

Throughout this paper, we considered asymptotically lossless or lossy computation of a function. For possible future work, one may consider this problem for the zero-error computation of a function which leads to a communication complexity problem. One can use tools and schemes we have developed in this paper to attain some achievable schemes in the zero error computation case.

\begin{singlespace}
\bibliography{main}
\bibliographystyle{plain}
\end{singlespace}

\end{document}